\newcommand{\arxiv}[1]{\href{http://arxiv.org/abs/#1}{arXiv:#1}}
\newcommand*{\mailto}[1]{\href{mailto:#1}{\nolinkurl{#1}}}
\newtheorem{theorem}{Theorem}[section]
\newtheorem{lemma}[theorem]{Lemma}
\newtheorem{corollary}[theorem]{Corollary}
\newtheorem{remark}[theorem]{Remark}
\newtheorem{hypothesis}[theorem]{Hypothesis {\bf H.}\hspace*{-0.6ex}}
\newcommand{\R}{{\mathbb R}}
\newcommand{\N}{{\mathbb N}}
\newcommand{\Z}{{\mathbb Z}}
\newcommand{\C}{{\mathbb C}}
\newcommand{\M}{{\mathbb M}}
\newcommand{\nn}{\nonumber}
\newcommand{\be}{\begin{equation}}
\newcommand{\ee}{\end{equation}}
\newcommand{\bea}{\begin{eqnarray}}
\newcommand{\eea}{\end{eqnarray}}
\newcommand{\ul}{\underline}
\newcommand{\ol}{\overline}
\newcommand{\ti}{\tilde}
\newcommand{\spr}[2]{\langle #1 , #2 \rangle}
\newcommand{\id}{\mathbb{I}}
\newcommand{\I}{\mathrm{i}}
\newcommand{\E}{\mathrm{e}}
\newcommand{\ind}{\mathrm{ind}}
\newcommand{\re}{\mathrm{Re}}
\newcommand{\im}{\mathrm{Im}}
\DeclareMathOperator{\res}{Res}
\def\XXint#1#2#3{{\setbox0=\hbox{$#1{#2#3}{\int}$}
     \vcenter{\hbox{$#2#3$}}\kern-.5\wd0}}
\newcommand{\sigI}{\begin{pmatrix} 0 & 1 \\ 1 & 0 \end{pmatrix}}
\newcommand{\rN}{\begin{pmatrix}  0 & 0 \end{pmatrix}}
\newcommand{\ulz}{\underline{z}}
\newcommand{\di}{\mathcal{D}}
\newcommand{\vrc}{\ul{\Xi}_{E_0}}
\newcommand{\hmu}{\hat{\mu}}
\newcommand{\uhmu}{{\underline{\hat{\mu}}(n,t)}}
\newcommand{\uhnuz}{{\underline{\hat{\nu}}}}
\newcommand{\uhnu}{{\underline{\hat{\nu}}(n,t)}}
\newcommand{\uhmuz}{{\underline{\hat{\mu}}}}
\newcommand{\dimu}[1]{\di_{\ul{\hat{\mu}}(#1)}}
\newcommand{\dimus}[1]{\di_{\ul{\hat{\mu}}(#1)^*}}
\newcommand{\dimuz}{\di_{\ul{\hat{\mu}}}}
\newcommand{\dimuzs}{\di_{\ul{\hat{\mu}}^*}}
\newcommand{\dinu}[1]{\di_{\ul{\hat{\nu}}(#1)}}
\newcommand{\dinus}[1]{\di_{\ul{\hat{\nu}}(#1)^*}}
\newcommand{\dinuz}{\di_{\ul{\hat{\nu}}}}
\newcommand{\dinuzs}{\di_{\ul{\hat{\nu}}^*}}
\newcommand{\dirho}{\di_{\ul{\rho}}}
\newcommand{\dirhos}{\di_{\ul{\rho}^*}}
\newcommand{\Amap}{\ul{A}_{E_0}}
\newcommand{\amap}{\ul{\alpha}_{E_0}}
\newcommand{\Rg}[1]{R_{2g+2}^{1/2}(#1)}
\newcommand{\eps}{\varepsilon}
\newcommand{\sig}{\sigma}
\newcommand{\lam}{\lambda}
\newcommand{\gam}{\gamma}
\newcommand{\om}{\omega}
\numberwithin{equation}{section}
\begin{document}

\title[Stability of the Periodic Toda Lattice]{Stability of the Periodic Toda Lattice in the Soliton Region}

\author[H. Kr\"uger]{Helge Kr\"uger}
\address{Department of Mathematics\\ Rice University\\ Houston\\ TX 77005\\ USA}
\email{\mailto{helge.krueger@rice.edu}}
\urladdr{\url{http://math.rice.edu/~hk7/}}

\author[G. Teschl]{Gerald Teschl}
\address{Faculty of Mathematics\\
Nordbergstrasse 15\\ 1090 Wien\\ Austria\\ and International Erwin Schr\"odinger
Institute for Mathematical Physics\\ Boltzmanngasse 9\\ 1090 Wien\\ Austria}
\email{\mailto{Gerald.Teschl@univie.ac.at}}
\urladdr{\url{http://www.mat.univie.ac.at/~gerald/}}

\thanks{Research supported by the Austrian Science Fund (FWF) under Grant No.\ Y330.}
\thanks{Int. Math. Res. Not. Int. {\bf 2009}, Art. ID rnp077, 36 pp (2009)}

\keywords{Riemann--Hilbert problem, Toda lattice, Solitons}
\subjclass[2000]{Primary 37K40, 37K45; Secondary 35Q15, 37K20}

\date{\today}

\begin{abstract}
We apply the method of nonlinear steepest descent to compute the long-time asymptotics of the
periodic (and slightly more generally of the quasi-periodic finite-gap) Toda lattice for decaying
initial data in the soliton region. In addition, we show how to reduce the problem in the remaining
region to the known case without solitons.
\end{abstract}

\maketitle

\section{Introduction}

Consider the doubly infinite Toda lattice in Flaschka's
variables (see e.g.\ \cite{tjac}, \cite{taet}, or \cite{ta})
\be \label{TLpert}
\aligned
\dot b(n,t) &= 2(a(n,t)^2 -a(n-1,t)^2),\\
\dot a(n,t) &= a(n,t) (b(n+1,t) -b(n,t)),
\endaligned
\ee
$(n,t) \in \Z \times \R$,
where the dot denotes differentiation with respect to time. We
will consider a quasi-periodic finite-gap background
solution $(a_q,b_q)$, to be described in the next section, plus
a short range perturbation $(a,b)$ satisfying
\be \label{decay}
\sum_n (1+ |n|)^{3+l} (|a(n,t) - a_q(n,t)| + |b(n,t)- b_q(n,t)|) < \infty.
\ee
for some $l\in\N$. It suffices to check this condition
for one $t\in\R$ (see \cite{emt2}).
The perturbed solution can be computed via the inverse
scattering transform. The case where $(a_q,b_q)$ is constant
is classical (see again \cite{tjac} or \cite{ta}) and the more
general case we want to apply here was solved only recently in \cite{emt2}
(see also \cite{mt}). The long-time asymptotics in the case where $(a_q,b_q)$ is constant
were first computed by Novokshenov and Habibullin \cite{nh} and later made
rigorous by Kamvissis \cite{km} under the additional assumption that no solitons
are present. The case of solitons was recently investigated by us in \cite{krt}.
For a self-contained introduction, including further references and a more detailed
history of this problem,  see our review \cite{krt2}. The long-time asymptotic in the present
situation were first established by Kamvissis and Teschl in \cite{kt2} (see also \cite{kt} for a short overview)
by a generalization of the so-called nonlinear stationary phase/steepest descent method for oscillatory
Riemann--Hilbert problem deformations to Riemann surfaces. While \cite{kt2} contains only leading
order asymptotics, higher asymptotics were given in \cite{kt3}. However, both \cite{kt2} and \cite{kt3}
assume that no solitons are present and hence the purpose of the present paper is to show how the results
can be extended to cover solitons as well.

To fix our background solution, choose a Riemann surface $\M$ as in \eqref{eq:r12},
a Dirichlet divisor $\dimuz$, and introduce
\be
\ulz(n,t) = \Amap(\infty_+) - \amap(\dimuz) - n\ul{A}_{\infty_-}(\infty_+)
+ t\ul{U}_0 - \vrc \in \C^g,
\ee
where $\Amap$ ($\amap$) is Abel's map (for divisors) and $\vrc$, $\ul{U}_0$ are 
some constants defined in Section~\ref{secAG}. Then our background solution is given in terms
of Riemann theta functions by
\begin{align} \nn
a_q(n,t)^2 &= \ti{a}^2 \frac{\theta(\ulz(n+1,t)) \theta(\ulz(n-1,t))}{\theta(
\ulz(n,t))^2},\\
b_q(n,t) &= \tilde{b} + \frac{1}{2}
\frac{d}{dt} \log\Big(\frac{\theta(\ulz(n,t)) }{\theta(\ulz(n-1,t))}\Big),
\end{align}
where $\ti{a}$, $\tilde{b}$ are again some constants. We remark that this class contains all periodic solutions
as a special case.

In order to state our main result, we begin by recalling that the sequences $a(n,t)$, $b(n,t)$, $n\in\Z$,
for fixed $t\in\R$, are uniquely determined by their scattering data, that is, by the right reflection
coefficient $R_+(\lam,t)$, $\lam\in\sig(H_q)$, and the eigenvalues $\rho_j\in\R\backslash\sig(H_q)$, $j=1,\dots, N$,
together with the corresponding right norming constants $\gam_{+,j}(t)>0$, $j=1,\dots, N$. Here $\sig(H_q)$ denotes
the finite-band spectrum of the underlying Lax operator $H_q$.

The relation between the energy $\lam$ of the underlying Lax operator $H_q$ and the propagation speed at which
the corresponding parts of the Toda lattice travel is given by
\be
v(\lam)=\frac{n}{t},
\ee
where
\be
v(\lam) = \lim_{\eps\to 0} \frac{-\re \int_{E_0}^{(\lam+\I\eps,+)}\! \Omega_0}{ \re \int_{E_0}^{(\lam+\I\eps,+)}\! \omega_{\infty_+\,\infty_-}},
\ee
and  can be regarded as a nonlinear analog of the classical dispersion relation. Here
$\om_{\infty_+\, \infty_-}$ is an Abelian differential of the third kind defined in \eqref{ominfpm} and
$\Omega_0$ is an Abelian differential of the second kind defined in \eqref{Om0}. We will show in
Section~\ref{secSPP} that $v$ is a homeomorphism of $\R$ and we will denote its inverse by $\zeta(n/t)$.

Define the following limiting  lattice 
\be \label{limlat}
\aligned
\prod_{j=n}^{\infty} \left(\frac{a_l(j,t)}{a_q(j,t)} \right)^2 = &
\frac{\theta(\ulz(n,t))}{\theta(\ulz(n-1,t))}
\frac{\theta(\ulz(n-1,t)+ \ul{\delta}(n,t))}{\theta(\ulz(n,t)+\ul{\delta}(n,t))} \times\\
& \times \left( \prod_{\rho_k<\zeta(n/t)} \!\!\! \exp\left(-2\int_{E(\rho)}^\rho \om_{\infty_+\, \infty_-} \right) \right) \times\\
& \times \exp\left( \frac{1}{2\pi\I} \int_{C(n/t)} 
\log (1-|R|^2) \om_{\infty_+\, \infty_-}\right),\\
\sum_{j=n}^\infty \big( b_l(j,t) - b_q(j,t) \big) = &
\frac{1}{2\pi\I} \int_{C(n/t)} \log(1-|R|^2) \Omega_0 -\\
& {} - \sum_{\rho_k<\zeta(n/t)} \int_{E(\rho_k)} \Omega_0 +\\
& {} + \frac{1}{2}\frac{d}{ds}
\log\left( \frac{\theta(\ulz(n,s) + \ul{\delta}(n,t) )}{\theta(\ulz(n,s))} \right) \Big|_{s=t},\\
\delta_\ell(n/t) &= 2 \sum_{\rho_k<\zeta(n/t)} \Amap(\hat\rho_k) +
\frac{1}{2\pi\I} \int_{C(n/t)} \log(1-|R|^2) \zeta_\ell,
\endaligned
\ee
where $R=R_+(\lam,t)$ is the associated reflection coefficient,
$\zeta_\ell$ is a canonical basis of holomorphic differentials, and
$C(n/t) = \pi^{-1}(\sig(H_q)\cap (-\infty, \zeta(n/t)))$ oriented such that the upper sheet is to the left.

\begin{theorem}\label{thm:asym}
Assume \eqref{decay} and abbreviate by $c_k= v(\rho_k)$ the velocity of the $k$'th soliton defined above.
Then the asymptotics in the soliton region, $\{ (n,t) |\, \zeta(n/t) \in\R\backslash\sig(H_q)\}$,
are as follows.

Let $\eps > 0$ sufficiently small such that the intervals $[c_k-\eps,c_k+\eps]$, $1\le k \le N$, are disjoint and lie
inside $v(\R\backslash\sig(H_q))$.

If $|\frac{n}{t} - c_k|<\eps$ for some $k$, the solution is asymptotically given by a one-soliton
solution on top of the limiting lattice:
\begin{align}\nn
\prod_{j=n}^{\infty} \frac{a(j,t)}{a_l(j,t)} &= \left(
\sqrt{\frac{c_{l,\gam_k(n,t)}(\rho_k,n-1,t)}{c_{l,\gam_k(n,t)}(\rho_k,n,t)}} + O(t^{-l}) \right),\\
\sum_{j=n+1}^\infty b(j,t) - b_l(j,t)&= -\gam_k(n,t) \frac{a_l(n,t) \psi_l(\rho_k,n,t) \psi_l(\rho_k,n+1,t)}{2 c_{l,\gam_k(n,t)}(\rho_k,n,t)}
+ O(t^{-l}),
\end{align}
for any $l \geq 1$, where
\begin{equation}
c_{l,\gam}(\rho,n,t) = 1 + \gam\!\! \sum_{j=n+1}^\infty \psi_{l,+}(\rho,j,t)^2
\end{equation}
and
\be\label{eq:gamshift}
\gam_k(n,t) = \gam_k \frac{T(\rho_k^*,n,t)}{T(\rho_k,n,t)}.
\ee
Here $\psi_l(p,n,t)$ is the Baker-Akhiezer function (cf.\ Section~\ref{secAG}) corresponding to the limiting lattice defined above.

If $|\frac{n}{t} -c_k| \geq \eps$, for all $k$, the solution is asymptotically close to the limiting lattice:
\begin{align}\nn
\prod_{j=n}^{\infty} \frac{a(j,t)}{a_l(j,t)} &= 1 + O(t^{-l}),\\
\sum_{j=n+1}^\infty b(j,t) - b_l(j,t) &= O(t^{-l}),
\end{align}
for any $l \geq 1$.
\end{theorem}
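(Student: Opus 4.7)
The plan is to reduce the vector meromorphic Riemann--Hilbert problem (RHP) on $\M$ encoding the scattering data of $(a(n,t),b(n,t))$ to one amenable to the nonlinear steepest descent scheme of \cite{kt2,kt3}, combined with the soliton treatment developed for the constant background in \cite{krt}. First I would set up the RHP (jump on $\sigma(H_q)$ determined by $R_+(\lam,t)$, simple poles at the Dirichlet divisor points $\hat\rho_k$ with residues encoding the norming constants $\gam_{+,k}(t)$, and prescribed normalization at $\infty_\pm$), from whose expansion at $\infty_+$ one reads off $(a(n,t),b(n,t))$. Next I would conjugate by a scalar ``partial transmission'' function $T(p,n,t)$ meromorphic on $\M$ whose role is twofold: it removes the poles at every soliton with $\rho_k<\zeta(n/t)$, converting them into jumps on small oriented loops around $\hat\rho_k$, and it symmetrizes the jump on $\sigma(H_q)$ into a form compatible with the sign chart of $\operatorname{Re}(tg)$, where $g$ is the phase built from $\om_{\infty_+\,\infty_-}$ and $\Omega_0$. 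The rescaling formula \eqref{eq:gamshift} for $\gam_k(n,t)$ is precisely the record of how the norming constant of a surviving soliton is modified by $T$.

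Next I would carry out the lens deformation. Because we are in the soliton region $\zeta(n/t)\in\R\setminus\sigma(H_q)$, there is no real stationary point on $\sigma(H_q)$; consequently, after conjugation by $T$ the entire continuous-spectrum jump becomes of size $O(\E^{-\eta t})$, and the small loops around the removed solitons inherit jumps of size $\E^{-\eta_k t}$ because the condition $\rho_k<\zeta(n/t)$ enforces the correct sign in the $T$-exponent. Collecting the contributions produced by $T$ (the explicit theta-quotients, the soliton shift product, and the $(1-|R|^2)$-integral over $C(n/t)$) yields the limiting lattice $(a_l,b_l)$ of \eqref{limlat}, so the remaining problem is to estimate the deviation of the true solution from the parametrix built from $(a_l,b_l)$.

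At this point the theorem splits into two cases. When $|n/t-c_k|\geq\eps$ for every $k$, every remaining jump is exponentially small, so small-norm RHP theory gives the solution up to an error of order $\E^{-\eta t}$, which is $O(t^{-l})$ for every $l\geq 1$; this produces the second pair of asymptotics. When $|n/t-c_k|<\eps$, the pole at $\hat\rho_k$ cannot be removed cheaply, so I would remove every other soliton but keep this one, leaving an exactly solvable rank-one RHP representing a single soliton of norming constant $\gam_k(n,t)$ sitting on top of $(a_l,b_l)$. This local problem is solved explicitly by the Baker--Akhiezer function $\psi_l(\rho_k,\cdot,\cdot)$, and its entries reproduce the factor $c_{l,\gam_k(n,t)}(\rho_k,n,t)$; the discrepancy from the exact one-soliton formula is again controlled by the small-norm estimate on the now-exponentially small residual jumps, giving the first pair of asymptotics.

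The step I expect to be hardest is the construction and analysis of $T(p,n,t)$ on the Riemann surface $\M$: existence of a meromorphic scalar with the prescribed divisor structure at the $\hat\rho_k$ and correct behavior at $\infty_\pm$ requires a Jacobi inversion argument that is responsible for the shift $\ul{\delta}(n,t)$ in \eqref{limlat}, and one must simultaneously verify that the jump factorization it induces on $\sigma(H_q)$ is the one required by the sign chart of $\operatorname{Re}(tg)$ and that the rates $\eta,\eta_k$ are uniform in the relevant parameter region. Once $T$ is in place, the continuous-spectrum analysis parallels \cite{kt2,kt3} and the soliton bookkeeping parallels \cite{krt}, with the $O(t^{-l})$ bound for arbitrary $l$ reflecting the fact that in the soliton region all surviving jumps decay exponentially rather than merely polynomially.
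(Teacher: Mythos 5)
Your proposal follows essentially the same route as the paper: formulate the vector Riemann--Hilbert problem on the hyperelliptic surface, conjugate by the partial transmission coefficient $T(p,n,t)$ (constructed via Jacobi inversion, producing the shift $\ul{\delta}$ and the renormalized norming constant \eqref{eq:gamshift}), open lenses around the spectral bands, and compare with the explicitly solvable one-soliton model on the limiting lattice via small-norm theory. The only ingredients you gloss over are the symmetric Cauchy kernel and the uniqueness argument via the symmetry condition (without which Riemann--Roch produces extra solutions and the Fredholm/small-norm machinery on a genus-$g$ surface does not close), and the analytic approximation of $R$ when it does not extend off $\sig(H_q)$, which is the actual source of the $O(t^{-l})$ rate tied to the decay hypothesis \eqref{decay}.
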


In particular, we see that the solution splits into a sum of independent solitons
where the presence of the other solitons and the radiation part corresponding to the continuous spectrum
manifests itself in phase shifts given by \eqref{eq:gamshift}. While in the constant background case
this result is classical, we are not aware of proof even in the case of a pure soliton solution.
Moreover, observe that in the periodic case considered here one can have a {\em stationary soliton}
(see the discussion in Section~\ref{secSPP}).

The proof will be given at the end of Section~\ref{sec:solreg}. Furthermore,
in the remaining regions the analysis in Section~\ref{sec:solreg} also shows
that the Riemann--Hilbert problem reduces to one without
solitons. In fact, away from the soliton region, the asymptotics can be computed as in
\cite{kt3}. The only difference being, that in the final answer, the limiting lattice has to be replaced
by the one defined here and the Blaschke factors corresponding to the eigenvalues have to be
added to the partial transmission coefficient.

Finally, we note that the same proof works even if there are different spatial
asymptotics as $n\to\pm\infty$ as long as they lie in the same isospectral class
(cf.\ Remark~\ref{rem:stepiso} below).

\section{Algebro-geometric quasi-periodic finite-gap solutions}
\label{secAG}

As in \cite{kt2}, we state some facts on our background
solution $(a_q,b_q)$ which we want to choose from the class of
algebro-geometric quasi-periodic finite-gap solutions, that is the
class of stationary solutions of the Toda hierarchy, \cite{bght}.
In particular, this class contains all periodic solutions. We will
use the same notation as in \cite{tjac}, where we also refer to for proofs.
As a reference for Riemann surfaces in this context we recommend \cite{fk}.

To set the stage let $\M$ be the Riemann surface associated with the following function
\begin{equation}\label{eq:r12}
\Rg{z}, \qquad R_{2g+2}(z) = \prod_{j=0}^{2g+1} (z-E_j), \qquad
E_0 < E_1 < \cdots < E_{2g+1},
\end{equation}
$g\in \N$. $\M$ is a compact, hyperelliptic Riemann surface of genus $g$.
We will choose $\Rg{z}$ as the fixed branch
\begin{equation}
\Rg{z} = -\prod_{j=0}^{2g+1} \sqrt{z-E_j},
\end{equation}
where $\sqrt{.}$ is the standard root with branch cut along $(-\infty,0)$.

A point on $\M$ is denoted by 
$p = (z, \pm \Rg{z}) = (z, \pm)$, $z \in \C$, or $p = (\infty,\pm) = \infty_\pm$, and
the projection onto $\C \cup \{\infty\}$ by $\pi(p) = z$. 
The points $\{(E_{j}, 0), 0 \leq j \leq 2 g+1\} \subseteq \M$ are 
called branch points and the sets 
\begin{equation}
\Pi_{\pm} = \{ (z, \pm \Rg{z}) \mid z \in \C\setminus
\bigcup_{j=0}^g[E_{2j}, E_{2j+1}]\} \subset \M
\end{equation}
are called upper, lower sheet, respectively.

Let $\{a_j, b_j\}_{j=1}^g$ be loops on the surface $\M$ representing the
canonical generators of the fundamental group $\pi_1(\M)$. We require
$a_j$ to surround the points $E_{2j-1}$, $E_{2j}$ (thereby changing sheets
twice) and $b_j$ to surround $E_0$, $E_{2j-1}$ counterclockwise on the
upper sheet, with pairwise intersection indices given by
\begin{equation}
a_i \circ a_j= b_i \circ b_j = 0, \qquad a_i \circ b_j = \delta_{i,j},
\qquad 1 \leq i, j \leq g.
\end{equation}
The corresponding canonical basis $\{\zeta_j\}_{j=1}^g$ for the space of
holomorphic differentials can be constructed by
\begin{equation}
\underline{\zeta} = \sum_{j=1}^g \underline{c}(j)  
\frac{\pi^{j-1}d\pi}{R_{2g+2}^{1/2}},
\end{equation}
where the constants $\underline{c}(.)$ are given by
\[
c_j(k) = C_{jk}^{-1}, \qquad 
C_{jk} = \int_{a_k} \frac{\pi^{j-1}d\pi}{R_{2g+2}^{1/2}} =
2 \int_{E_{2k-1}}^{E_{2k}} \frac{z^{j-1}dz}{\Rg{z}} \in
\R.
\]
The differentials fulfill
\begin{equation} \label{deftau}
\int_{a_j} \zeta_k = \delta_{j,k}, \qquad \int_{b_j} \zeta_k = \tau_{j,k}, 
\qquad \tau_{j,k} = \tau_{k, j}, \qquad 1 \leq j, k \leq g.
\end{equation}

Now pick $g$ numbers (the Dirichlet eigenvalues)
\be
(\hat{\mu}_j)_{j=1}^g = (\mu_j, \sigma_j)_{j=1}^g
\ee
whose projections lie in the spectral gaps, that is, $\mu_j\in[E_{2j-1},E_{2j}]$.
Associated with these numbers is the divisor $\dimuz$ which
is one at the points $\hat{\mu}_j$  and zero else. Using this divisor we
introduce
\begin{align} \nn
\ulz(p,n,t) &= \Amap(p) - \amap(\dimuz) - n\ul{A}_{\infty_-}(\infty_+)
+ t\ul{U}_0 - \vrc \in \C^g, \\
\ulz(n,t) &= \ulz(\infty_+,n,t),
\end{align}
where $\vrc$ is the vector of Riemann constants
\begin{equation}
\Xi_{E_0,j} = \frac{1- \sum_{k=1}^g \tau_{j,k}}{2},
\end{equation}
$\ul{U}_0$ are the $b$-periods of the Abelian differential $\Omega_0$ defined below,
and $\Amap$ ($\amap$) is Abel's map (for divisors). The hat indicates that we
regard it as a (single-valued) map from $\hat{\M}$ (the fundamental polygon
associated with $\M$ by cutting along the $a$ and $b$ cycles) to $\C^g$.
We recall that the function $\theta(\ulz(p,n,t))$ has precisely $g$ zeros
$\hmu_j(n,t)$ (with $\hmu_j(0,0)=\hmu_j$), where 
\be
\theta(\ulz) = \sum_{\ul{m} \in \Z^g} \exp 2 \pi \I \left( \spr{\ul{m}}{\ulz} +
\frac{\spr{\ul{m}}{\ul{\tau} \, \ul{m}}}{2}\right) ,\qquad \ulz \in \C^g,
\ee
is the Riemann theta function associated with $\M$.

Then our background solution is given by
\begin{align} \nn
a_q(n,t)^2 &= \ti{a}^2 \frac{\theta(\ulz(n+1,t)) \theta(\ulz(n-1,t))}{\theta(
\ulz(n,t))^2},\\ \label{imfab}
b_q(n,t) &= \tilde{b} + \frac{1}{2}
\frac{d}{dt} \log\Big(\frac{\theta(\ulz(n,t)) }{\theta(\ulz(n-1,t))}\Big).
\end{align}
The constants $\ti{a}$, $\tilde{b}$ depend only on the Riemann surface
(see \cite[Section~9.2]{tjac}).

Introduce the time dependent Baker-Akhiezer function
\begin{align}
\psi_q(p,n,t) &= C(n,0,t) \frac{\theta (\ulz(p,n,t))}{\theta(\ulz (p,0,0))}
\exp \Big( n \int_{E_0}^p \omega_{\infty_+\,\infty_-} + t\int_{E_0}^p \Omega_0
\Big),
\end{align}
where $C(n,0,t)$ is real-valued,
\begin{equation}
C(n,0,t)^2 = \frac{ \theta(\ulz(0,0)) \theta(\ulz(-1,0))}
{\theta (\ulz (n,t))\theta (\ulz (n-1,t))},
\end{equation}
and the sign has to be chosen in accordance with $a_q(n,t)$.
Here
\be\label{ominfpm}
\omega_{\infty_+\, \infty_-}= \frac{\prod_{j=1}^g (\pi -\lambda_j) }{R_{2g+2}^{1/2}}d\pi
\ee
is the Abelian differential of the third kind with poles at $\infty_+$ and $\infty_-$ and
\be\label{Om0}
\Omega_0 = \frac{\prod_{j=0}^g (\pi - \ti\lambda_j) }{R_{2g+2}^{1/2}}d\pi,
\qquad \sum_{j=0}^g \ti\lambda_j = \frac{1}{2} \sum_{j=0}^{2g+1} E_j,
\ee
is the Abelian differential of the second kind with second order poles at
$\infty_+$ and $\infty_-$ (see \cite[Sects.~13.1, 13.2]{tjac}).
All Abelian differentials are normalized to have vanishing $a_j$ periods.

We will also need the Blaschke factor
\be
B(p,\rho)= \exp \Big( g(p,\rho) \Big) = \exp\Big(\int_{E_0}^p \om_{\rho\, \rho^*}\Big) =
\exp\Big(\int_{E(\rho)}^\rho \om_{p\, p^*}\Big), \quad \pi(\rho)\in\R,
\ee
where $E(\rho)$ is $E_0$ if $\rho<E_0$, either $E_{2j-1}$ or $E_{2j}$ if
$\rho\in(E_{2j-1},E_{2j})$, $1\le j \le g$, and $E_{2g+1}$ if $\rho>E_{2g+1}$.
It is a multivalued function with a simple zero at $\rho$ and simple pole at $\rho^*$
satisfying $|B(p,\rho)|=1$, $p\in\partial\Pi_+$. It is real-valued for $\pi(p)\in(-\infty,E_0)$ and
satisfies
\be\label{eq:propblaschke}
B(E_0,\rho)=1 \quad\mbox{and}\quad
B(p^*,\rho) = B(p,\rho^*) = B(p,\rho)^{-1}
\ee
(see e.g., \cite{tag}).

The Baker-Akhiezer function is a meromorphic function on $\M\setminus\{\infty_\pm\}$
with an essential singularity at $\infty_\pm$. The two branches are denoted by
\begin{equation}
\psi_{q,\pm}(z,n,t) = \psi_q(p,n,t), \qquad p=(z,\pm)
\end{equation}
and it satisfies
\begin{align}\nn
H_q(t) \psi_q(p,n,t) &= \pi(p) \psi_q(p,n,t),\\
\frac{d}{dt} \psi_q(p,n,t) &= P_{q,2}(t) \psi_q(p,n,t),
\end{align}
where 
\begin{align}\nn
H_q(t) \psi(n) &= a_q(n,t) \psi(n+1) + a_q(n-1,t) \psi(n-1) + b_q(n) \psi(n),\\
P_{q,2}(t) \psi(n) &= a_q(n,t) \psi(n+1) - a_q(n-1,t) \psi(n-1),
\end{align}
are the operators from the Lax pair,
\be
\frac{d}{dt} H_q(t) = H_q(t) P_{q,2}(t) - P_{q,2}(t) H_q(t),
\ee
for the Toda lattice.

It is well known that the spectrum of $H_q(t)$ is time independent and
consists of $g+1$ bands
\begin{equation}
\sig(H_q) = \bigcup_{j=0}^g [E_{2j},E_{2j+1}].
\end{equation}
For further information and proofs we refer to \cite[Chap.~9 and Sect.~13.2]{tjac}.

\section{The Inverse scattering transform and the Riemann--Hilbert problem}
\label{secISTRH}

In this section our notation and results are  taken from \cite{emt} and \cite{emt2}.
Let $\psi_{q,\pm}(z,n,t)$ be the branches of the  Baker-Akhiezer function defined
in the previous section. Let $\psi_\pm(z,n,t)$ be the Jost functions for the perturbed
problem defined by
\be
\lim_{n \to \pm \infty} 
w(z)^{\mp  n} ( \psi_{\pm}(z,n,t) - \psi_{q, \pm}(z,n,t))
=0,
\ee
where $w(z)$ is the quasimomentum map
\be
w(z)= \exp (\int^p_{E_0} \omega_{\infty_+\, \infty_-}), \quad p=(z,+).
\ee
The asymptotics of the  two projections of the Jost function are
\begin{align} \nn
\psi_\pm(z,n,t) =&  \frac{z^{\mp n} \Big(\prod_{j=0}^{n-1} a_q(j,t)\Big)^{\pm 1}}{A_\pm(n,t)}  \times\\
& \times
\Big(1 + \Big(B_\pm(n,t) \pm \sum_{j=1}^{n} b_q(j- {\scriptstyle{0 \atop 1}},t) \Big)\frac{1}{z}
+ O(\frac{1}{z^2}) \Big),
\end{align}
as $z \to \infty$, where
\be \label{defABpm}
\aligned
A_+(n,t) &= \prod_{j=n}^{\infty} \frac{a(j,t)}{a_q(j,t)}, \quad
B_+(n,t)= \sum_{j=n+1}^\infty (b_q(j,t)-b(j,t)), \\
A_-(n,t) &= \!\!\prod_{j=- \infty}^{n-1}\! \frac{a(j,t)}{a_q(j,t)}, \quad
B_-(n,t) = \sum_{j=-\infty}^{n-1} (b_q(j,t) - b(j,t)).
\endaligned
\ee

One has the scattering relations
\be \label{relscat}
T(z) \psi_\mp(z,n,t) =  \ol{\psi_\pm(z,n,t)} +
R_\pm(z) \psi_\pm(z,n,t),  \qquad z \in\sigma(H_q),
\ee
where $T(z)$, $R_\pm(z)$ are the transmission respectively reflection coefficients.
Here $\psi_\pm(z,n,t)$ is defined such that 
$\psi_\pm(z,n,t)= \lim_{\eps\downarrow 0}\psi_\pm(z + \I\eps,n,t)$,
$z\in\sigma(H_q)$. If we take the limit from the other side we
have $\ol{\psi_\pm(z,n,t)}= \lim_{\eps\downarrow 0}\psi_\pm(z - \I\eps,n,t)$.

The transmission and reflection coefficients have the following well-known properties
\cite{emt}:

\begin{lemma}
The transmission coefficient $T(z)$ has a meromorphic extension to
$\C\backslash\sig(H_q)$ with simple poles at the eigenvalues $\rho_j$.
The residues of $T(z)$ are given by
\be\label{eq:resT}
\res_{\rho_j} T(z) = - \frac{R^{1/2}_{2g+2}(\rho_j)}{\prod_{k=1}^g (\rho_j-\mu_k)}
\frac{\gam_{\pm,j}}{c_j^{\pm 1}},
\ee
where
\be
\gam_{\pm,j}^{-1} = \sum_{n\in\Z} |\psi_\pm(\rho_j,n,t)|^2
\ee
and $\psi_- (\rho_j,n,t) = c_j \psi_+(\rho_j,n,t)$.

Moreover,
\be \label{reltrpm} 
T(z) \ol{R_+(z)} + \ol{T(z)} R_-(z)=0, \qquad |T(z)|^2 + |R_\pm(z)|^2=1.
\ee
\end{lemma}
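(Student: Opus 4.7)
The plan is to follow the standard inverse-scattering route for Jacobi operators, adapted to the finite-gap background, and to handle the three claims separately.

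First, I would represent $T(z)$ as a Wronskian quotient. Taking the (site-independent) Jacobi Wronskian of \eqref{relscat} with $\psi_+$ kills the reflection term and gives
\begin{equation*}
T(z)=\frac{W(\overline{\psi_+(z,\cdot,t)},\,\psi_+(z,\cdot,t))}{W(\psi_-(z,\cdot,t),\,\psi_+(z,\cdot,t))}.
\end{equation*}
The Jost solutions inherit from $\psi_{q,\pm}$ a meromorphic extension to $\M\setminus\{\infty_+,\infty_-\}$, with poles arising only from the Dirichlet divisor $\dimuz$ of the background (this is the content of the construction in \cite{emt}). Hence both numerator and denominator extend meromorphically to $\C\setminus\sig(H_q)$, and so does $T$. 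A pole of $T$ can only occur where $W(\psi_-,\psi_+)$ vanishes, i.e.\ where $\psi_\pm$ become proportional; on the physical sheet this is precisely the eigenvalue set $\{\rho_j\}$, with $\psi_-(\rho_j,\cdot,t)=c_j\,\psi_+(\rho_j,\cdot,t)$, and the poles are simple because the eigenvalues of $H_q(t)$ (acting on $\ell^2$) are simple.

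Second, I would compute the residues. Differentiating the Wronskian identity in $z$ and using $(H_q(t)-z)\dot\psi(z)=\psi(z)$ together with a telescoping sum produces
\begin{equation*}
\tfrac{d}{dz}W(\psi_-(z),\psi_+(z))\big|_{z=\rho_j}=-c_j\sum_{n\in\Z}\psi_+(\rho_j,n,t)^2=-c_j/\gam_{+,j}.
\end{equation*}
At $z=\rho_j$ the numerator equals the background Wronskian $W(\psi_{q,-},\psi_{q,+})(\rho_j)$, which from the explicit theta-function representation of $\psi_q$ (and the branch choice fixed in Section~\ref{secAG}) evaluates to $-\Rg{\rho_j}/\prod_{k=1}^g(\rho_j-\mu_k)$. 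Assembling these pieces yields \eqref{eq:resT} with the $+$-sign convention; the $-$-version is obtained by the symmetric argument, taking the Wronskian with $\psi_-$ instead and using $\psi_-=c_j\psi_+$ to convert $\gam_{-,j}$ into $\gam_{+,j}/c_j^2$.

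Finally, the unitarity identities \eqref{reltrpm} are an algebraic consequence of the two scattering relations \eqref{relscat} on $\sig(H_q)$. Conjugating both relations (using that on the cut $\overline{\psi_\pm(z,n,t)}$ is simply the boundary value of $\psi_\pm$ from the opposite side) and expressing everything in the basis $\{\psi_+,\overline{\psi_+}\}$ of solutions on $\sig(H_q)$ produces a $2\times 2$ linear system relating $(T,R_\pm)$ and $(\overline{T},\overline{R_\pm})$; its consistency, together with the constancy of the background Wronskian, immediately gives both $T\overline{R_+}+\overline{T}R_-=0$ and $|T|^2+|R_\pm|^2=1$.

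The main obstacle is the second step: pinning down the precise geometric factor $\Rg{\rho_j}/\prod_{k=1}^g(\rho_j-\mu_k)$. This needs careful sheet-tracking on $\M$, exploiting the fact that the two branches of $\psi_q$ have poles exactly at the $\hmu_k$, which cancel against zeros contributed by $\Rg{z}$ at the branch points; the rest is bookkeeping that can be imported from \cite[Chap.~1, Sect.~9.2]{tjac}.
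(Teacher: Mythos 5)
Your Wronskian route is exactly the argument behind this lemma: the paper itself offers no proof here but simply cites \cite{emt}, and the proof there (and in \cite[Chap.~10]{tjac} for the constant background) is precisely the representation $T=W(\ol{\psi_+},\psi_+)/W(\psi_-,\psi_+)$, the identity $\tfrac{d}{dz}W(\psi_-,\psi_+)|_{\rho_j}=-\sum_n\psi_-\psi_+=-c_j/\gam_{+,j}$, and the evaluation of the background Wronskian, so your proposal is essentially the same approach and is correct in structure. One bookkeeping slip: with the paper's branch convention $\Rg{z}=-\prod_j\sqrt{z-E_j}$ one has $W(\psi_{q,-},\psi_{q,+})=+\Rg{z}/\prod_k(z-\mu_k)$ (it behaves like $-z$ as $z\to\infty$), not $-\Rg{z}/\prod_k(z-\mu_k)$; with your sign the assembled residue would come out as the negative of \eqref{eq:resT}. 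Also, the simplicity of the poles rests on the eigenvalues of the perturbed operator $H(t)$ (equivalently on $\tfrac{d}{dz}W|_{\rho_j}\ne 0$), not of $H_q(t)$, which has purely absolutely continuous spectrum.
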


In particular one reflection coefficient, say $R(z)=R_+(z)$, and one set of
norming constants, say $\gam_j= \gam_{+,j}$, suffices.

We will define a Riemann--Hilbert problem on the Riemann
surface $\M$ as follows:
\be\label{defm}
m(p,n,t)= \left\{\begin{array}{c@{\quad}l}
\begin{pmatrix} T(z) \frac{\psi_-(z,n,t)}{\psi_{q,-}(z,n,t)}  & \frac{\psi_+(z,n,t)}{\psi_{q,+}(z,n,t)} \end{pmatrix},
& p=(z,+)\\
\begin{pmatrix} \frac{\psi_+(z,n,t)}{\psi_{q,+}(z,n,t)} & T(z) \frac{\psi_-(z,n,t)}{\psi_{q,-}(z,n,t)} \end{pmatrix}, 
& p=(z,-)
\end{array}\right..
\ee
We are interested in the jump condition of $m(p,n,t)$ on $\Sigma$,
the boundary of $\Pi_\pm$ (oriented counterclockwise when viewed from top sheet $\Pi_+$).
It consists of two copies $\Sigma_\pm$ of $\sigma(H_q)$ which correspond to
non-tangential limits from $p=(z,+)$ with $\pm\im(z)>0$, respectively to non-tangential
limits from $p=(z,-)$ with $\mp\im(z)>0$.

To formulate our jump condition we use the following convention:
When representing functions on $\Sigma$, the lower subscript denotes
the non-tangential limit from $\Pi_+$ or $\Pi_-$, respectively,
\be
m_\pm(p_0) = \lim_{ \Pi_\pm \ni p\to p_0} m(p), \qquad p_0\in\Sigma.
\ee
Using the notation above implicitly assumes that these limits exist in the sense that
$m(p)$ extends to a continuous function on the boundary away from
the band edges.

Moreover, we will also use symmetries with respect to the
the sheet exchange map
\be
p^*= \begin{cases}
(z,\mp) & \text{ for } p=(z,\pm),\\
\infty_\mp & \text{ for } p=\infty_\pm,
\end{cases}
\ee
and complex conjugation
\be
\ol{p} = \begin{cases}
(\ol{z},\pm) & \text{ for } p=(z,\pm)\not\in \Sigma,\\
(z,\mp) & \text{ for } p=(z,\pm)\in \Sigma,\\
\infty_\pm & \text{ for } p=\infty_\pm.
\end{cases}
\ee
In particular, we have $\ol{p}=p^*$ for $p\in\Sigma$.

Note that we have $\ti{m}_\pm(p)=m_\mp(p^*)$ for $\ti{m}(p)= m(p^*)$
(since $*$ reverses the orientation of $\Sigma$) and $\ti{m}_\pm(p)= \ol{m_\pm(p^*)}$ for
$\ti{m}(p)=\ol{m(\ol{p})}$.

With this notation, using \eqref{relscat} and \eqref{reltrpm}, we obtain
\begin{align} \nn
m_+(p,n,t) &= m_-(p,n,t) J(p,n,t)\\ \label{rhpm2.1}
J(p,n,t) &=
\begin{pmatrix}
1 -|R(p)|^2  &- \ol{R(p) \Theta(p,n,t)}  \E^{-t \phi(p)}\\
R(p)  \Theta(p,n,t) \E^{t \phi(p)} & 1
\end{pmatrix},
\end{align}
where
\[
\Theta(p,n,t) = \frac{\theta(\ulz(p,n,t))}{\theta(\ulz(p,0,0))}
\frac{\theta(\ulz(p^*,0,0))}{\theta(\ulz(p^*,n,t))}
\]
and
\be\label{defsp}
\phi(p,\frac{n}{t}) =
2 \int_{E_0}^p \Omega_0 + 2 \frac{n}{t} \int_{E_0}^p \omega_{\infty_+\,\infty_-}
\in \I \R
\ee
for $p \in\Sigma$. Note
\[
\frac{\psi_q(p,n,t)}{\psi_q(p^*,n,t)} = \Theta(p,n,t) \E^{t\phi(p)}.
\]
Here we have extend our definition of $T$ to $\Sigma$ such that
it is equal to $T(z)$ on $\Sigma_+$ and equal to $\ol{T(z)}$ on $\Sigma_-$.
Similarly for $R(z)$. In particular, the condition on $\Sigma_+$ is just the
complex conjugate of the one on $\Sigma_-$ since we have $R(p^*)= \ol{R(p)}$
and $m_\pm(p^*,n,t)= \ol{m_\pm(p,n,t)}$ for $p\in\Sigma$.

Furthermore,
\be \label{m2infp}
m(p,n,t) = \begin{pmatrix}
A_+(n,t) (1 - 2 B_+(n-1,t) \frac{1}{z}) &
\frac{1}{A_+(n,t)}(1 + 2 B_+(n,t) \frac{1}{z} )
\end{pmatrix} + O(\frac{1}{z^2}),
\ee
for $p=(z,+)\to\infty_+$, with $A_\pm(n,t)$ and $B_\pm(n,t)$ are defined in \eqref{defABpm}.
The formula near $\infty_-$ follows by flipping the columns. Here we have used
\be
T(z) = A_-(n,t) A_+(n,t) \big( 1 - \frac{B_+(n,t) + b(n,t) + B_-(n,t)}{z} + O(\frac{1}{z^2}\big).
\ee
Using the properties of $\psi_\pm(z,n,t)$ and $\psi_{q,\pm}(z,n,t)$ one checks that its divisor satisfies
\be
(m_1) \ge -\dimus{n,t} - \dirho, \qquad (m_2) \ge -\dimu{n,t} - \dirhos,
\ee
where
\be
\dirho= \sum_j \di_{\rho_j}, \qquad \dirhos= \sum_j \di_{\rho_j^*}.
\ee
Here $(f)$ denotes the divisor of $f$.

\begin{theorem}[Vector Riemann--Hilbert problem]\label{thm:vecrhp}
Let $\mathcal{S}_+(H(0))=\{ R(\lam),\; \lam\in\sig(H_q); \: (\rho_j, \gam_j), \: 1\le j \le N \}$
the right scattering data of the operator $H(0)$. Then $m(z)=m(z,n,t)$ defined in \eqref{defm}
is meromorphic away from $\Sigma$ and satisfies:
\begin{enumerate}
\item The jump condition
\be \label{eq:jumpcond}
m_+(p)=m_-(p) J(p), \qquad
J(z)= \begin{pmatrix}
1 -|R(p)|^2  &- \ol{R(p) \Theta(p,n,t)}  \E^{-t \phi(p)}\\
R(p)  \Theta(p,n,t) \E^{t \phi(p)} & 1
\end{pmatrix},
\ee
for $p\in\Sigma$,
\item
the divisor
\be
(m_1) \ge -\dimus{n,t} - \dirho, \qquad (m_2) \ge -\dimu{n,t} - \dirhos
\ee
and pole conditions
\be\label{eq:polecond}
\aligned
& \Big( m_1(p) + \frac{R^{1/2}_{2g+2}(\rho_j)}{\prod_{k=1}^g (\rho_j-\mu_k)}
\frac{\gam_j}{\pi(p)-\rho_j} \frac{\psi_q(p,n,t)}{\psi_q(p^*,n,t)} m_2(p) \Big) \ge - \dimus{n,t},
\mbox{ near $\rho_j$},\\
& \Big( \frac{R^{1/2}_{2g+2}(\rho_j)}{\prod_{k=1}^g (\rho_j-\mu_k)}
\frac{\gam_j}{\pi(p)-\rho_j} \frac{\psi_q(p,n,t)}{\psi_q(p^*,n,t)}  m_1(p) + m_2(p) \Big) \ge - \dimu{n,t},
\mbox{ near $\rho_j^*$},
\endaligned
\ee
\item
the symmetry condition
\be \label{eq:symcond}
m(p^*) = m(p) \sigI 
\ee
\item
and the normalization
\be\label{eq:normcond}
% m(\infty_+) = (m_1\quad m_2),\quad
m_1(\infty_+) \cdot m_2(\infty_+) = 1\quad m_1(\infty_+) > 0.
\ee
\end{enumerate}
\end{theorem}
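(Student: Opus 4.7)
The theorem is a verification statement: each of the four items should be read off from the definition \eqref{defm} together with the material of Section~\ref{secAG} and the scattering facts recalled above. My plan is to take the four items in order, relying on the Schwarz-type symmetry $\overline{\psi_\pm(z,n,t)} = \psi_\pm(\bar z,n,t)$ of the Jost functions, on the scattering relations \eqref{relscat}, on the unitarity identities \eqref{reltrpm}, and on the residue formula \eqref{eq:resT}. For the jump condition I would fix $p=(z,+)\in\Sigma_+$ and take non-tangential limits from $\Pi_\pm$. By real-analyticity off $\sigma(H_q)$, the limit from $\Pi_-$ is the complex conjugate of that from $\Pi_+$, so combining \eqref{relscat} with \eqref{reltrpm} expresses the pair $(T\psi_-/\psi_{q,-},\,\psi_+/\psi_{q,+})_+$ as the same pair with subscript $-$ multiplied from the right by a $2\times 2$ matrix; using
\[
\frac{\psi_q(p,n,t)}{\psi_q(p^*,n,t)}=\Theta(p,n,t)\,\E^{t\phi(p)}
\]
to absorb the background ratio gives precisely $J(p,n,t)$ of \eqref{rhpm2.1}. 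On $\Sigma_-$ the definition \eqref{defm} swaps the columns, but the computation is the complex conjugate and produces the same $J$ since $R(p^*)=\overline{R(p)}$.

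For the divisor condition, the only possible extra poles of the components come from zeros of the denominators $\psi_{q,\pm}$, which sit at the Dirichlet divisor and appear as $\dimuzs(n,t)$ in the first component and $\dimuz(n,t)$ in the second (the swap is because $\psi_{q,-}$ corresponds to the sheet-flipped Baker-Akhiezer function), plus the simple poles of $T(z)$ at the eigenvalues $\rho_j$ in the first component, and by the sheet-swap definition their partners $\rho_j^*$ in the second. For the residues, at $\rho_j$ I would use the bound-state relation $\psi_-(\rho_j,n,t)=c_j\,\psi_+(\rho_j,n,t)$ together with \eqref{eq:resT} to compute
\[
\res_{\rho_j} m_1(p,n,t) = -\frac{R^{1/2}_{2g+2}(\rho_j)}{\prod_{k=1}^g(\rho_j-\mu_k)}\,\gam_j\,\frac{\psi_q(\rho_j,n,t)}{\psi_q(\rho_j^*,n,t)}\,m_2(\rho_j,n,t),
\]
which rearranges to the first line of \eqref{eq:polecond}. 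The second line, near $\rho_j^*$, then follows automatically from the symmetry \eqref{eq:symcond}.

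The symmetry \eqref{eq:symcond} is built into \eqref{defm}: the two branches of the case distinction are related by a column swap, so $m(p^*)=m(p)\sigI$ holds by construction. The normalization \eqref{eq:normcond} is read directly off the asymptotic expansion \eqref{m2infp} at $\infty_+$: the leading coefficients of $m_1$ and $m_2$ are $A_+(n,t)$ and $A_+(n,t)^{-1}$, and positivity of $A_+(n,t)$ follows from \eqref{defABpm} since $a(j,t)/a_q(j,t)>0$. The main obstacle will be the bookkeeping in item (ii): one must track which sheet each Dirichlet point lives on, the sign convention in \eqref{eq:resT}, and the relationship between $\psi_{q,\pm}$ and $\psi_q(p,n,t)/\psi_q(p^*,n,t)$; once these conventions are pinned down, each of the four items reduces to a short computation already essentially contained in \cite{emt,emt2}.
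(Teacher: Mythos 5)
Your proposal is correct and follows essentially the same route as the paper's own (very terse) proof: the jump condition from the scattering relations \eqref{relscat} and unitarity \eqref{reltrpm}, the pole conditions from the residue formula \eqref{eq:resT} together with $\psi_-(\rho_j,\cdot)=c_j\psi_+(\rho_j,\cdot)$, symmetry by construction, and the normalization read off \eqref{m2infp}. The residue computation you sketch at $\rho_j$ and the deduction of the $\rho_j^*$ condition via \eqref{eq:symcond} are exactly the intended verification.
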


\begin{proof}
We already derived the jump condition \eqref{eq:jumpcond}. The pole conditions follow since $T(p)$
is meromorphic in $\mathbb{M}\backslash\Sigma$ with simple poles at $\rho_j$ and residues given
by \eqref{eq:resT}. The symmetry condition holds by construction and the normalization \eqref{eq:normcond}
is immediate from \eqref{m2infp}.
\end{proof}

\begin{remark}\label{rem:stepiso}
We note that the same proof works even if there are different spatial
asymptotics as $n\to\pm\infty$ as long as they lie in the same isospectral class. In fact,
following \cite{emt3}, we now have two different background operators $H_q^\pm$
and we will denote the corresponding Baker-Akhiezer functions by $\psi_q^\pm(p,n,t)$.
Using
\be
m(p,n,t)= \left\{\begin{array}{c@{\quad}l}
\begin{pmatrix} T_+(z) \frac{\psi_-(z,n,t)}{\psi^-_{q,-}(z,n,t)}  & \frac{\psi_+(z,n,t)}{\psi^+_{q,+}(z,n,t)} \end{pmatrix},
& p=(z,+)\\
\begin{pmatrix} \frac{\psi_+(z,n,t)}{\psi^+_{q,+}(z,n,t)} & T_+(z) \frac{\psi_-(z,n,t)}{\psi^-_{q,-}(z,n,t)} \end{pmatrix}, 
& p=(z,-)
\end{array}\right.
\ee
in place of \eqref{defm} one easily checks that Theorem~\ref{thm:vecrhp} still holds in this case. The only difference
now is that the (right) scattering data $\rho_j$ and $R_+(z)$ will not satisfy the algebraic constraints given in \cite{tag}.
\end{remark}

For our further analysis it will be convenient to rewrite the pole condition as a jump condition.
Choose $\eps$ so small that the discs $|\pi(p)-\rho_j|<\eps$ are inside the upper sheet and
do not intersect. Then redefine $m$ in a neighborhood of $\rho_j$ respectively $\rho_j^*$ according to
\be\label{eq:redefm}
m(p) = \begin{cases}
m(p) \begin{pmatrix} 1 & 0 \\ \frac{R^{1/2}_{2g+2}(\rho_j)}{\prod_{k=1}^g (\rho_j-\mu_k)}
\frac{\gam_j}{\pi(p)-\rho_j} \frac{\psi_q(p,n,t)}{\psi_q(p^*,n,t)} & 1 \end{pmatrix},  &
\begin{smallmatrix}|\pi(p)-\rho_j|<\eps\\ p\in\Pi_+\end{smallmatrix},\\
m(p) \begin{pmatrix} 1 & \frac{R^{1/2}_{2g+2}(\rho_j)}{\prod_{k=1}^g (\rho_j-\mu_k)}
\frac{\gam_j}{\pi(p)-\rho_j} \frac{\psi_q(p,n,t)}{\psi_q(p^*,n,t)} \\
0 & 1 \end{pmatrix},  &
\begin{smallmatrix}|\pi(p)-\rho_j|<\eps\\ p\in\Pi_-\end{smallmatrix},\\
m(p), & \text{else}.\end{cases}
\ee
Then a straightforward calculation shows

\begin{lemma}\label{lem:pctoj}
Suppose $m(p)$ is redefined as in \eqref{eq:redefm}. Then $m(z)$ is meromorphic away from
$\Sigma$ and satisfies \eqref{eq:jumpcond}, \eqref{eq:symcond}, \eqref{eq:normcond},
the divisor condition change according to
\be
(m_1) \ge -\dimus{n,t}, \qquad (m_2) \ge -\dimu{n,t}
\ee
and the pole conditions are replaced by the jump conditions
\be \label{eq:jumpcond2}
\aligned
m_+(p) &= m_-(p) \begin{pmatrix} 1 & 0 \\
\frac{R^{1/2}_{2g+2}(\rho_j)}{\prod_{k=1}^g (\rho_j-\mu_k)}
\frac{\gam_j}{\pi(p)-\rho_j} \frac{\psi_q(p,n,t)}{\psi_q(p^*,n,t)} & 1 \end{pmatrix},\quad p\in\Sigma_\eps(\rho_j),\\
m_+(p) &= m_-(p) \begin{pmatrix} 1 & -\frac{R^{1/2}_{2g+2}(\rho_j)}{\prod_{k=1}^g (\rho_j-\mu_k)}
\frac{\gam_j}{\pi(p)-\rho_j} \frac{\psi_q(p,n,t)}{\psi_q(p^*,n,t)} \\
0 & 1 \end{pmatrix},\quad
p\in\Sigma_\eps(\rho_j^*),
\endaligned
\ee
where 
\be
\Sigma_\eps(p) = \{ q \in \Pi_\pm\,:\, |\pi(q)-z| =\eps \}, \qquad p=(z,\pm),
\ee
is a small circle around $p$ on the same sheet as $p$. It is oriented counterclockwise on the upper sheet and
clockwise on the lower sheet.
\end{lemma}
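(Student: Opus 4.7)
The proof is largely bookkeeping: one verifies, one item at a time, that the function $\tilde m$ obtained by the local replacement \eqref{eq:redefm} continues to satisfy the properties of Theorem~\ref{thm:vecrhp}, with the poles at the eigenvalues exchanged for jumps across the small circles $\Sigma_\eps(\rho_j)$ and $\Sigma_\eps(\rho_j^*)$. The plan is to proceed location by location.

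First, outside the discs $\{p\in\Pi_+:|\pi(p)-\rho_j|<\eps\}$ and their sheet reflections, $\tilde m\equiv m$, and the discs are disjoint from $\Sigma$ and from $\infty_+$. Hence the jump \eqref{eq:jumpcond} on $\Sigma$, the symmetry \eqref{eq:symcond}, the normalization \eqref{eq:normcond}, and meromorphy off $\Sigma\cup\bigcup_j(\Sigma_\eps(\rho_j)\cup\Sigma_\eps(\rho_j^*))$ are inherited verbatim from Theorem~\ref{thm:vecrhp}.

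Next, the new jumps \eqref{eq:jumpcond2} are immediate from the construction. Writing
\[
B_j(p) = \begin{pmatrix} 1 & 0 \\ \frac{R_{2g+2}^{1/2}(\rho_j)}{\prod_k(\rho_j-\mu_k)}\frac{\gam_j}{\pi(p)-\rho_j}\frac{\psi_q(p,n,t)}{\psi_q(p^*,n,t)} & 1 \end{pmatrix},
\]
the original $m$ extends continuously across $\Sigma_\eps(\rho_j)$ (its only singularity inside is the simple pole at $\rho_j$); with the stated counterclockwise orientation on $\Pi_+$ the interior is the $+$-side, so $\tilde m_+ = m\,B_j$ and $\tilde m_- = m$, giving $\tilde m_+ = \tilde m_-\, B_j$, which is the first line of \eqref{eq:jumpcond2}. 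The circle $\Sigma_\eps(\rho_j^*)$ is treated identically with the upper-triangular multiplier, remembering that the clockwise orientation on $\Pi_-$ again puts the interior on the $+$-side.

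For the divisor statement, view $(m_1,m_2)$ as a row vector. Inside the upper disc one has $(\tilde m_1,\tilde m_2)=(m_1+f_j m_2,\,m_2)$ with $f_j$ the $(2,1)$-entry of $B_j$; the first line of the pole condition \eqref{eq:polecond} is exactly the statement that $m_1+f_j m_2$ is regular at $\rho_j$, so the former $-\dirho$ contribution to the divisor bound on $m_1$ is no longer necessary. Outside the disc $\tilde m_1=m_1$ and $\rho_j$ is not in the domain, hence $(\tilde m_1)\ge -\dimus{n,t}$ globally; the unchanged entry $\tilde m_2=m_2$ has no pole at $\rho_j$ to begin with. The mirror argument on the lower disc, invoking the second line of \eqref{eq:polecond}, yields $(\tilde m_2)\ge -\dimu{n,t}$.

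The only mildly delicate item is the persistence of the symmetry \eqref{eq:symcond}: one must check that the lower-triangular multiplier on $\Pi_+$ near $\rho_j$ is the $\sigI$-conjugate of the upper-triangular multiplier on $\Pi_-$ near $\rho_j^*$. This is precisely the algebraic content of the two lines of \eqref{eq:polecond} being compatible under $m(p^*)=m(p)\sigI$ (the same residue data \eqref{eq:resT} generates both), so the identity reduces to a short matrix computation together with the sheet-exchange behaviour of $\psi_q(p,n,t)/\psi_q(p^*,n,t)$. This is the main (but still routine) obstacle; everything else is direct substitution.
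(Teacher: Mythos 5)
Your strategy is exactly the ``straightforward calculation'' the paper has in mind (it prints no proof of this lemma), and most of it is carried out correctly: locality of the modification, the derivation of the first jump in \eqref{eq:jumpcond2}, and the divisor improvement via the two lines of \eqref{eq:polecond} are all fine. However, one step fails as written: the orientation bookkeeping on the lower-sheet circles. With the convention that $m_+$ is the boundary value from the left of the oriented contour, the counterclockwise circle $\Sigma_\eps(\rho_j)\subset\Pi_+$ indeed has its interior on the $+$-side, which gives the first line of \eqref{eq:jumpcond2}. But the clockwise circle $\Sigma_\eps(\rho_j^*)\subset\Pi_-$, read in the holomorphic coordinate $z=\pi(p)$, has its interior on the \emph{right}, i.e.\ on the $-$-side. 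Your claim that the interior is ``again the $+$-side'' would produce the jump $m_+=m_-\left(\begin{smallmatrix}1&+f\\ 0&1\end{smallmatrix}\right)$, with $f$ the off-diagonal entry from \eqref{eq:redefm}, whereas the lemma asserts the entry $-f$. The stated jump comes out precisely because there $m_-$ is the (redefined) interior value and $m_+$ the unmodified exterior value, so the jump matrix is the \emph{inverse} of the multiplier in \eqref{eq:redefm}. This is a local sign error rather than a flaw in the approach, but as written your argument does not yield the second formula of \eqref{eq:jumpcond2}.

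A second caution concerns the symmetry step you defer to a ``short matrix computation.'' Writing $f(p)$ for the $(2,1)$-entry of the multiplier near $\rho_j$, the required identity $\sigI\left(\begin{smallmatrix}1&f(p^*)\\ 0&1\end{smallmatrix}\right)=\left(\begin{smallmatrix}1&0\\ f(p)&1\end{smallmatrix}\right)\sigI$ forces $f(p^*)=f(p)$. Since $\psi_q(p,n,t)/\psi_q(p^*,n,t)$ is \emph{not} invariant under $p\mapsto p^*$ (it is inverted), this only works if the ratio appearing in the $\rho_j^*$-formula of \eqref{eq:redefm} is read as the same function of $z=\pi(p)$ as in the $\rho_j$-formula, i.e.\ evaluated on the upper sheet; taken literally on the lower sheet the two multipliers are not $\sigI$-conjugate. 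So the point you flag as delicate is indeed where the computation must be done with care, and your proposal does not actually resolve it.
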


Next we turn to uniqueness of the solution of this vector Riemann--Hilbert problem.
This will also explain the
reason for our symmetry condition. We begin by observing that if
there are $g+1$ points $p_j\in\M$, such that $m(p_j)=\rN$, then by Riemann-Roch (\cite[Thm.~A.2]{tjac}),
\be
r(-\sum_{j=1}^{g+1} \di_{p_j}) = \deg(\sum_{j=1}^{g+1} \di_{p_j}) + 1 - g + i(\sum_{j=1}^{g+1} \di_{p_j}) \ge 2,
\ee
there are at least two linearly independent functions $h(p)$ with $(h) \ge - \sum_j \di_{p_j}$.
In particular, there is a non-constant function $h(p)$ and $n(p)= h(p) m(p)$
satisfies the same jump and pole conditions as $m(p)$. However, it will in general
violate the symmetry condition! In fact, the symmetry condition \eqref{eq:symcond} requires
$h(p)=h(p^*)$. Hence $h(p)=\ti{h}(\pi(p))$ is the lift of a meromorphic function $\ti{h}(z)$ on $\C$.
In particular, if $p_j$ is a branch point, it will have an even order pole. So if we choose the points $p_j$
to be different branch points, this yields a contradiction since at least one of the points $p_j$ must
be a non-removable pole. Thus, without the symmetry condition,
the solution of our vector Riemann--Hilbert problem will not be unique in such a situation. Moreover, 
such a situation can be indeed created starting with (e.g.) a one soliton solution and using the
Dirichlet commutation method (\cite{tsdj}, \cite[Sect.~11.8]{tjac}) to place all $g+1$ Dirichlet
eigenvalues at the band edges (which equal the branch points of our Riemann surface). 

\begin{lemma}[One soliton solution]\label{lem:singlesoliton}
Suppose there is only one eigenvalue and a vanishing reflection coefficient, that is,
$\mathcal{S}_+(H(t))=\{ R(p)\equiv 0,\; p\in\Sigma; \: (\rho, \gam) \}$.
Let
\begin{equation}
c_{q,\gam}(\rho,n,t) = 1 + \gam\!\! \sum_{j=n+1}^\infty \psi_q(\rho,j,t)^2
= 1 + \gam W_{q,(n,t)}(\psi_q'(\rho,.,t),\psi_q(\rho,.,t))
\end{equation}
and
\begin{equation}
\psi_{q,\gam}(p,n,t) = \frac{c_{q,\gam}(\rho,n,t) \psi_q(p,n,t) +
\frac{\gam}{z-\rho} \psi_q(\rho,n,t)
W_{q,(n,t)}(\psi_q(\rho,.,t),\psi_q(p,.,t))}{\sqrt{c_{q,\gam}(\rho,n-1,t)
c_{q,\gam}(\rho,n,t)}},
\end{equation}
$p=(z,\pm)$.
Here $W_{q,(n,t)}(f,g)=a_q(n,t)(f(n)g(n+1)-f(n+1)g(n))$ is the usual Wronski determinant
and the prime denotes a derivate with respect to $\rho$.

Then the unique solution of the Riemann--Hilbert problem \eqref{eq:jumpcond}--\eqref{eq:normcond}
is given by
\be\label{eq:oss}
m_0(p) = \begin{pmatrix} f(p^*,n,t) & f(p,n,t) \end{pmatrix}, \qquad
\nn f(p,n,t) = \frac{\psi_{q,\gam}(p,n,t)}{\psi_q(p,n,t)}.
\ee
In particular,
\be
A_+(n,t) = \sqrt{\frac{c_{q,\gam}(\rho,n-1,t)}{c_{q,\gam}(\rho,n,t)}}, \quad
B_+(n,t) = -\gam \frac{a_q(n,t) \psi_q(\rho,n,t) \psi_q(\rho,n+1,t)}{2 c_{q,\gam}(\rho,n,t)}.
\ee
\end{lemma}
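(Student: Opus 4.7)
The plan is to verify directly that the proposed $m_0$ solves the Riemann--Hilbert problem from Theorem~\ref{thm:vecrhp} in the special case $R \equiv 0$ with a single eigenvalue, and then appeal to the uniqueness discussion immediately preceding the lemma to conclude. The key observation is that $\psi_{q,\gam}(p,n,t)$ is the Baker--Akhiezer function associated with the one-soliton lattice obtained from $(a_q,b_q)$ by the single (Dirichlet) commutation method at energy $\rho$ with coupling constant $\gam$, so it inherits from $\psi_q(p,n,t)$ the same essential singularities at $\infty_\pm$ and the same boundary behavior on $\Sigma$. Thus $f(p,n,t)=\psi_{q,\gam}(p,n,t)/\psi_q(p,n,t)$ is a genuinely meromorphic function on $\M\setminus\Sigma$, and its jump across $\Sigma$ is trivial, which matches the jump matrix $J$ for $R\equiv 0$.

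First I would track the divisor of $f(p,n,t)$ on $\M$. The numerator $\psi_{q,\gam}$ develops a simple pole at $\rho^*$ because the factor $\gam/(z-\rho)$ in its definition is not killed there (whereas at $\rho$ the Wronskian $W_{q,(n,t)}(\psi_q(\rho,\cdot),\psi_q(\rho,\cdot))$ vanishes, making it removable); the denominator vanishes exactly at the Dirichlet divisor $\dimu{n,t}$. A parallel accounting shows $f(p^*,n,t)$ has the symmetric pole structure along $\dimus{n,t}$ and at $\rho$. Combining this with the definition $\res_\rho T = -R_{2g+2}^{1/2}(\rho)/\prod(\rho-\mu_k)\cdot \gam$ from \eqref{eq:resT} and the explicit form of $\psi_{q,\gam}$ near $\rho$, the two pole conditions in \eqref{eq:polecond} reduce to the elementary identity that the residue of $f$ at $\rho^*$ produces precisely the coefficient $R_{2g+2}^{1/2}(\rho)\gam/\prod(\rho-\mu_k)$ times $\psi_q(p,n,t)/\psi_q(p^*,n,t)$ evaluated at $\rho^*$.

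Next I would check the remaining RHP items. The symmetry $m_0(p^*)=m_0(p)\sigI$ is immediate from the column definition. The normalization $m_{0,1}(\infty_+)\cdot m_{0,2}(\infty_+)=1$ follows from expanding $\psi_{q,\gam}/\psi_q$ at $\infty_+$ and $\infty_-$ together with the defining identity $c_{q,\gam}(\rho,n,t)=1+\gam W_{q,(n,t)}(\psi_q'(\rho,\cdot),\psi_q(\rho,\cdot))$, which forces the leading asymptotic coefficients at $\infty_+$ and $\infty_-$ to multiply to one. Uniqueness is then exactly the Riemann--Roch argument sketched in the paragraph before the lemma: any second solution would yield a nontrivial lift of a meromorphic function on $\C$ with only simple poles at branch points, contradicting the symmetry constraint.

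Finally, the formulas for $A_+(n,t)$ and $B_+(n,t)$ drop out by matching the $z^{-1}$-expansion of \eqref{m2infp} to the Taylor expansion of $f(p,n,t)$ at $\infty_+$. Reading off the leading term gives $A_+(n,t)=\sqrt{c_{q,\gam}(\rho,n-1,t)/c_{q,\gam}(\rho,n,t)}$ from the normalization factor in $\psi_{q,\gam}$, while the next-order term produces $B_+(n,t)$ once one uses the Wronskian expansion $W_{q,(n,t)}(\psi_q(\rho,\cdot),\psi_q(p,\cdot))/\psi_q(p,n,t) = (z-\rho)\cdot(\text{something tending to } -a_q(n,t)\psi_q(\rho,n+1,t)/\psi_q(\rho,n,t))$ as $p\to\infty_+$. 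The main obstacle is the bookkeeping at $\rho$ and $\rho^*$: one has to verify carefully that the cancellation in the numerator of $\psi_{q,\gam}$ at $p=\rho$ is exactly right so that $f(p,n,t)$ has no spurious pole there, and that the residue at $\rho^*$ matches the prefactor in \eqref{eq:polecond} on the nose, including the $\psi_q/\psi_q^*$ factor.
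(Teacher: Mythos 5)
Your direct verification of the jump (trivial for $R\equiv0$), the divisor and pole bookkeeping at $\rho$ and $\rho^*$, the symmetry and normalization checks, and the extraction of $A_+$, $B_+$ from the expansion at $\infty_+$ is exactly the route the paper takes (the paper also notes the formulas can alternatively be read off from the double commutation method). The residue computation you describe -- removability at $\rho$ because $W_{q,(n,t)}(\psi_q(\rho,\cdot),\psi_q(\rho,\cdot))=0$, a genuine simple pole at $\rho^*$ because $W_{q,(n,t)}(\psi_q(\rho,\cdot),\psi_q(\rho^*,\cdot))=-R_{2g+2}^{1/2}(\rho)/\prod_k(\rho-\mu_k)$, and the cancellation against the prefactor in \eqref{eq:polecond} -- is precisely the paper's calculation.

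The gap is in the uniqueness step. The discussion immediately preceding the lemma does not prove uniqueness; it proves the opposite, namely that \emph{without} the symmetry condition the solution fails to be unique in degenerate situations, and it explains why the symmetry condition is imposed. Your proposed argument -- a second solution yields a lift of a meromorphic function on $\C$ with simple poles at branch points, contradicting symmetry -- presupposes that the second solution $\ti{m}_0=\begin{pmatrix}\ti{f}(p^*) & \ti{f}(p)\end{pmatrix}$ is a scalar multiple $h(p)\,m_0(p)$ of the first; you have not excluded the case where the two solutions are linearly independent as vectors (in the paper's Section~4 uniqueness theorem that case is handled separately, by identifying $\det(m_0,\ti{m}_0)$ with $P(z)/\prod_j(z-\mu_j)$ and deriving a contradiction from $\det M(p^*)=-\det M(p)$). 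Moreover, the poles of $h=\ti{f}/f$ a priori sit at the zeros of $f$, which are the Dirichlet eigenvalues of $H_{q,\gam}$ and not branch points in general; it takes the separate analysis of common zeros of $f(p)$ and $f(p^*)$ (cf.\ Lemma~\ref{lem:resonant}) to see that only branch points, with simple vanishing, can occur. The paper's own uniqueness proof avoids all of this: since $R\equiv 0$ there is no jump, so $\ti{f}$ is globally meromorphic with $(\ti{f})\ge-\dimu{n,t}-\di_{\rho^*}$; this divisor is nonspecial, so Riemann--Roch gives $r(-\dimu{n,t}-\di_{\rho^*})=2$ and the space is spanned by $f$ and the constants; hence $\ti{f}=\alpha f+\beta$, the pole condition forces $\beta=0$, and the normalization forces $\alpha=1$. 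You should replace your uniqueness paragraph by this dimension count.
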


\begin{proof}
This follows directly from \eqref{defm} after inserting the formulas for the Jost functions obtained
from (e.g.) the double commutation method found in \cite{gtjc} or \cite[Sect.~14.5]{tjac}
(cf.\ also \cite{emt4}).

Alternatively, it can also be easily checked directly, that \eqref{eq:oss}
solves \eqref{eq:jumpcond}--\eqref{eq:normcond}. In fact, except for the pole conditions
everything is straightforward. The pole conditions follow from
\begin{align*}
\lim_{p\to \rho} (z-\rho) f(p^*) &= \frac{\gam(\rho,n,t)}{\sqrt{c_{q,\gam}(\rho,n-1,t)
c_{q,\gam}(\rho,n,t)}}
W_n(\psi_q(\rho,.,t),\psi_q(\rho^*,.,t))\\
&= -\frac{\gam(\rho,n,t)}{\sqrt{c_{q,\gam}(\rho,n-1,t)
c_{q,\gam}(\rho,n,t)}} \frac{R^{1/2}_{2g+2}(\rho)}{\prod_{k=1}^g (\rho-\mu_k)},
\end{align*}
where
\[
\gam(p,n,t)= \gam \frac{\psi_q(p,n,t)}{\psi_q(p^*,n,t)} = \gam \Theta(p,n,t) \E^{t \phi(p)},
\]
and
\begin{align*}
\lim_{p\to \rho} f(p) &= \frac{c_{q,\gam}(\rho,n,t) + \gam
W_n(\psi_q(\rho,.,t),\psi_q'(\rho,.,t))}{\sqrt{c_{q,\gam}(\rho,n-1,t) c_{q,\gam}(\rho,n,t)}}\\
&= \frac{1}{\sqrt{c_{q,\gam}(\rho,n-1,t) c_{q,\gam}(\rho,n,t)}}.
\end{align*}
The formulas for $A_+(n,t)$ and $B_+(n,t)$ follow after expanding around $p=\infty_+$ and
comparing with \eqref{m2infp}.

To see uniqueness, let $\ti{m}_0(p)$ be a second solution which must be of the form
$\ti{m}_0(p) = \begin{pmatrix} \ti{f}(p^*) & \ti{f}(p) \end{pmatrix}$ by the symmetry condition.
Since the divisor $\dimus{n,t}+\dirho$ is nonspecial (\cite[Lem.~A.20]{tjac}), that is,
its index of speciality vanishes, $i(\dimus{n,t}+\dirho)=0$. Thus
the Riemann--Roch theorem (\cite[Thm.~A.2]{tjac}),
\[
r(-\dimus{n,t}-\dirho) = \deg(\dimus{n,t}+\dirho) + 1 - g + i(\dimus{n,t}+\dirho) = 2,
\]
implies that there are two linearly independent
functions satisfying $(\ti{f}) \ge -\dimus{n,t} - \dirho$. One is $f(p)$ and the other one
is the constant function. This implies $\ti{f}(p) = \alpha f(p) + \beta$ for some $\alpha,\beta\in \C$.
But the pole condition implies $\beta=0$ and the normalization condition implies $\alpha=1$.
\end{proof}

Note 
\begin{align}  \nn
a_{q,\gam}(n,t) &= a_q(n,t) \frac{\sqrt{c_{q,\gam}(\rho,n-1,t)
c_{q,\gam}(\rho,n+1,t)}}{c_{q,\gam}(\rho,n,t)},\\ \label{dc}
b_{q,\gam}(n,t) &= b_q(n,t) + \gam \partial^* 
\frac{a_q(n,t) \psi_q(\rho,n,t) \psi_q(\rho,n+1,t)}{c_{q,\gam}(\rho,n,t)},
\end{align}
where $\partial^*u(n)=u(n-1)-u(n)$, and that $a_{q,\gam}(n,t)$, $b_{q,\gam}(n,t)$ is centered at
\be
2 \alpha(\rho)(n - v(\rho) t) + \ln(\gam) = 0,
\ee
where
\be\label{solvel}
\alpha(\rho)= \re \int_{E_0}^{(\rho,+)}\!\! \omega_{\infty_+\,\infty_-}, \qquad
v(\rho)=  -\frac{1}{\alpha(\rho)} \re \int_{E_0}^{(\rho,+)}\!\! \Omega_0.
\ee
Note, however, that $H_{q,\gam}$ looks asymptotically like $H_q$ as $n\to+\infty$ but not
as $n\to-\infty$ (cf.\ \cite{emt4}).

Since $f$ has $g+1$ poles, there are also $g+1$ zeros which are given by the
Dirichlet eigenvalues of $H_{q,\gam}(t)$. Hence there is precisely one in the closure of each
interior gap of the spectrum $\sig(H_{q,\gam})=\sig(H_q) \cup \{\rho\}$.
Moreover, observe $f(p_1)=f(p_1^*)=0$ if and only if
$W_n(\psi_{q,+}(\rho),\psi_{q,+}(z_1))=W_n(\psi_{q,+}(\rho),\psi_{q,-}(z_1))=0$.
Hence we also must have $W_n(\psi_{q,+}(z_1),\psi_{q,-}(z_1))=0$, that is, $z_1\in\{ E_j \}$.
Furthermore, even in the general case $m(p_1)=\rN$ can only occur at $z_1  \in \{ E_j \}$ as the
following lemma shows.

\begin{lemma}\label{lem:resonant}
For $m(p)$ defined in \eqref{defm} set
\be \label{defhm}
\hat{m}(p)= m(p) \begin{pmatrix} \theta(\ulz(p^*,n,t)) & 0\\ 0 & \theta(\ulz(p,n,t)) \end{pmatrix}.
\ee
If $\hat{m}(p_1) = \rN$ for $m$ defined as in \eqref{defm}, then $z_1  \in \{ E_j \}$. Moreover,
the zero of at least one component is simple, in the sense that $\hat{m}_k(p)^{-1}=O((z-z_1)^{-1/2})$,
in this case.
\end{lemma}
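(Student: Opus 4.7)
I will start by verifying that $\hat m$ inherits the symmetry relation from $m$. A direct computation using the fact that the diagonal factor in \eqref{defhm} transposes to itself under $p\leftrightarrow p^{*}$ shows $\hat m(p^{*})=\hat m(p)\sigI$. Consequently, if $\hat m(p_{1})=\rN$, then $\hat m(p_{1}^{*})=\rN$ as well, and the assumption $z_{1}\notin\{E_{j}\}$ means $p_{1}\neq p_{1}^{*}$, so both components of $\hat m$ vanish at two distinct points of $\M$ above $z_{1}$.

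The next step is to translate the vanishing of $\hat m$ into vanishing of Jost functions. The factor $\theta(\ulz(p^{*},n,t))$ is exactly what cancels the zeros of $\psi_{q,-}(z,n,t)$ sitting in the denominator of $m_{1}$ (and analogously for $\hat m_{2}$). Hence, on $\Pi_{+}$ and away from $\mu_{k}(n,t)$, one can write $\hat m_{1}(p)=T(z)\psi_{-}(z,n,t)\,h_{1}(p)$ and $\hat m_{2}(p)=\psi_{+}(z,n,t)\,h_{2}(p)$, where $h_{1},h_{2}$ are locally holomorphic and nowhere zero; thus $\hat m(p_{1})=\rN$ is equivalent to $T(z_{1})\psi_{-}(z_{1},n,t)=0$ and $\psi_{+}(z_{1},n,t)=0$.

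The core of the argument is now a Wronskian contradiction. Suppose $z_{1}\notin\{E_{j}\}$. If $z_{1}$ is not an eigenvalue, then $T(z_{1})$ is finite and the equivalence above yields $\psi_{+}(z_{1},n,t)=\psi_{-}(z_{1},n,t)=0$. Since the solutions of the Jacobi equation vanishing at the site $n$ form a one-dimensional subspace, $\psi_{+}$ and $\psi_{-}$ are proportional at $z_{1}$, so the (constant in $n$) Wronskian $W_{n}(\psi_{+},\psi_{-})$ vanishes at $z_{1}$; but this Wronskian vanishes on $\C\setminus\sigma(H_{q})$ only at the eigenvalues $\rho_{j}$, contradicting the assumption. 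In the remaining case $z_{1}=\rho_{j}$, the vanishing $\hat m_{2}(\rho_{j})=0$ combined with the identity $\psi_{-}(\rho_{j},n,t)=c_{j}\psi_{+}(\rho_{j},n,t)$ and the residue formula \eqref{eq:resT} for $T$ implies, using the pole condition \eqref{eq:polecond}, that also $\psi_{+}$ must vanish to higher order at $\rho_{j}$ than a nontrivial eigenfunction allows, again a contradiction. Thus $z_{1}\in\{E_{j}\}$.

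For the simple-zero statement, I work in the local uniformizer $\zeta=\sqrt{z-E_{j}}$ at a branch point. If both $\hat m_{1}$ and $\hat m_{2}$ vanished to order $\geq 2$ in $\zeta$, then by the Jost-function identification above $\psi_{+}(\cdot,n,t)$ would vanish together with its first derivative in $\zeta$ at $E_{j}$, forcing $W_{n}(\psi_{+},\psi_{-})$ to have a zero of order $\geq 2$ at $E_{j}$; this is impossible since the Wronskian has at worst a simple zero at a band edge on the finite-gap background. Hence $\hat m_{k}^{-1}=O((z-z_{1})^{-1/2})$ for at least one $k$. The main obstacle is the case $z_{1}=\rho_{j}$, which demands careful residue bookkeeping to extract the forced vanishing of $\psi_{+}$ and its derivatives from the pole condition; the branch-point analysis is delicate but ultimately reduces to the order-of-vanishing bound for the Wronskian, as indicated.
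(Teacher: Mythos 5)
Your overall skeleton matches the paper's: reduce $\hat{m}(p_1)=\rN$ to linear dependence of the Jost solutions, dispose of the eigenvalue case by a contradiction involving the simple pole of $T$, and obtain simplicity at a band edge from the order of the zero of the Wronskian in the local coordinate $\zeta=\sqrt{z-z_1}$. However, the two steps that carry the actual content are asserted rather than proved, and in both places the missing ingredient is precisely what the paper supplies.

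First, in the case $z_1=\rho_j$ you say that the residue formula and the pole condition force $\psi_+$ to ``vanish to higher order than a nontrivial eigenfunction allows,'' but you never identify why $\psi_\pm(\cdot,n)$ cannot have a zero of order $\ge 2$ in $z$ at $\rho_j$ (vanishing at the single site $n$ says nothing about the order of vanishing in the spectral parameter). The paper's mechanism is that the diagonal Green's function $g(z,n)=W(z)^{-1}\psi_+(z,n)\psi_-(z,n)$ is Herglotz, hence has at most a simple zero at $\rho_j$, while $W$ has exactly a simple zero there (by $\frac{d}{dz}W(z)|_{z=\rho_j}\neq 0$); therefore each of $\psi_\pm(\cdot,n)$ has at most a simple zero, and the simple pole of $T$ then makes $T(z)\psi_-(z,n)$ nonvanishing at $\rho_j$ -- the contradiction. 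Without the Herglotz input (or an equivalent), your argument does not close. Second, at a band edge you invoke ``the Wronskian has at worst a simple zero at a band edge on the finite-gap background'' as a known fact, but this is the heart of the second claim and moreover concerns the Wronskian of the \emph{perturbed} Jost solutions, not the background ones. The paper proves it: using $W(0)=0$ to write $\psi_+(0)=c\,\psi_-(0)$ with $c$ real, it computes $W'(0)=c^{-1}W(\psi_+'(0),\psi_+(0))+c\,W(\psi_-(0),\psi_-'(0))$ and evaluates each Wronskian by sending $n\to\pm\infty$, where $\psi_\pm$ may be replaced by the background Baker--Akhiezer functions, yielding the explicit nonzero limit $\frac{c+c^{-1}}{2\prod_j(z_1-\mu_j)}\lim_{\zeta\to 0}\zeta^{-1}R^{1/2}_{2g+2}(z_1+\zeta^2)$. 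You should either reproduce this computation or cite a precise statement that covers the perturbed operator; as written, this step is a restatement of the conclusion. (A smaller omission: your dichotomy ``eigenvalue or outside $\sigma(H_q)$'' does not address $z_1$ in the interior of a band, where one should note that $|T|\le 1$ from \eqref{reltrpm} keeps the Wronskian away from zero.)
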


Note that the theta functions in \eqref{defhm} are used to cancel the poles at $\uhmu$ respectively
$\uhmu^*$ such that their contribution is separated from the rest. The resulting function $\hat{m}(p)$
is multivalued (since the theta functions are), but this is of no relevance for our purpose here.

\begin{proof}
We will drop the dependence on $t$ for notational simplicity.
By \eqref{defm} the condition $\hat{m}(p_1) = \rN$ implies that the Jost solutions $\psi_-(z_1,n)$ and
$\psi_+(z_1,n)$ are linearly dependent. This can only happen, at a band edge,
$z_1 \in\{ E_j \}$, or at an eigenvalue $z_1=\rho_j$. 

We begin with the case $z_1=\rho_j$. In this case the derivative of the Wronskian
$W(z)=a(n)(\psi_+(z,n)\psi_-(z,n+1)-\psi_+(z,n+1)\psi_-(z,n))$ does not vanish
$\frac{d}{dz} W(z) |_{z=z_1} \ne 0$ (\cite[(6.11)]{emt}). Moreover,
the diagonal Green's function $g(z,n)= W(z)^{-1} \psi_+(z,n) \psi_-(z,n)$ is
Herglotz and hence can have at most a simple zero at $z=\rho_j$. Hence, if
$\psi_+(\rho_j,n) = \psi_-(\rho_j,n) =0$, both can have at most a simple zero at $z=\rho_j$.
But $T(z)$ has a simple pole at $\rho_j$ and hence $T(z) \psi_-(z,n)$ cannot
vanish at $z=\rho_j$, a contradiction.

It remains to show that one zero is simple in the case $z_1\in\{ E_j \}$. To show this
we will introduce a local coordinate $\zeta=\sqrt{z-z_1}$ and show that
$\frac{d}{d\zeta} W(\zeta) |_{\zeta=0} \ne 0$ in this case as follows: 
We will consider every function of $z$ as a function of $\zeta$ and, by abuse of notation,
denote it by the same name. Moreover, we will assume $z_1\ne \mu_j$ for all $j$.
Then, note that $\psi_\pm'(\zeta)$ (where $\prime$ denotes the derivative with respect to
$\zeta$) again solves $H\psi_\pm'(0) = z_1 \psi_\pm'(0)$ if $z_1\in\{ E_j \}$. Moreover, by
$W(0)=0$ we have $\psi_+(0) = c \psi_-(0)$ for some constant $c$ (independent of $n$).
Thus we can compute
\begin{align*}
W'(0) &= W(\psi_+'(0),\psi_-(0)) + W(\psi_+(0),\psi_-'(0))\\
&= c^{-1} W(\psi_+'(0),\psi_+(0)) + c W(\psi_-(0),\psi_-'(0))
\end{align*}
by letting $n\to+\infty$ for the first and $n\to-\infty$ for the second Wronskian (in which case we can
replace $\psi_\pm(0,n)$ by $\psi_{q,\pm}(0,n)$),
which gives (cf.\ \cite[Chap.~6]{tjac})
\[
W'(0) = \frac{c+c^{-1}}{2 \prod_{j=1}^g(z_1-\mu_j)} \lim_{\zeta\to 0} \frac{R^{1/2}_{2g+2}(z_1+\zeta^2)}{\zeta} \ne 0.
\]
Hence the Wronskian has a simple zero. But if both functions had more than
simple zeros, so would the Wronskian, a contradiction. The case where $z_1= \mu_j$ for some $j$ is similar.
Just observe that in this case both $\psi_\pm$ as well as $\psi_{q,\pm}$ have singularities, which however
cancel in the quotient appearing in $\hat{m}(p)$.
\end{proof}

\section{A uniqueness result for symmetric vector Riemann--Hilbert problems}

In this section we want to investigate uniqueness for the meromorphic vector Riemann--Hilbert problem 
\begin{align}\nn
& m_+(p) = m_-(p) v(p), \qquad z\in \Sigma,\\ \nn
& (m_1) \ge -\dimuzs, \qquad (m_2) \ge -\dimuz\\ \label{eq:rhp4m}
& m(p^*) = m(p) \sigI,\\ \nn
& m(\infty_+) = \begin{pmatrix} 1 & m_2(\infty_+)\end{pmatrix}.
\end{align}
where $\Sigma$ is a nice oriented contour (see Hypothesis~\ref{hyp:sym}), symmetric with respect to
$p\mapsto p^\ast$, and $v$ is continuous satisfying
\be
v(p^*) = \sigI v(p) \sigI,\quad p\in\Sigma.
\ee
The normalization used here will be more convenient than \eqref{eq:normcond}.
In fact, \eqref{eq:normcond} will be satisfied by $m_2^{-1/2}(\infty_+) m(p)$.

Now we are ready to show that the symmetry condition in fact guarantees uniqueness.

\begin{theorem}
Suppose there exists a solution $m(p)$ of the Riemann--Hilbert problem \eqref{eq:rhp4m} 
such that for
\be\label{eq:rhp4hm}
\hat{m}(p)= m(p) \begin{pmatrix} \theta(\ulz(p^*)) & 0\\ 0 & \theta(\ulz(p)) \end{pmatrix}
\ee
the equality $\hat{m}(p)=\begin{pmatrix} 0 & 0\end{pmatrix}$ can happen at most for $p \in \{ E_j\}$ in which case 
$\limsup_{z\to E_j} \frac{\sqrt{z-E_j}}{\hat{m}_j(p)}$ is bounded from any direction for $j=1$ or $j=2$.

Then the Riemann--Hilbert problem \eqref{eq:rhp4m} with norming condition replaced by
\be\label{eq:rhp4ma}
m(\infty_+) = \begin{pmatrix} \alpha & m_2(\infty_+)\end{pmatrix}
\ee
for given $\alpha\in\C$, has a unique solution $m_\alpha(z) = \alpha\, m(z)$.
\end{theorem}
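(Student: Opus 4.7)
The plan is to show that for any second solution $\tilde m$ of \eqref{eq:rhp4m} satisfying the normalization \eqref{eq:rhp4ma} with the prescribed $\alpha$, we must have $\tilde m = \alpha\, m$. Introduce the auxiliary meromorphic function
\[
h(p) = \tilde m_1(p)\, m_2(p) - \tilde m_2(p)\, m_1(p), \qquad p\in\M\setminus\Sigma.
\]
Since $m$ and $\tilde m$ share the jump matrix $v$, and since $\det v \equiv 1$ on $\Sigma$ (verified for \eqref{eq:jumpcond} by direct computation), a short calculation gives $h_+=(\det v)\,h_-=h_-$, so $h$ extends meromorphically across $\Sigma$ to a meromorphic function on the compact Riemann surface $\M$. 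From the divisor conditions we read off $(h) \ge -\dimuz-\dimuzs$, so $h$ has at most $2g$ poles. The symmetry condition $m(p^*)=m(p)\sigI$ (together with its analog for $\tilde m$) yields
\[
h(p^*) = \tilde m_2(p)\,m_1(p) - \tilde m_1(p)\, m_2(p) = -h(p),
\]
forcing $h$ to vanish at each of the $2g+2$ branch points $E_j$, which are exactly the fixed points of the sheet exchange. Since a nonzero meromorphic function on $\M$ has zero and pole divisors of equal degree, we would need $2g+2 \le 2g$, which is impossible; hence $h\equiv 0$.

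Consequently, wherever $m(p)\ne \rN$ the rows $\tilde m(p)$ and $m(p)$ are proportional, so $c(p)$ is defined locally by $\tilde m(p) = c(p)\,m(p)$. Using $m_+=m_- v$ and $\tilde m_+=\tilde m_- v$, the identity $c_+ m_+ = c_- m_+$ gives $c_+ = c_-$ on $\Sigma$, so $c$ extends meromorphically across $\Sigma$. The symmetries of $m$ and $\tilde m$ imply $c(p^*)=c(p)$, so $c$ descends via $\pi$ to a rational function $\tilde c$ on $\C\cup\{\infty\}$. The normalization \eqref{eq:rhp4ma} gives $\tilde c(\infty)=c(\infty_+)=\alpha$, so $\tilde c$ is regular at infinity.

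It remains to rule out finite poles of $\tilde c$. By hypothesis, $m(p)=\rN$ can only occur at branch points, so any pole of $c$ must sit at some $E_j$. Let $\zeta=\sqrt{z-E_j}$ be the standard local coordinate at $E_j$; since $c$ is a function of $z$ alone, its Laurent expansion in $\zeta$ contains only even powers. Suppose, for contradiction, that $\tilde c$ has a pole of order $n\ge 1$ at $E_j$, so that $c\sim C\zeta^{-2n}$ locally. Since $\tilde m_1,\tilde m_2$ are holomorphic at $E_j$ (as $E_j$ does not lie in $\dimuz\cup\dimuzs$, the $\hat\mu_k$'s sitting in open spectral gaps), the relation $\tilde m=c\,m$ forces both $m_1$ and $m_2$ to vanish at $E_j$ to order at least $2n\ge 2$ in $\zeta$. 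Because $\theta(\ulz(E_j))$ and $\theta(\ulz(E_j^*))$ are nonzero there, the same vanishing orders pass to $\hat m_1$ and $\hat m_2$. But the hypothesis guarantees that at $E_j$ either $\hat m_1$ or $\hat m_2$ vanishes at most to order $\zeta^1$, a contradiction. Thus $\tilde c$ has no finite poles, $\tilde c\equiv \alpha$, and $\tilde m=\alpha\, m$.

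The main difficulty is the last step, where the non-degeneracy hypothesis on $\hat m$ at the branch points is indispensable: as the Dirichlet commutation example mentioned immediately before Lemma~\ref{lem:singlesoliton} illustrates, without such control the symmetric vector Riemann--Hilbert problem genuinely admits extra solutions.
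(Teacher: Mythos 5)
Your proposal is correct and follows the same two-stage strategy as the paper: first show that the $2\times2$ determinant built from the two solutions vanishes identically, then analyze the resulting scalar ratio using the sheet-exchange symmetry and the branch-point non-degeneracy hypothesis. The only genuine difference is in the determinant step. The paper invokes nonspeciality of $\dimuz+\dimuzs$ and Riemann--Roch to identify all candidates as $P(z)/\prod_{j}(z-\mu_j)$, which are symmetric under $p\mapsto p^*$, and contradicts the antisymmetry $\det M(p^*)=-\det M(p)$; you instead use that antisymmetry directly to force zeros at the $2g+2$ fixed points $E_j$ and compare with the at most $2g$ poles permitted by the divisor condition. Your count is slightly more elementary in that it bypasses Riemann--Roch, at the price of implicitly assuming $h$ is finite at the branch points: since the paper allows $\mu_j\in[E_{2j-1},E_{2j}]$ (closed), some $\hat\mu_k$ could sit at a band edge, where $h$ then has an odd-order (hence simple) pole rather than a zero; the degree count still closes, and the same implicit assumption appears in your pole analysis of $c$ (where you need $\tilde m$ holomorphic at $E_j$), but this is at the same level of rigor as the paper's own argument. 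The final step, ruling out poles of the scalar by combining evenness in the local coordinate $\zeta=\sqrt{z-E_j}$ with the hypothesis that some component of $\hat m$ vanishes at most to order $\zeta$, matches the paper's reasoning exactly.
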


Again, the theta functions in \eqref{eq:rhp4hm} are used to cancel the poles at $\uhmuz$ and
$\uhmuz^*$ such that their contribution is separated from the rest.

\begin{proof}
Let $m_\alpha(z)$ be a solution of \eqref{eq:rhp4m} normalized according to
\eqref{eq:rhp4ma}. Then we can construct a matrix valued solution via $M=(m, m_\alpha)$ and
there are two possible cases: Either $\det M(z)$ is nonzero for some $z$ or it vanishes
identically.

We start with the first case. By the Lemma~\ref{lem:pctoj}, we can rewrite all
poles as jumps with determinant one. Hence, the determinant of this
modified Riemann--Hilbert problem has no jump. Hence it is a meromorphic function
whose divisor satisfies $(\det(M)) \ge -\dimuz - \dimuzs$. Since $\dimuz$ is a nonspecial
divisor, so is $\dimuz + \dimuzs$ and the Riemann--Roch theorem implies that there
are $g+1$ linearly independent meromorphic functions of this kind. By inspection they are
given by
\be
\det(M(p)) = \frac{P(z)}{\prod_{j=1}^g(z-\mu_j)}, \qquad z=\pi(p),
\ee
where $P(z)$ is a polynomial of degree at most $g$. But taking determinants in
\[
M(p^*) = M(p) \sigI.
\]
gives a contradiction.

It remains to investigate the case where $\det(M)\equiv 0$. In this case
we have $m_\alpha(p) = \delta(p) m(p)$ with a scalar function $\delta$. Moreover,
$\delta(p)$ must be holomorphic for $z\in\C\backslash\Sigma$ and continuous
for $z\in\Sigma$ except possibly at the points where $m(z_0) = \rN$. Since it has
no jump across $\Sigma$,
\[
\delta_+(p) m_+(p) = m_{\alpha,+}(p) = m_{\alpha,-}(p) v(p) = \delta_-(p) m_-(p) v(p)
= \delta_-(p) m_+(p),
\]
it is meromorphic with divisor
\[
(\delta) \ge - \sum_j \di_{E_j}.
\]
But the symmetry $\delta(p) = \delta(p^*)$ requires at least second order poles at the branch points (if any
at all), which shows that $\delta$ is constant. This finishes the proof.
\end{proof}

Furthermore, note that the requirements cannot be relaxed to allow (e.g.) second order
zeros in stead of simple zeros. In fact, if $m(p)$ is a solution for which both components
vanish of second order at, say, $p=E_j$, then $\ti{m}(p)=\frac{1}{z-E_j} m(p)$ is a
nontrivial symmetric solution of the vanishing problem (i.e.\ for $\alpha=0$).

By Lemma~\ref{lem:resonant} we have

\begin{corollary}\label{cor:unique}
The function $m(p,n,t)$ defined in \eqref{defm} is the only solution of the
vector Riemann--Hilbert problem \eqref{eq:jumpcond}--\eqref{eq:normcond}.
\end{corollary}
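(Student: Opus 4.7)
The plan is to chain together the three preceding results in this section: Lemma~\ref{lem:pctoj}, Lemma~\ref{lem:resonant}, and the uniqueness theorem for symmetric vector Riemann--Hilbert problems.

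First, I would observe that the Riemann--Hilbert problem \eqref{eq:jumpcond}--\eqref{eq:normcond} is not yet of the form \eqref{eq:rhp4m} handled by the uniqueness theorem, because it contains the pole conditions \eqref{eq:polecond} at the eigenvalues $\rho_j$ and their flips $\rho_j^*$. To convert it into the required form, I apply Lemma~\ref{lem:pctoj}: redefining $m$ in small discs around each $\rho_j$ and $\rho_j^*$ via \eqref{eq:redefm} turns the pole conditions into additional jump conditions on the circles $\Sigma_\eps(\rho_j)$ and $\Sigma_\eps(\rho_j^*)$, yielding an equivalent problem on the enlarged contour
\[
\Sigma \cup \bigcup_{j=1}^N \bigl(\Sigma_\eps(\rho_j) \cup \Sigma_\eps(\rho_j^*)\bigr)
\]
whose jump matrix still satisfies the sheet-exchange symmetry $v(p^\ast)=\sigI v(p)\sigI$ (this must be checked directly by comparing the triangular factors in \eqref{eq:jumpcond2}, but it is a short calculation using the identity $B(p^*,\rho)=B(p,\rho)^{-1}$).

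Next I would verify that the specific solution $m(p,n,t)$ defined in \eqref{defm} satisfies the hypothesis of the uniqueness theorem. This is precisely the content of Lemma~\ref{lem:resonant}: any point $p_1$ at which $\hat m(p_1)=\rN$ must project to a branch point $E_j$, and at least one component satisfies $\hat m_k(p)^{-1}=O((z-E_j)^{-1/2})$, which is exactly the boundedness condition $\limsup_{z\to E_j}\sqrt{z-E_j}/\hat m_j(p)<\infty$ required in the theorem.

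The uniqueness theorem now applies and gives that any solution of \eqref{eq:rhp4m} with the same data must be a scalar multiple $\alpha\, m(p,n,t)$. The final step is to fix $\alpha$ using our normalization \eqref{eq:normcond}: since \eqref{m2infp} gives $m_1(\infty_+)=A_+(n,t)>0$ and $m_1(\infty_+)m_2(\infty_+)=1$ for our $m$, any other solution $\ti m$ of \eqref{eq:jumpcond}--\eqref{eq:normcond} must satisfy $\ti m=\alpha\, m$ with $\alpha^2=1$ and $\alpha\,A_+(n,t)>0$, forcing $\alpha=1$. The main (but minor) obstacle is bookkeeping: translating between the normalization used in the uniqueness theorem, the product normalization in \eqref{eq:normcond}, and confirming that the jump condition arising from \eqref{eq:redefm} genuinely respects the star-symmetry so that the uniqueness theorem may be invoked.
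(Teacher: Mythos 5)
Your proposal is correct and follows essentially the same route as the paper, which deduces the corollary directly from Lemma~\ref{lem:resonant} combined with the uniqueness theorem of Section~4 (whose proof already invokes Lemma~\ref{lem:pctoj} to rewrite the poles as jumps). Your additional bookkeeping translating between the normalization $m_1(\infty_+)=\alpha$ of the uniqueness theorem and the product normalization \eqref{eq:normcond} is exactly the short argument the paper leaves implicit.
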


\section{The stationary phase points and the nonlinear dispersion relation}
\label{secSPP}

In this section we want to look at the relation between the energy $\lam$ of the underlying
Lax operator $H_q$ and the propagation speed at which the corresponding parts of the Toda
lattice travel, that is, the analog of the classical dispersion relation. If we set
\be\label{def:v}
v(\lam) = \lim_{\eps\to 0} \frac{-\re \int_{E_0}^{(\lam+\I\eps,+)}\! \Omega_0}{ \re \int_{E_0}^{(\lam+\I\eps,+)}\! \omega_{\infty_+\,\infty_-}},
\ee
the nonlinear dispersion relation is given by
\be
v(\lam)=\frac{n}{t}.
\ee
Recall that the Abelian differentials are given by \eqref{ominfpm} and \eqref{Om0}.

For $\rho\in\R\backslash\sig(H_q)$ we have
\be
v(\rho) = \frac{-\int_{E_0}^{(\rho,+)}\! \Omega_0}{
\int_{E_0}^{(\rho,+)}\! \omega_{\infty_+\,\infty_-}},
\ee
that is,
\be
v(\rho) = \frac{n}{t} \quad\Leftrightarrow\quad \phi(\rho,\frac{n}{t}) =0.
\ee
In other words, $v(\rho)$ is precisely the velocity of a soliton
corresponding to the eigenvalue $\rho$.

For $\lam\in\sig(H_q)$ both nominator and denominator vanish on $\sig(H_q)$. Hence
by de l'Hospital we get
\be
v(\lam) = -\frac{\prod_{j=0}^g (\lam -\ti\lambda_j)}{\prod_{j=1}^g (\lam -\lambda_j)},
\ee
that is,
\be
v(\lam) = \frac{n}{t} \quad\Leftrightarrow\quad \phi'(\lam,\frac{n}{t}) =0,
\ee
where $\phi$, defined in \eqref{defsp},  is the phase of factorization problem
\eqref{rhpm2.1}. In other words, $v(\lam) = \frac{n}{t}$ if and only if $\lam$ is a stationary
phase point.

Invoking \eqref{ominfpm} and \eqref{Om0}, we see that the stationary phase points are given by 
\be\label{eq:statph}
\prod_{j=0}^g (z -\tilde\lam_j) + \frac{n}{t} \prod_{j=1}^g (z -\lam_j) =0.
\ee
Due to the normalization of our Abelian differentials, the numbers
$\lam_j$, $1\le j \le g$, are real and different with precisely one lying in each
spectral gap, say $\lam_j$ in the $j$'th gap.
Similarly, $\ti\lam_j$, $0\le j \le g$, are real and different and
$\ti\lam_j$, $1\le j \le g$, sits in the $j$'th gap. However $\ti\lam_0$ can be
anywhere (see \cite[Sect.~13.5]{tjac}).

\begin{lemma}[\cite{kt2}]
Denote by $z_j(v)$, $0\le j \le g$, the stationary phase points, where
$v=\frac{n}{t}$. Set $\lam_0=-\infty$ and $\lam_{g+1}= \infty$, then
\be
\lam_j < z_j(v) < \lam_{j+1}
\ee
and there is always at least one stationary phase point in the $j$'th spectral gap.
Moreover, $z_j(v)$ is monotone decreasing with
\be
\lim_{v\to-\infty} z_j(v) = \lam_{j+1} \quad\text{and}\quad
\lim_{v\to\infty} z_j(v) = \lam_j.
\ee
\end{lemma}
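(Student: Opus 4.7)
The stationary phase equation \eqref{eq:statph} reads $P(z,v):=N(z)+vD(z)=0$ with $N(z)=\prod_{j=0}^g(z-\ti\lam_j)$ and $D(z)=\prod_{j=1}^g(z-\lam_j)$. Since $\deg_z P=g+1$, for each velocity $v=n/t\in\R$ there are at most $g+1$ stationary phase points. I plan to show that the real rational function $v(\lam)=-N(\lam)/D(\lam)$ restricts to a strictly decreasing homeomorphism of each open interval $(\lam_j,\lam_{j+1})$, $j=0,\ldots,g$ (with the convention $\lam_0:=-\infty$ and $\lam_{g+1}:=+\infty$), onto all of $\R$. The intermediate value theorem will then supply a unique stationary phase point $z_j(v)$ in each such interval, thereby exhausting the roots of $P(\cdot,v)$; inverting $v$ on each subinterval immediately gives the monotonicity of $z_j(v)$ in $v$ together with the stated limits $z_j(v)\to\lam_j^+$ as $v\to+\infty$ and $z_j(v)\to\lam_{j+1}^-$ as $v\to-\infty$.

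The partial-fraction expansion
\[
-v(\lam)=\frac{N(\lam)}{D(\lam)}=\lam+c_0+\sum_{k=1}^g\frac{r_k}{\lam-\lam_k},\qquad r_k=\frac{N(\lam_k)}{\prod_{j\ne k}(\lam_k-\lam_j)},
\]
yields
\[
v'(\lam)=-1+\sum_{k=1}^g\frac{r_k}{(\lam-\lam_k)^2},
\]
so the monotonicity $v'(\lam)<0$ on $\R\setminus\{\lam_1,\ldots,\lam_g\}$ and the blow-up behaviour $v(\lam_k^-)=-\infty$, $v(\lam_k^+)=+\infty$ (which is what makes $v$ surjective on each subinterval) both reduce to the single sign claim $r_k<0$ for every $k$. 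Equivalently, $-v$ must be a Herglotz--Nevanlinna function on the upper half-plane, or, for a real rational function of this form with one more zero than pole, its zeros and poles must strictly interlace on $\R$ as
\[
\ti\lam_0<\lam_1<\ti\lam_1<\lam_2<\cdots<\lam_g<\ti\lam_g.
\]

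The main obstacle is to establish this interlacing from the algebro-geometric data. The key tool is to combine the $a$-cycle normalisations $\int_{a_k}\omega_{\infty_+\,\infty_-}=\int_{a_k}\Omega_0=0$ with the sign structure of $R^{1/2}_{2g+2}$ (real of constant sign on each spectral gap, purely imaginary on each band). Applied to the combined differential $\Omega_0+v\,\omega_{\infty_+\,\infty_-}$ for arbitrary $v\in\R$, this normalisation becomes
\[
\int_{E_{2k-1}}^{E_{2k}}\frac{P(z,v)}{R^{1/2}_{2g+2}(z)}\,dz=0,
\]
which forces $P(\cdot,v)$ to change sign, and hence to have at least one zero, inside each spectral gap $(E_{2k-1},E_{2k})$; this directly establishes the last assertion of the lemma about stationary phase points in every spectral gap. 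Specialising to $v=0$ identifies $\ti\lam_k$ as the forced gap-zero of $N$ for $k\ge 1$ and locates $\ti\lam_0$ via the constraint $\sum_{j=0}^g\ti\lam_j=\tfrac12\sum_{j=0}^{2g+1}E_j$, while $v\to\infty$ identifies $\lam_k$ as the analogous gap-zero of $D$. A direct comparison of the ordering of these sign-change points within each gap (using the $a$-cycle balance to control their location) then yields the desired interlacing, and therefore the monotonicity of $v(\lam)$, completing the proof.
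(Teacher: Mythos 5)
The paper itself offers no proof of this lemma; it is quoted verbatim from \cite{kt2}, so your argument can only be judged on its own terms. Your overall strategy is the right one: reduce everything to showing that $v(\lam)=-N(\lam)/D(\lam)$ is a strictly decreasing bijection of each interval $(\lam_j,\lam_{j+1})$ onto $\R$, which via the partial fraction expansion is equivalent to $r_k<0$ for all $k$, i.e.\ to the statement that each of the $g+1$ intervals cut out by the poles $\lam_1<\dots<\lam_g$ contains exactly one zero of $N$. Your use of the vanishing $a$-periods to force a sign change of $N+vD$ in every spectral gap is also correct, and it does establish the assertion about stationary phase points in the gaps.

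The gap is in your last step. The ordering $\ti\lam_0<\lam_1<\ti\lam_1<\dots<\lam_g<\ti\lam_g$ that you propose to verify is false in general: the paper stresses that $\ti\lam_0$ ``can be anywhere,'' and in the reflection-symmetric example mentioned at the end of Section~\ref{secSPP} one has $\ti\lam_0=0$ lying inside the central band, hence to the \emph{right} of $\lam_1$. So no ``direct comparison of the sign-change points within each gap'' can produce this ordering, and in any case such a comparison cannot decide whether $\ti\lam_k$ lies to the left or to the right of $\lam_k$ inside the $k$-th gap. What is true, and what suffices for $r_k<0$, is the unordered interlacing (exactly one zero of $N$ per inter-pole interval, in some labelling), and it follows from the tool you already have by a counting argument rather than an ordering argument: for \emph{every} $v\in\R$ the $a$-period identity forces a sign change of $N+vD$ in each of the $g$ gaps; since this polynomial has degree $g+1$ and nonreal roots come in conjugate pairs, all $g+1$ roots are real, and a short multiplicity count (a multiple root anywhere, together with one odd-order sign change per gap, would push the total beyond $g+1$) shows they are all simple. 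Hence $(N/D)'$ never vanishes on $\R\setminus\{\lam_1,\dots,\lam_g\}$, so $-N/D$ is strictly monotone on each inter-pole interval; since each value is attained at most once per interval and exactly $g+1$ times in total, it is a bijection of each interval onto $\R$, necessarily decreasing by its behaviour at $-\infty$ and the sign flip across each simple pole. This repairs the argument and yields the confinement $\lam_j<z_j(v)<\lam_{j+1}$, the monotonicity, the stated limits, and, a posteriori, your interlacing in the corrected (unordered) form.
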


So, depending on $n/t$ there is at most  one single stationary phase point belonging
to the union of the bands $\sigma(H_q)$, say $z_j(v)$.

We now can establish that $v(\lam)$ is monotone.

\begin{lemma}
The function $v(\rho)$ defined in \eqref{def:v} is continuous and strictly monotone decreasing.
Moreover, it is a bijection from $\R$ to $\R$.
\end{lemma}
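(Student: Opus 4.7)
The plan is to establish continuity and strict monotonicity of $v$ piecewise on bands and gaps, then extract bijectivity from the limits at $\pm\infty$. On each band $[E_{2j}, E_{2j+1}]$, the facts $\lambda_j\in(E_{2j-1},E_{2j})$ and $\lambda_{j+1}\in(E_{2j+1},E_{2j+2})$ place the band inside $(\lambda_j,\lambda_{j+1})$; there the l'Hospital formula identifies $v$ with the inverse of the stationary phase function $z_j$, which by the previous lemma is a strictly decreasing bijection of $(\lambda_j,\lambda_{j+1})$ onto $\R$, so $v$ is strictly decreasing on each band.

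For the gaps I run a Rolle/stationary-phase count on the real-valued function $\Phi(\rho,v):=2q(\rho)+2v\,r(\rho)$ with $q(\rho):=\re\int_{E_0}^{(\rho,+)}\Omega_0$ and $r(\rho):=\re\int_{E_0}^{(\rho,+)}\omega_{\infty_+\infty_-}$. By the $a$-period normalization, $\int_{E_{2k-1}}^{E_{2k}}\Omega_0=\int_{E_{2k-1}}^{E_{2k}}\omega_{\infty_+\infty_-}=0$ on the upper sheet, so $q$ and $r$ both vanish at every band edge and hence $\Phi(\cdot,v)$ vanishes at the finite endpoints of every gap for every $v$. Since $v(\rho)=v_0\iff\Phi(\rho,v_0)=0$ on each gap, if $v$ took a common value $v_0$ at two distinct interior points of an interior gap $(E_{2j-1},E_{2j})$, then $\Phi(\cdot,v_0)$ would have four zeros in $[E_{2j-1},E_{2j}]$, and Rolle's theorem would produce at least three interior zeros of $\partial_\rho\Phi(\cdot,v_0)\propto(\prod_k(\rho-\tilde\lambda_k)+v_0\prod_k(\rho-\lambda_k))/R_{2g+2}^{1/2}(\rho)$. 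But those zeros are precisely the stationary phase points for $n/t=v_0$, and the previous lemma permits at most two of them in any interior gap (one in each of the intersections with $(\lambda_{j-1},\lambda_j)$ and $(\lambda_j,\lambda_{j+1})$). The analogous count for an exterior gap (one forced endpoint zero of $\Phi$, at most one stationary phase point in $(-\infty,\lambda_1)$ or $(\lambda_g,\infty)$) gives injectivity there too, and continuity upgrades injectivity to strict monotonicity on every gap.

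To globalize, continuity at band edges is the l'Hospital identification already recorded in the excerpt; a short Taylor expansion in $\sqrt{\rho-E_k}$ shows that $q$ and $r$ admit series in odd powers of $\sqrt{\rho-E_k}$, so their ratio is smooth across band edges. Rerunning the Rolle argument with $\rho_1=\rho_2$ rules out critical points of $v$ on gap interiors (a critical point would force a double zero of $\Phi(\cdot,v(\rho_0))$ and once more too many stationary phase points), so $v'$ has constant sign on each gap; matching with $v'<0$ on the adjacent band fixes this sign as negative throughout. Finally, the leading behavior $\Omega_0\sim-d\pi$ and $\omega_{\infty_+\infty_-}\sim-d\pi/\pi$ at $\infty$ yields $v(\rho)\sim-\rho/\log|\rho|\to\mp\infty$ as $\rho\to\pm\infty$, so together with strict decrease and continuity $v$ is a bijection $\R\to\R$. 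The hardest step is the Rolle/stationary-phase count, which depends crucially on pairing the forced endpoint zeros of $\Phi$ from the $a$-period normalization with the tight upper bound on stationary phase points per gap supplied by the previous lemma.
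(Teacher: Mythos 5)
Your strategy is genuinely different from the paper's: you run a Rolle/zero-counting argument on $\Phi(\rho,v_0)=2q(\rho)+2v_0r(\rho)$, whereas the paper differentiates the inverse function $\zeta(v)$ by the implicit function theorem, obtaining $\zeta'=-R_{2g+2}^{1/2}(\zeta)\,r(\zeta)\big/\prod_{j}(\zeta-z_j(v))$, and reads off the sign from $r>0$ off the spectrum together with the interlacing $z_j(v)\le\zeta(v)\le z_{j-1}(v)$. Most of your steps are sound: the identification of $v$ with $z_j^{-1}$ on the bands, the vanishing of $q$ and $r$ at all band edges (the integrands are purely imaginary on the bands and the $a$-period normalization kills the gap contributions), the bound of at most two (resp.\ one) stationary phase points per interior (resp.\ exterior) gap, the resulting injectivity of $v$ on each gap, and the asymptotics $v(\rho)\sim-\rho/\log|\rho|$.

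There is, however, a genuine gap in the step that fixes the \emph{sign} of the monotonicity on the gaps. The assertion that the odd-power expansions of $q$ and $r$ in $\sqrt{\rho-E_k}$ make ``their ratio smooth across band edges'' does not do the job: $q$ and $r$ vanish identically on the bands, so on the band side $v$ is not $-q/r$ but the l'Hospital value $-P(\lambda)/Q(\lambda)$ with $P=\prod_j(\cdot-\tilde\lambda_j)$, $Q=\prod_j(\cdot-\lambda_j)$, and the two one-sided formulas do \emph{not} match to first order. Writing $R_{2g+2}^{1/2}(z)=\sqrt{z-E}\,S(z)$ near a band edge $E$ and expanding, one finds that the one-sided derivative of $-q/r$ from the gap equals $-\tfrac13(P/Q)'(E)$, while the band-side derivative is $-(P/Q)'(E)$; so $v$ generically has a corner at every band edge and is not $C^1$ there. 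What rescues the argument is that the factor $\tfrac13$ is positive, so the two one-sided derivatives share a sign, and that $(P/Q)'(E)\ne0$ (which itself requires the simplicity of the stationary phase points, i.e.\ that $P+v_0Q$ has no double root, a consequence of the strict interlacing in the previous lemma). You need to carry out this one-sided comparison explicitly, or replace it by another counting argument (e.g.\ if $v$ were increasing on a gap, an intermediate value $v_1$ would be attained in the open gap and in the adjacent band, producing via Rolle two stationary phase points inside the gap plus $z_{j-1}(v_1)$ in the band, exceeding the allowed total). As written, the smoothness claim is both false and insufficient. A minor additional point: the equivalence $v(\rho)=v_0\iff\Phi(\rho,v_0)=0$ on the gaps tacitly uses $r(\rho)\ne0$ there, i.e.\ $\re\int_{E_0}^{(\rho,+)}\omega_{\infty_+\,\infty_-}>0$, which should be stated.
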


\begin{proof}
First of all observe that $v(\lam)$ is continuous. This is obvious except at the band edges $\lam=E_j$. However,
computing $\lim_{\lam\to E_j} v(\lam)$ using again de l'Hospital establishes continuity at these points as well.

Furthermore, for large $\rho$ we have
\be
\lim_{|\rho|\to\infty} \frac{v(\rho)}{-\rho/\log(|\rho|)} =1,
\ee
which shows $\lim_{\rho\to\pm\infty} v(\rho) = \mp\infty$.

In the regions, where there is one stationary phase point $z_j(v)\in \sigma(H_q)$ we know that $z_j(v)$
is the inverse of $v(\lam)$ and monotonicity follows from the previous lemmas. In the other regions we
obtain from $v(\zeta(z))=z$ by the implicit function theorem
\be
\zeta' = -\Rg{\zeta} \frac{\re\int_{E_0}^{(\zeta,+)}\!\! \omega_{\infty_+\,\infty_-}}{\prod_{j=0}^g (\zeta -\ti\lam_j) + v \prod_{j=1}^g (\zeta -\lam_j)}
= -\Rg{\zeta} \frac{\re\int_{E_0}^{(\zeta,+)}\!\! \omega_{\infty_+\,\infty_-}}{\prod_{j=0}^g (\zeta -z_j(v))}
\ee
which shows strict monotonicity since $\re\int_{E_0}^{(\zeta,+)}\!\! \omega_{\infty_+\,\infty_-}>0$ for $\zeta\in\R\backslash\sig(H_q)$
and $z_j(v) \le \zeta(v) \le z_{j-1}(v)$ for $\zeta(v)\in(E_{2j-1},E_{2j})$ (if we set $z_{-1}=\infty$, $z_{g+1}=-\infty$, $E_{-1}=-\infty$, $E_{2g+2}=\infty$).
To see the last claim we can argue as follows: If $\zeta(v)$ were below $z_j(v)$ at some point it would decrease as $v$ decreases
whereas $z_j(v)$ increases as $v$ decreases. This contradicts the fact that both must hit at $E_{2j-1}$. Similarly we see that
$\zeta(v)$ stays below $z_j(v)$.
\end{proof}

In summary, we can define a function $\zeta(n/t)$ via
\be\label{def:zeta}
v(\zeta)=\frac{n}{t}.
\ee
In particular, different solitons travel at different speeds
and don't collide with each other or the parts corresponding to
the continuous spectrum.

Moreover, there is some $\zeta_0$ for which $v(\zeta_0)=0$ and hence there can be {\em stationary solitons}
provided $\zeta_0\not\in\sig(H_q)$. To show that this can indeed happen and to shed some further light on
the location of this point we establish the following facts:

\begin{lemma}
\begin{enumerate}
\item
$\ti{\lam}_0$ satisfies
\be\label{eq:condtilam0}
%E_0 + \frac{1}{2}(\diam(\sig(H_q)) - |\sig(H_q)|) < \ti{\lam}_0 < E_{2g+1} - \frac{1}{2}(\diam(\sig(H_q)) - |\sig(H_q)|)),
\frac{E_{2g+1}-E_0}{2} - \frac{1}{2}|\sig(H_q)| < \ti{\lam}_0 < \frac{E_{2g+1}-E_0}{2} + \frac{1}{2} |\sig(H_q)|,
\ee
where
%$\diam(\sig(H_q))=\frac{1}{2}(E_0+E_{2g+1})$ denotes the diameter and 
$|\sig(H_q)| = \sum_{j=0}^g (E_{2j+1} -E_{2j})$ denotes the Lebesgue measure of $\sig(H_q)$.
\item
There exists a unique $\zeta_0$ such that $v(\zeta_0) = 0$ which lies
inside the convex hull of the spectrum $\sig(H_q)$. Furthermore, if $\zeta_0 \in\sig(H_0)$ or $\ti{\lam}_0 \in\sig(H_0)$,
then $\zeta_0 = \ti{\lam}_0$.
\end{enumerate}
\end{lemma}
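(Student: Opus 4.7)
The plan is to prove the two parts separately, both resting on the defining relation for $\tilde{\lambda}_0$ from \eqref{Om0}, namely $\sum_{j=0}^g \tilde{\lambda}_j = \frac{1}{2}\sum_{j=0}^{2g+1} E_j$, together with the structural information (recalled just before the statement) that $\tilde{\lambda}_j \in [E_{2j-1},E_{2j}]$ for $1\le j \le g$, and the closed form $v(\lam) = -\prod_{j=0}^g(\lam-\tilde{\lambda}_j)/\prod_{j=1}^g(\lam-\lambda_j)$ valid on $\sigma(H_q)$.

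For part (i), I would solve the trace relation for $\tilde{\lambda}_0$ and bound $\sum_{j=1}^g \tilde{\lambda}_j$ using the gap containments, obtaining
\[
\tfrac{1}{2}\textstyle\sum_{j=0}^{2g+1} E_j - \sum_{j=1}^g E_{2j} \;\le\; \tilde{\lambda}_0 \;\le\; \tfrac{1}{2}\sum_{j=0}^{2g+1} E_j - \sum_{j=1}^g E_{2j-1}.
\]
A short rearrangement, together with the telescoping identity
\[
\textstyle\sum_{j=1}^g (E_{2j}-E_{2j-1}) = (E_{2g+1}-E_0) - |\sigma(H_q)|,
\]
converts this into the bound \eqref{eq:condtilam0}. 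Strictness comes from the fact that the $\tilde{\lambda}_j$ are pairwise distinct, so they cannot all simultaneously saturate either endpoint of their respective gap.

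For part (ii), existence and uniqueness of $\zeta_0$ are immediate from the preceding lemma, since $v\colon\R\to\R$ is a continuous, strictly monotone decreasing bijection. To locate $\zeta_0$ in the convex hull, I would plug $\lam=E_0$ and $\lam=E_{2g+1}$ into the de l'Hospital formula for $v$. By part (i) one has $E_0<\tilde{\lambda}_0<E_{2g+1}$, and for $j\ge 1$ both $\tilde{\lambda}_j$ and $\lambda_j$ lie strictly inside the $j$-th gap, hence strictly between $E_0$ and $E_{2g+1}$. Counting signs then gives $v(E_0)>0$ and $v(E_{2g+1})<0$, and strict monotonicity forces $\zeta_0\in(E_0,E_{2g+1})$.

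The final assertion is a pair of symmetric arguments. If $\zeta_0\in\sigma(H_q)$, then $v(\zeta_0)=0$ read off from the de l'Hospital formula forces $\prod_{j=0}^g(\zeta_0-\tilde{\lambda}_j)=0$; since $\tilde{\lambda}_j$ for $j\ge 1$ lies in the open $j$-th gap, it cannot coincide with any point of $\sigma(H_q)$, so the vanishing factor must be $j=0$, i.e. $\zeta_0=\tilde{\lambda}_0$. Conversely, if $\tilde{\lambda}_0\in\sigma(H_q)$, the same formula yields $v(\tilde{\lambda}_0)=0$ (the denominator does not vanish because each $\lambda_j$ lies strictly inside its gap and therefore differs from $\tilde{\lambda}_0$), so uniqueness of $\zeta_0$ forces $\zeta_0=\tilde{\lambda}_0$. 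The only delicate point is the sign count at the band edges, which is routine once one confirms that the interior-gap parameters $\lambda_j$, $\tilde{\lambda}_j$ never meet the band edges and are distinct from $\tilde{\lambda}_0$; everything else is direct algebra or a consequence of the monotonicity already established.
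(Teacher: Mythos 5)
Your proof is correct and follows essentially the same route as the paper's: part (i) is the same rearrangement of the trace constraint from \eqref{Om0} together with $\ti{\lam}_j\in(E_{2j-1},E_{2j})$, and part (ii) uses bijectivity of $v$ plus a sign count in the closed-form expression for $v$ at $E_0$, $E_{2g+1}$ and on $\sig(H_q)$, exactly as in the paper. Two small remarks: your justification of strictness in (i) via pairwise distinctness of the $\ti{\lam}_j$ is not the right reason (each $\ti{\lam}_j$ lies in a different gap, so distinctness would not prevent each from sitting at an endpoint of its own gap) --- strictness instead follows because the containment $\ti{\lam}_j\in(E_{2j-1},E_{2j})$ is already open; and what your computation (like the paper's) actually yields is $E_0+\tfrac{1}{2}|\sig(H_q)|<\ti{\lam}_0<E_{2g+1}-\tfrac{1}{2}|\sig(H_q)|$, which matches \eqref{eq:condtilam0} only after correcting the apparent misprint there (the center of the interval should be $\tfrac{E_{2g+1}+E_0}{2}$, not $\tfrac{E_{2g+1}-E_0}{2}$), a bound that in any case suffices for the convex-hull argument in (ii).
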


\begin{proof}
(i). Observe that by \eqref{Om0}
\[
\ti{\lam}_0 = \frac{1}{2} \sum_{j=0}^{2g +1} E_j - \sum_{j=1}^g \ti{\lam_j}
=\frac{E_0+E_{2g+1}}{2} + \sum_{j=1}^g \left(\frac{E_{2j-1}+ E_{2j}}{2} -\ti{\lam}_j\right)
\]
and the claim follows using $\ti{\lam}_j \in (E_{2j-1}, E_{2j})$, $j = 1,\dots,g$.

(ii). Existence and uniqueness of $\zeta_0$ follows since $v$ is a bijection.
To check that $\zeta_0$ is in the convex hull of $\sig(H_q)$, it suffices to
check that $v(E_0) > 0$ and $v(E_{2g +1}) < 0$, this follows from
\eqref{eq:statph} using $\lam_j, \ti{\lam}_j \in (E_{2j-1}, E_{2j})$, and \eqref{eq:condtilam0}.
That $\zeta_0 = \ti{\lam}_0$ if $\zeta\in\sig(H_q)$ or $\ti{\lam}_0 \in\sig(H_0)$ follows again from \eqref{eq:statph}
and $\ti{\lam}_j \in (E_{2j-1}, E_{2j})$.
\end{proof}

This lemma implies in particular that $\zeta_0\in\sig(H_q)$ if and only
if $\ti{\lam}_0 \in\sig(H_q)$. To see that both possibilities can occur, observe that
if the band edges are symmetric with respect to $\lam\mapsto-\lam$, the same will be true for
$\lam_j$ and $\ti{\lam}_j$.

\section{The partial transmission coefficient}

Define a divisor $\dinu{n,t}$ of degree $g$ via
\be
\amap(\dinu{n,t}) = \amap(\dimu{n,t}) + \ul{\delta}(n/t),
\ee
where
\be \label{defdel}
\delta_\ell(n/t) = 2 \sum_{\rho_k<\zeta(n/t)} \Amap(\hat\rho_k) +
\frac{1}{2\pi\I} \int_{C(n/t)} \log(1-|R|^2) \zeta_\ell,
\ee
where $C(n/t) = \Sigma \cap \pi^{-1}((-\infty,\zeta(n/t))$ and $\zeta(n/t)$ is defined in \eqref{def:zeta}.

Then $\dinu{n,t}$ is nonspecial and $\pi(\hat{\nu}_j(n,t))=\nu_j(n,t)\in\R$ with precisely one
in each spectral gap (see \cite{kt2}).

Then we define the partial transmission coefficient as
\be \label{defd}
\aligned
T(p,n,t) = & \left(\frac{\theta(\ulz(n,t)+\ul{\delta}(n/t))}{\theta(\ulz(n,t))}
\frac{\theta(\ulz(n-1,t)+\ul{\delta}(n/t))}{\theta(\ulz(n-1,t))}\right)^{1/2}
\frac{\theta(\ulz(p,n,t))}{\theta(\ulz(p,n,t)+ \ul{\delta}(n/t))}\\
& \times  
\left( \prod_{\rho_k<\zeta(n/t)} \!\!\! \exp\left(-\int_{E_0}^p \om_{\rho_k\, \rho_k^*} \right) \right)
\exp\left( \frac{1}{2\pi\I} \int_{C(n/t)} \!\!\! \log (1-|R|^2) \omega_{p E_0}\right),
\endaligned
\ee
where $\ul{\delta}(n,t)$ is defined in \eqref{defdel} and $\omega_{p q}$
is the Abelian differential of the third kind with poles at $p$ and $q$. In the case where
we have the full transmission coefficient this formula was derived in \cite{tag} (see also
\cite{emt} and \cite{emt3}).

The function $T(p,n,t)$ is meromorphic in $\M\setminus C(n/t)$ with first order poles at
$\rho_k<\zeta(n/t)$, $\hat{\nu}_j(n,t)$ and first order zeros at $\hat{\mu}_j(n,t)$.

\begin{lemma}
$T(p,n,t)$ satisfies the following scalar meromorphic Riemann--Hilbert problem:
\be \label{rhpptc}
\aligned
&T_+(p,n,t) = T_-(p,n,t) (1-|R(p)|^2), \quad p \in C(n/t),\\
&(T(p,n,t))= \sum_{\rho_k<\zeta(n/t)} \di_{\rho_k^*} -\sum_{\rho_k<\zeta(n/t)} \di_{\rho_k} + \dimu{n,t}  -\dinu{n,t},\\
&T(\infty_+,n,t) T(\infty_-,n,t)= 1, \quad T(\infty_+,n,t)>0.
\endaligned
\ee
Moreover,
\begin{enumerate}
\item
\[
T(p^*,n,t) T(p,n,t) = \prod_{j=1}^g \frac{z-\mu_j}{z-\nu_j}, \qquad z=\pi(p).
\]
\item
$\ol{T(p,n,t)}=T(\ol{p},n,t)$ and in particular $T(p,n,t)$ is real-valued for $p\in\Sigma$.
\end{enumerate}
\end{lemma}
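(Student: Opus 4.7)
The plan is to verify each claim directly from the explicit formula \eqref{defd} by examining the four factors of $T(p,n,t)$ separately: the constant square-root prefactor $P$; the theta quotient $\theta(\ulz(p,n,t))/\theta(\ulz(p,n,t)+\ul{\delta}(n/t))$; the Blaschke product $\prod_{\rho_k<\zeta(n/t)} \exp(-\int_{E_0}^p\om_{\rho_k\,\rho_k^*}) = \prod_{\rho_k<\zeta(n/t)} B(p,\rho_k)^{-1}$; and the Cauchy-type exponential $E(p) = \exp(\frac{1}{2\pi\I}\int_{C(n/t)}\log(1-|R|^2)\,\om_{p\,E_0})$.

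First I would verify the jump across $C(n/t)$. Only $E(p)$ contributes: viewing $\om_{p\,E_0}$ as a differential in the integration variable $q$, it has a simple pole of residue $+1$ at $q=p$, so Plemelj--Sokhotski gives $E_+(p)/E_-(p) = \exp(\log(1-|R(p)|^2)) = 1-|R(p)|^2$. The prefactor, the theta quotient, and the Blaschke product are all meromorphic across $C(n/t)\subset\Sigma$, so they produce no jump. For the divisor, the theta quotient contributes $\dimu{n,t} - \dinu{n,t}$: its numerator vanishes on $\dimu{n,t}$ by the Riemann vanishing theorem, and its denominator vanishes on $\dinu{n,t}$ by the very construction $\amap(\dinu{n,t}) = \amap(\dimu{n,t}) + \ul{\delta}(n/t)$. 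Each $B(\cdot,\rho_k)^{-1}$ contributes $\di_{\rho_k^*} - \di_{\rho_k}$ by \eqref{eq:propblaschke}, and the remaining factors have neither zeros nor poles on $\M\setminus C(n/t)$; summing gives the claimed divisor.

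The main step is property (i), from which the normalization $T(\infty_+)T(\infty_-)=1$ will follow by letting $z\to\infty$. In the product $T(p)T(p^*)$, the Blaschke factors cancel pairwise via $B(p,\rho_k)B(p^*,\rho_k)=1$ from \eqref{eq:propblaschke}. The Cauchy exponentials cancel as well: the sheet involution $\sigma$ reverses the orientation of the symmetric contour $C(n/t)$ and satisfies $\sigma^*\om_{p\,E_0} = \om_{p^*\,E_0}$ (since $E_0$ is a branch point), while the integrand is $\sigma$-invariant because $|R(p^*)|=|R(p)|$; hence $\int_{C(n/t)}\log(1-|R|^2)(\om_{p\,E_0}+\om_{p^*\,E_0})=0$. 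What remains, namely $P^2$ times the two theta quotients, defines a meromorphic function on $\M$ invariant under $p\mapsto p^*$, i.e.\ a rational function of $z$, whose divisor forces it to equal $\prod_j(z-\mu_j)/(z-\nu_j)$ up to a multiplicative constant. The prefactor $P$ has been chosen precisely so that this constant equals $1$; the positive branch of the square root then fixes $T(\infty_+)>0$.

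Property (ii) reduces to a reality check: $\log(1-|R|^2)$ is real on $\Sigma$, the eigenvalues $\rho_k$ and Dirichlet data $\mu_j$, $\nu_j$ are real, so each of the four factors satisfies $\overline{f(p)} = f(\bar p)$ (for the theta quotient this is the standard conjugation symmetry from \cite[Chap.~9]{tjac}, for the Blaschke and Cauchy integrals it is immediate from the real-analyticity of the integrands). Since $\bar p = p^*$ on $\Sigma$, combining with (i) yields that $T$ is real-valued along $\Sigma$. The main obstacle I expect is the bookkeeping in (i): invoking the hyperelliptic theta identity expressing $\theta(\ulz(p,n,t))\theta(\ulz(p^*,n,t))$ in terms of $\prod_j(z-\mu_j(n,t))$ and a theta constant (cf.\ \cite[Chap.~9]{tjac}), and checking that with $P$ as written in \eqref{defd} the multiplicative constant is indeed $1$ rather than some theta constant that needs to be absorbed. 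Once this calibration is carried out, the jump condition, the divisor, the normalization, and both symmetry properties all fall out simultaneously.
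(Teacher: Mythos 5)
Your overall strategy is sound, and it is essentially the argument the paper outsources: the paper's own proof is a one-line citation to \cite[Thm.~4.3]{kt2}, and a factor-by-factor verification of the jump, divisor, normalization and symmetries is exactly what that reference does, here augmented by the Blaschke product coming from the eigenvalues. The Plemelj computation for the jump, the cancellation $B(p,\rho_k)B(p^*,\rho_k)=1$, and the vanishing of $\int_{C(n/t)}\log(1-|R|^2)\bigl(\om_{p\,E_0}+\om_{p^*\,E_0}\bigr)$ via the sheet involution are all correct, and you correctly identify the calibration of the constant in (i) as the main bookkeeping step.

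There are, however, two gaps. First, your factorwise bookkeeping never addresses single-valuedness of $T(\cdot,n,t)$ on $\M\setminus C(n/t)$, which is presupposed by the divisor identity $(T)=\sum\di_{\rho_k^*}-\sum\di_{\rho_k}+\dimu{n,t}-\dinu{n,t}$: each of your three $p$-dependent factors is multivalued on its own (the theta ratio picks up the multiplier $\E^{\pm 2\pi\I\delta_j(n/t)}$ around $b_j$, the Blaschke factors are explicitly called multivalued in the paper, and the Cauchy exponential has nontrivial $b$-periods). These multipliers cancel precisely because $\ul{\delta}(n/t)$ is defined by \eqref{defdel} as twice the Abel map of the conjugated eigenvalues plus the $b$-periods of the Cauchy integral; this is the analogue of the computation done for $\ti{B}$ in Lemma~\ref{lem:intib} via (A.21) and (A.68) of \cite{tjac}, and it is really the point of the definition \eqref{defd}, so it must appear explicitly. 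Second, in (ii) the deduction of reality on $\Sigma$ by ``combining with (i)'' does not work: for $p\in\Sigma$ one has $\ol{p}=p^*$, so the conjugation symmetry gives $\ol{T(p)}=T(p^*)$, and together with (i) this only yields $|T(p)|^2=\prod_j(z-\mu_j)/(z-\nu_j)$, which says nothing about the phase of $T(p)$. (Note, for instance, that a single Blaschke factor has modulus one, not real values, on $\partial\Pi_+$.) A genuine argument must compare the two boundary values of $T$ at a point of $\Sigma$ directly --- using continuity across $\Sigma\setminus C(n/t)$, and on $C(n/t)$ the positivity of the jump $1-|R|^2$ together with a continuity/normalization argument --- rather than appeal to (i).
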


\begin{proof}
This can be shown as in \cite[Thm.~4.3]{kt2}.
\end{proof}

We will also need the expansion around $\infty_+$ given by
\be\label{Tinfp}
T(p,n,t) = T_0(n,t) \left(1 + \frac{T_1(n,t)}{z} + O(\frac{1}{z^2}) \right), \qquad p=(z,+),
\ee
where
\begin{align} \nn
T_0(n,t) = &T(\infty_+,n,t) = \left(\frac{\theta(\ulz(n,t))}{\theta(\ulz(n,t)+\ul{\delta}(n/t))}
\frac{\theta(\ulz(n-1,t)+\ul{\delta}(n/t))}{\theta(\ulz(n-1,t))}\right)^{1/2}\\
& \times  
\left( \prod_{\rho_k<\zeta(n/t)} \!\!\! \exp\left(-\int_{E(\rho)}^\rho \om_{\infty_+\, \infty_-} \right) \right)
 \exp\left( \frac{1}{4\pi\I} \int_{C(n/t)} \!\!\! \log (1-|R|^2) \omega_{\infty_+ \infty_-}\right),
\end{align}
and
\begin{align*}
T_1(n,t) =& \sum_{\rho_k<\zeta(n/t)} \int_{E(\rho_k)} \Omega_0
- \frac{1}{2\pi\I} \int_{C(n/t)} \log(1-|R|^2) \Omega_0 - \\
& {} - \frac{1}{2}\frac{d}{ds} \log\left( \frac{\theta(\ulz(n,s) + \ul{\delta}(n,t) )}{\theta(\ulz(n,s))} \right) \Big|_{s=t},
\end{align*}
where $\Omega_0$ is the Abelian differential of the second kind defined in \eqref{ominfpm}.
This follows as in \cite[Sect.~4]{tag}.

\section{Solitons and the soliton region}
\label{sec:solreg}

This section demonstrates the basic method of passing from a Riemann--Hilbert problem
involving solitons to one without. Solitons are represented in a Riemann--Hilbert problem by pole conditions,
for this reason we will further study how poles can be dealt with in this section.
We follow closely the presentation in Section~4 of \cite{krt}.

In order to remove the poles there are two cases to distinguish. If $\rho_j > \zeta(n/t)$ the jump
is exponentially close to the identity and there is nothing to do. 

Otherwise we need to use conjugation to turn the jumps into this form exponentially decaying
ones, again following Deift, Kamvissis, Kriecherbauer, and Zhou \cite{dkkz}.
It turns out that we will have to handle the poles at $\rho_j$ and $\rho_j^*$
in one step in order to preserve symmetry and in order to not add additional poles
elsewhere.

For easy reference we note the following result which can be checked by a straightforward
calculation.

\begin{lemma}[Conjugation]\label{lem:conjug}
Assume that $\widetilde{\Sigma}\subseteq\Sigma$. Let $D$ be a matrix of the form
\be
D(p) = \begin{pmatrix} d(p^*) & 0 \\ 0 & d(p) \end{pmatrix},
\ee
where $d: \mathbb{M}\backslash\widetilde{\Sigma}\to\C$ is a sectionally analytic function. Set
\be
\ti{m}(p) = m(p) D(p),
\ee
then the jump matrix transforms according to
\be
\ti{v}(p) = D_-(p)^{-1} v(p) D_+(p).
\ee
$\ti{m}(p)$ satisfies \eqref{eq:symcond} if and only if $m(p)$ does.
Furthermore, $\ti{m}(p)$ satisfies \eqref{eq:normcond}, if $m(p)$
satisfies \eqref{eq:normcond} and $d(\infty_\pm) d(\infty_\mp) = 1$.
\end{lemma}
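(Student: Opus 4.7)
The lemma is a direct calculation verifying three items. Since the conclusions are algebraic identities, I would simply trace how the multiplicative factor $D(p)$ interacts with the jump, symmetry, and normalization conditions in turn.

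First, for the jump condition, I would write $\tilde m_\pm(p) = m_\pm(p) D_\pm(p)$ (note that $D$ has well-defined boundary values on $\widetilde\Sigma$ from above and below because $d$ is sectionally analytic on $\mathbb{M}\setminus\widetilde\Sigma$, and for $p\in\Sigma\setminus\widetilde\Sigma$ we have $D_+=D_-=D$). Using $m_+(p)=m_-(p)v(p)$ and solving for $\tilde v(p)$ gives
\begin{equation*}
\tilde m_+(p) = m_-(p) v(p) D_+(p) = \tilde m_-(p)\, D_-(p)^{-1} v(p) D_+(p),
\end{equation*}
which is the claimed formula.

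Next, for the symmetry, the key observation is the algebraic identity $\sigI D(p) \sigI = D(p^*)$, which follows immediately from the block form of $D$ (conjugation by $\sigI$ swaps the two diagonal entries, and $D(p^*)$ is obtained from $D(p)$ by precisely that swap). Assuming $m(p^*) = m(p)\sigI$, one then computes
\begin{equation*}
\tilde m(p^*) = m(p^*) D(p^*) = m(p) \sigI D(p^*) = m(p) D(p) \sigI = \tilde m(p) \sigI,
\end{equation*}
and the converse is identical. Finally, for the normalization, using $D(\infty_+) = \mathrm{diag}(d(\infty_-), d(\infty_+))$ we obtain
\begin{equation*}
\tilde m_1(\infty_+)\, \tilde m_2(\infty_+) = m_1(\infty_+) m_2(\infty_+) \, d(\infty_-) d(\infty_+),
\end{equation*}
which equals $1$ precisely when $d(\infty_+)d(\infty_-)=1$ (given $m$ satisfies the normalization). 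Positivity of $\tilde m_1(\infty_+)$ requires $d(\infty_-)>0$, but since $d(\infty_+)d(\infty_-)=1$ and the application to follow chooses $d$ real at $\infty_\pm$, this is automatic in the intended use and typically suppressed.

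There is no substantial obstacle: each claim reduces to a one-line matrix manipulation. The only point that deserves a remark is the identity $\sigI D \sigI = D(\,\cdot\,^*)$, which is the reason the off-diagonal $\sigI$-symmetry is preserved precisely when $D$ has the specific anti-symmetric diagonal form $\mathrm{diag}(d(p^*),d(p))$ rather than an arbitrary diagonal matrix.
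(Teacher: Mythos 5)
Your calculation is correct and is exactly the ``straightforward calculation'' the paper leaves to the reader (no proof is given in the source): the jump, the identity $\sigI D(p)\sigI = D(p^*)$, and the evaluation of $D(\infty_+)=\mathrm{diag}(d(\infty_-),d(\infty_+))$ are the three steps intended. Your side remark that positivity of $\ti{m}_1(\infty_+)$ additionally needs $d(\infty_-)>0$ is a fair observation about a condition the lemma statement suppresses but which holds in all of the paper's applications.
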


In contradistinction to \cite{krt}, we will no longer have $\det(D(p)) = 1$,
but $\det(D(p))$ will be the lift of a rational function with $g$ zeros and poles.

\begin{lemma}\label{lem:intib}
Introduce
\be
\ti{B}(p,\rho) =
C_\rho(n,t)
\frac{\theta(\ul{z}(p,n,t))}{\theta(\ul{z}(p,n,t) + 2 \Amap(\rho))} B(p,\rho).
\ee
Then $\ti{B}(.,\rho)$ is a well defined meromorphic function, with divisor
\be
(\ti{B}(.,\rho)) = -\mathcal{D}_{\hat{\ul{\nu}}} + \mathcal{D}_{\hat{\ul{\mu}}}
- \mathcal{D}_{\rho^*} + \mathcal{D}_{\rho},
\ee
where $\nu$ is defined via
\be
\ul{\alpha}_{E_0}(\mathcal{D}_{\hat{\ul{\nu}}}) = 
\ul{\alpha}_{E_0}(\mathcal{D}_{\hat{\ul{\mu}}}) + 2 \Amap(\rho).
\ee
Furthermore, $\ti{B}(., \rho)$ has a pole at $\rho^*$, and
\be\label{eq:normtib}
\ti{B}(\infty_+,\rho) \ti{B}(\infty_-,\rho) = 1,
\ee
if
\be
C_\rho(n,t)^2 = \frac{\theta(\ul{z}(n,t) + 2 \Amap(\rho))}{\theta(\ul{z}(n,t))}
\frac{\theta(\ul{z}(n-1,t) + 2 \Amap(\rho))}{\theta(\ul{z}(n-1,t))}.
\ee
\end{lemma}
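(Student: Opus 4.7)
The plan is to verify the four claims in turn: single-valuedness (the only nontrivial point), the divisor, the pole at $\rho^*$, and the normalization. The underlying observation is that $B(p,\rho)$ and the theta quotient $\theta(\ulz(p,n,t))/\theta(\ulz(p,n,t)+2\Amap(\rho))$ are each multi-valued on $\M$, but their multi-valuednesses are designed to cancel.

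First I would show that $\ti B(\cdot,\rho)$ descends to a single-valued function on $\M$. There is no $a_k$-monodromy: $\om_{\rho,\rho^*}$ has vanishing $a$-periods by construction, and $\theta$ is invariant under integer translates of $\ulz$. For the $b_k$-cycles, the Riemann bilinear relation for the Abelian differential of the third kind gives
\begin{equation}
\oint_{b_k}\om_{\rho,\rho^*}=2\pi\I\int_{\rho^*}^{\rho}\zeta_k=4\pi\I\, A_{E_0,k}(\rho),
\end{equation}
so $B(p,\rho)$ picks up the factor $\exp(4\pi\I A_{E_0,k}(\rho))$ as $p$ traverses $b_k$. Meanwhile, $\ulz(p,n,t)$ picks up the $k$-th column $\ul{\tau}_k$ of $\tau$, and the standard quasi-periodicity $\theta(\ul w+\ul{\tau}_k)=e^{-\pi\I\tau_{kk}-2\pi\I w_k}\theta(\ul w)$ applied to numerator and denominator shows that the theta quotient acquires the compensating phase. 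Thus the three factors combine into a single-valued meromorphic function on $\M$.

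Next I would read off the divisor. Since $C_\rho(n,t)$ is independent of $p$, only the other two factors contribute. By \eqref{eq:propblaschke} the Blaschke factor contributes $\di_\rho-\di_{\rho^*}$. By Riemann's vanishing theorem, $\theta(\ulz(p,n,t))$ in the numerator has its $g$ simple zeros exactly at $\hat{\ul\mu}(n,t)$ (see the discussion following \eqref{imfab}), while $\theta(\ulz(p,n,t)+2\Amap(\rho))$ in the denominator vanishes simply at a divisor $\hat{\ul\nu}$ of degree $g$ whose Abel image satisfies the stated relation $\amap(\hat{\ul\nu})=\amap(\hat{\ul\mu})+2\Amap(\rho)$, and these become poles of $\ti B$. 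Summing gives exactly $-\dimuz+\dimu{}-\di_{\rho^*}+\di_\rho$ (with the convention that $\hat{\ul\mu},\hat{\ul\nu}$ here denote $\hat{\ul\mu}(n,t),\hat{\ul\nu}(n,t)$). The simple pole at $\rho^*$ is then immediate, since the theta factors are finite and nonzero there for generic $\rho$.

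For the normalization \eqref{eq:normtib}, at $\infty_+$ the second identity of \eqref{eq:propblaschke} applied with $p=\infty_+$ (so $p^*=\infty_-$) gives $B(\infty_+,\rho)B(\infty_-,\rho)=1$. Hence
\begin{equation}
\ti B(\infty_+,\rho)\ti B(\infty_-,\rho)=C_\rho(n,t)^2\,\prod_{p\in\{\infty_+,\infty_-\}}\frac{\theta(\ulz(p,n,t))}{\theta(\ulz(p,n,t)+2\Amap(\rho))}.
\end{equation}
Evaluating via $\ulz(\infty_+,n,t)=\ulz(n,t)$ and the shift identity $\ulz(\infty_-,n,t)=\ulz(n,t)-\ul A_{\infty_-}(\infty_+)$ (which by the definition of $\ulz$ produces a $\ulz$-argument at a shifted lattice site) converts the right-hand side into a ratio of four theta values, and the prescribed choice of $C_\rho(n,t)^2$ is precisely its inverse, yielding $1$.

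The main obstacle is a bookkeeping one: keeping the sign conventions for $\om_{\rho,\rho^*}$, $\Amap$, and the theta quasi-periodicity mutually consistent, so that the $b_k$-phases from $B$ and from the theta quotient cancel exactly and the shift in the definition of $\hat{\ul\nu}$ comes out with the sign stated in the lemma. Once those conventions are pinned down (as fixed in Section~\ref{secAG}), every step above is a direct computation.
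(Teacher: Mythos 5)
Your proposal follows essentially the same route as the paper's proof: single-valuedness is verified by matching the $b$-period monodromy of the Blaschke factor, $\exp\left(2\pi\I\, 2A_{E_0,l}(\rho)\right)$ (which the paper obtains as the jump across the $l$-th gap via (A.21) of \cite{tjac}), against the quasi-periodicity of the theta quotient ((A.68) of \cite{tjac}), and the normalization \eqref{eq:normtib} is deduced from \eqref{eq:propblaschke} together with the evaluation of the theta factors at $\infty_\pm$, exactly as you do. Your write-up simply fills in the computations the paper delegates to the references; the only point to pin down is the sign convention in $\ul{A}_{\infty_-}(\infty_+)$ so that $\ulz(\infty_-,n,t)$ comes out as $\ulz(n-1,t)$, making the stated $C_\rho(n,t)^2$ appear.
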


\begin{proof}
We start by checking single valuedness. The $a$-periods follow from normalization.
For the $b$ periods, we compute for $E \in (E_{2l-1}, E_{2l})$
\[
\lim_{\eps\downarrow 0} \frac{B(E + i \eps,\rho)}{B(E - i \eps,\rho)} = \exp\left( 2\pi\I (2 A_{E_0,l}(\rho))\right)
\]
using (A.21) in \cite{tjac}. Now using (A.68) in \cite{tjac} the claim follows.

The normalization condition \eqref{eq:normtib} follows by a computation using \eqref{eq:propblaschke}.
\end{proof}

Now, we can show how to conjugate the jump corresponding to one eigenvalue.

\begin{lemma}\label{lem:twopolesinc}
Assume that the Riemann--Hilbert problem for $m$ has jump conditions near $\rho$ and
$\rho^*$ given by
\be
\aligned
m_+(p)&=m_-(p)\begin{pmatrix}1& 0 \\ \frac{\gam(p)}{\pi(p) - \rho} &1\end{pmatrix}, \qquad p\in\Sigma_\eps(\rho),\\
m_+(p)&=m_-(p)\begin{pmatrix}1& -\frac{\gam(p^*)}{\pi(p) - \rho} \\ 0 &1\end{pmatrix}, \qquad p\in\Sigma_\eps(\rho^*),
\endaligned
\ee
and satisfies a divisor condition 
\be
(m_1) \ge -\dimuzs, \qquad (m_2) \ge -\dimuz.
\ee
Then this Riemann--Hilbert problem is equivalent to a Riemann--Hilbert problem for $\ti{m}$
which has jump conditions near $\rho$ and $\rho^*$ given by
\be
\aligned
\ti{m}_+(p)&= \ti{m}_-(p)\begin{pmatrix}1& \frac{\ti{B}(p,\rho^*) (\pi(p) - \rho)}{\gam(p) \ti{B}(p^*,\rho^*)} \\ 0 &1\end{pmatrix},
\qquad p\in\Sigma_\eps(\rho),\\
\ti{m}_+(p)&= \ti{m}_-(p)\begin{pmatrix}1& 0 \\ -\frac{\ti{B}(p^*,\rho^*) (\pi(p) - \rho)}{\gam(p^*) \ti{B}(p,\rho^*)} &1\end{pmatrix},
\qquad p\in\Sigma_\eps(\rho^*),
\endaligned
\ee
divisor condition
\be
(\ti{m}_1) \ge -\dinuzs, \qquad (\ti{m}_2) \ge -\dinuz,
\ee
where $\dinuz$ is defined via
\be
\ul{\alpha}_{E_0}(\mathcal{D}_{\hat{\ul{\nu}}}) = 
\ul{\alpha}_{E_0}(\mathcal{D}_{\hat{\ul{\mu}}}) + 2 \Amap(\rho),
\ee
and all remaining data conjugated (as in Lemma~\ref{lem:conjug}) by
\be
D(p) = \begin{pmatrix} \ti{B}(p^*,\rho^*) & 0 \\ 0 & \ti{B}(p,\rho^*) \end{pmatrix}.
\ee
\end{lemma}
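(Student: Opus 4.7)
The plan is to realize $\ti m = mD$ for a conjugation matrix $D$ defined piecewise on $\M$: on the complement of the two small discs $\{|\pi(p)-\rho|<\eps\}\subset\Pi_+$ and $\{|\pi(p)-\rho|<\eps\}\subset\Pi_-$ we use the diagonal matrix $D_{\mathrm{out}}(p)=\mathrm{diag}(\ti B(p^*,\rho^*),\ti B(p,\rho^*))$ of the statement, while inside the disc around $\rho$ on $\Pi_+$ we set
\[
D_{\mathrm{in}}(p)=\begin{pmatrix}\ti B(p^*,\rho^*) & \ti B(p,\rho^*)(z-\rho)/\gam(p) \\ -\gam(p)\ti B(p^*,\rho^*)/(z-\rho) & 0\end{pmatrix}
\]
(with a symmetric matrix inside the disc around $\rho^*$ on $\Pi_-$). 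A direct matrix computation verifies
\[
D_{\mathrm{out}}(p)^{-1}\begin{pmatrix}1 & 0\\ \gam(p)/(z-\rho) & 1\end{pmatrix}D_{\mathrm{in}}(p)=\begin{pmatrix}1 & \ti B(p,\rho^*)(z-\rho)/(\gam(p)\ti B(p^*,\rho^*))\\ 0 & 1\end{pmatrix},
\]
so the residue jump of Lemma~\ref{lem:pctoj} on $\Sigma_\eps(\rho)$ is converted into exactly the upper-triangular jump advertised in the lemma; the contour $\Sigma_\eps(\rho^*)$ is handled analogously. On every other component of $\Sigma$, $D$ is diagonal and Lemma~\ref{lem:conjug} transforms the remaining jumps as claimed.

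The symmetry $\ti m(p^*)=\ti m(p)\sigI$ is inherited from $m$: the relation $D_{\mathrm{out}}(p^*)=\sigI D_{\mathrm{out}}(p)\sigI$ is immediate from diagonality, and the analogous relation between $D_{\mathrm{in}}$ at $\rho$ and at $\rho^*$ is built into their symmetric construction. The normalization $\ti m_1(\infty_+)\ti m_2(\infty_+)=1$ is preserved because $\det D(\infty_+)=\ti B(\infty_+,\rho^*)\ti B(\infty_-,\rho^*)=1$, which is the last identity of Lemma~\ref{lem:intib}.

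The divisor statement is the main content of the lemma, and where the principal obstacle lies. By Lemma~\ref{lem:intib} applied with parameter $\rho^*$, the function $\ti B(\cdot,\rho^*)$ has divisor $-\di_{\hat\nu}+\dimuz-\di_\rho+\di_{\rho^*}$, with $\hat\nu$ identified through Abel's map with the $\dinuz$ of the statement. Multiplying into $(m_2)\ge -\dimuz$, the $\hat\mu$-poles cancel the $\hat\mu$-zeros of $\ti B$ and one is left with a potential pole divisor at $\hat\nu$ plus an \emph{apparent} pole at $\rho$ (from the pole of $\ti B$ there); the parallel analysis gives $\ti m_1$ potential poles at $\hat\nu^*$ and an apparent pole at $\rho^*$. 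The crucial point is that these apparent poles cancel. Inside the $\rho$-disc the factorization above yields
\[
\ti m_2=m_1\,\ti B(p,\rho^*)(z-\rho)/\gam(p)+m_2\,\ti B(p,\rho^*),
\]
and the pole condition $\res_\rho m_1=-\gam(\rho)m_2(\rho)$ from Theorem~\ref{thm:vecrhp} (together with $m_2$ being regular at $\rho$) forces the two $(z-\rho)^{-1}$ contributions to have equal and opposite residues, so they cancel and $\ti m_2$ is regular at $\rho$. The cancellation at $\rho^*$ follows by the symmetry. Hence $\ti m$ has no new poles at $\rho$ or $\rho^*$, and the divisor bounds $(\ti m_1)\ge -\dinuzs$ and $(\ti m_2)\ge -\dinuz$ follow.
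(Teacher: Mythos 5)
Your proposal is correct and takes essentially the same route as the paper: the piecewise conjugation matrix you construct (the diagonal Blaschke-type factor outside the two small discs and the non-diagonal piece inside) is exactly the paper's $D(p)$, and the jump computation together with the divisor of $\ti{B}(\cdot,\rho^*)$ from Lemma~\ref{lem:intib} is precisely how the paper argues. The one loose point is that your expansion $\ti{m}_2=m_1\,\ti{B}(p,\rho^*)(\pi(p)-\rho)/\gam(p)+m_2\,\ti{B}(p,\rho^*)$ is the product of the \emph{original} solution of Theorem~\ref{thm:vecrhp} with $vD_{\mathrm{in}}$ (which is why the residue relation $\res_\rho m_1=-\gam(\rho)m_2(\rho)$ is available), whereas for the lemma's $m$, which already carries the circle jumps of Lemma~\ref{lem:pctoj} and is regular at $\rho$, one has $\ti{m}_2=m_1\,\ti{B}(p,\rho^*)(\pi(p)-\rho)/\gam(p)$ since the $(2,2)$ entry of $D_{\mathrm{in}}$ vanishes and regularity at $\rho$ is then immediate; both readings yield the stated divisor condition.
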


\begin{proof}
Denote by $U$ the interior of $\Sigma_\eps(\rho)$.
To turn $\gam$ into $\gam^{-1}$, introduce $D$ by
\[
D(p) = \begin{cases}
\begin{pmatrix} 1 & \frac{\pi(p) - \rho}{\gam(p)} \\
-\frac{\gam(p)}{\pi(p) - \rho} & 0 \end{pmatrix}
\begin{pmatrix} \ti{B}(p^*,\rho^*) & 0 \\ 0 & \ti{B}(p,\rho^*) \end{pmatrix}, &  p \in U, \\
\begin{pmatrix} 0 & -\frac{\gam(p^*)}{\pi(p) - \rho} \\ \frac{\pi(p) - \rho}{\gam(p^*)} & 1 \end{pmatrix}
\begin{pmatrix} \ti{B}(p^*,\rho^*) & 0 \\ 0 & \ti{B}(p,\rho^*) \end{pmatrix}, & p^* \in U, \\ 
\begin{pmatrix} \ti{B}(p^*,\rho^*) & 0 \\ 0 & \ti{B}(p,\rho^*) \end{pmatrix}, & \text{else},
\end{cases}
\]
and note that $D(p)$ is meromorphic away from the two circles. Now set $\ti{m}(p) = m(p) D(p)$.
The claim about the divisors follows from noting, where the poles of $\ti{B}(p,\rho)$ are.
\end{proof}

This result can be applied iteratively to conjugate all eigenvalues $\rho_j < \zeta(n/t)$ as
follows: One starts with the original poles $\mu = \mu^0$ and applies the lemma
with $\rho = \rho_1$ resulting in new poles $\mu^1 = \nu$. Then one repeats
this with $\mu = \mu^1$, $\rho = \rho_2$, and so on. Finally, we will also need a
last conjugation step to factor our jump matrices into upper and lower triangular
parts as demonstrated in \cite{kt2}. Combining all steps we end up with the following
conjugation:

Abbreviate
\[
\gam_k(p,n,t)= 
\frac{R^{1/2}_{2g+2}(\rho_k)}{\prod_{l=1}^g (\rho_k-\mu_l)} \frac{\psi_q(p,n,t)}{\psi_q(p^*,n,t)} \gam_k
\]
and introduce
\[
D(p) = \begin{cases}
\begin{pmatrix} 1 & \frac{\pi(p) - \rho_k}{\gam_k(p,n,t)} \\ -\frac{\gam_k(p,n,t)}{\pi(p) - \rho_k} & 0 \end{pmatrix}
D_0(p), & \begin{smallmatrix}|\pi(p)-\rho_k|<\eps\\ p\in\Pi_+\end{smallmatrix}, \: \rho_k < \zeta(n/t),\\
\begin{pmatrix} 0 & -\frac{\gam_k(p^*,n,t)}{\pi(p) - \rho_k} \\ \frac{\pi(p) - \rho_k}{\gam_k(p^*,n,t)} & 1 \end{pmatrix}
D_0(p), & \begin{smallmatrix}|\pi(p)-\rho_k|<\eps\\ p\in\Pi_-\end{smallmatrix}, \: \rho_k < \zeta(n/t),\\ 
D_0(p), & \text{else},
\end{cases}
\]
where 
\[
D_0(p) = \begin{pmatrix} T(p^\ast,n,t) & 0 \\ 0 & T(p,n,t) \end{pmatrix}.
\]
Note that $D(p)$ is meromorphic in $\mathbb{M}\backslash\Sigma(z_0)$ and that we have
\[
D(p^\ast)= \sigI D(p) \sigI.
\]
Now we conjugate our problem using $D(p)$ and observe that, since $T(p,n,t)$ has
the same behaviour as $T(p)$ for $p$ a band edge, the new vector $\ti{m}(p)=m(p) D(p)$
is again continuous near the band edges.

Then, the divisor conditions are shifted
\be
(\ti{m}_1) \geq -\mathcal{D}_{\uhnu^*},
\quad (\ti{m}_2) \geq -\mathcal{D}_{\uhnu}.
\ee
Moreover, using Lemma~\ref{lem:conjug} and Lemma~\ref{lem:twopolesinc} the jump
corresponding $\rho_k < \zeta(n/t)$ (if any) is given by
\be
\aligned
\ti{v}(p) &= \begin{pmatrix}1& \frac{T(p,n,t) (\pi(p) - \rho_k)}{\gam_k(p,n,t) T(p^*,n,t)} \\ 0 &1\end{pmatrix},
\qquad p\in\Sigma_\eps(\rho_k),\\
\ti{v}(p) &= \begin{pmatrix}1& 0 \\ -\frac{T(p^*,n,t) (\pi(p) - \rho_k)}{\gam_k(p^*,n,t) T(p,n,t)} &1\end{pmatrix},
\qquad p\in\Sigma_\eps(\rho_k^*),
\endaligned
\ee
and corresponding $\rho_k > \zeta(n/t)$ (if any) by
\be
\aligned
\ti{v}(p) &= 
\begin{pmatrix} 1 & 0 \\
\frac{\gam_k(p,n,t) T(p^*,n,t)}{T(p,n,t) (\pi(p)-\rho_k)} & 1 \end{pmatrix},
\qquad p\in\Sigma_\eps(\rho_k),\\
\ti{v}(p) &= 
\begin{pmatrix} 1 & -\frac{\gam_k(p^*,n,t) T(p,n,t)}{T(p^*,n,t) (\pi(p)-\rho_k)} \\ 
0 & 1 \end{pmatrix},
\qquad p\in\Sigma_\eps(\rho_k^*).
\endaligned
\ee
In particular, all jumps corresponding to poles, except for possibly one if
$\rho_k=\zeta(n/t)$, are exponentially decreasing. In this case we will keep the
pole condition which now reads
\be
\aligned
& \Big( \ti{m}_1(p) + \frac{\gam_k(p,n,t) T(p^*,n,t)}{T(p,n,t) (\pi(p) - \rho_k)} \ti{m}_2(p) \Big) \ge - \dinus{n,t},
\mbox{ near $\rho_k$},\\
& \Big( \frac{\gam_k(p^*,n,t) T(p,n,t)}{T(p^*,n,t) (\pi(p) - \rho_k)}  \ti{m}_1(p) + \ti{m}_2(p) \Big) \ge - \dinu{n,t},
\mbox{ near $\rho_k^*$},
\endaligned
\ee
Furthermore, the jump along
$\Sigma$ is given by
\be
\ti{v}(p) = \begin{cases}
\ti{b}_-(p)^{-1} \ti{b}_+(p), \qquad \pi(p) > \zeta(n/t),\\
\ti{B}_-(p)^{-1} \ti{B}_+(p), \qquad \pi(p) < \zeta(n/t),\\
\end{cases}
\ee
where
\[
\ti{b}_-(p) = \begin{pmatrix} 1 & \frac{T(p)}{T(p^*)} R(p^*) \Theta(p^*) \E^{-t\phi(p)} \\ 0 & 1 \end{pmatrix},\,
\ti{b}_+(p) = \begin{pmatrix} 1 & 0 \\ \frac{T(p^*)}{T(p)} R(p) \Theta(p) \E^{t\phi(p)} & 1 \end{pmatrix},
\]
and
\[
\ti{B}_-(p) = \begin{pmatrix} 1 & 0 \\ - \frac{T_-(p^*)}{T_-(p)} R(p) \Theta(p) \E^{t\phi(p)}& 1 \end{pmatrix}\!, \,
\ti{B}_+(p) = \begin{pmatrix} 1 & - \frac{T_+(p)}{T_+(p^*)} R(p^*) \Theta(p^*) \E^{-t\phi(p)} \\ 0 & 1 \end{pmatrix}\!.
\]
In our next step we make a contour deformation and move the corresponding parts into regions where
the off-diagonal terms are exponentially decreasing. For this we take some sufficiently small loops $C_k$ around
each spectral band $[E_{2k},E_{2k+1}]$ on the upper sheet (see Figure~\ref{fig3}). In particular, these loops must not intersect with any of 
the loops around the eigenvalues $\rho_j$.
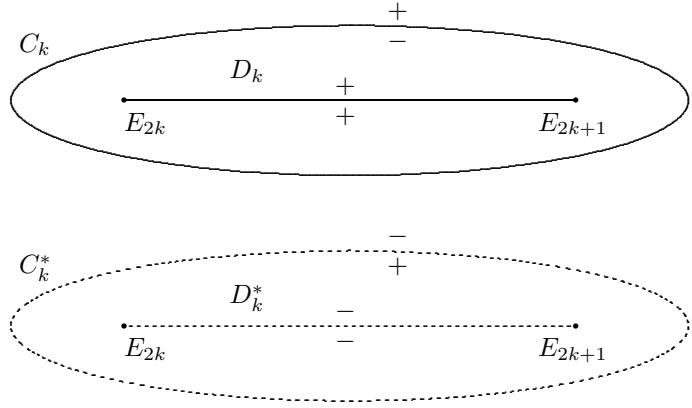
\begin{figure}
\begin{picture}(10,6)

\put(3.4,4.8){$D_k$}
\put(0.6,5.15){$C_k$}
\put(2,4.5){\circle*{0.06}}
\put(2,4.1){$E_{2k}$}
\put(8,4.5){\circle*{0.06}}
\put(7.5,4.1){$E_{2k+1}$}

\put(4.8,4.6){$+$}
\put(4.8,4.2){$+$}

\put(5.5,5.6){$+$}
\put(5.5,5.2){$-$}

\put(2,4.5){\line(1,0){6}}
\put(5,4.5){\curve(0, 1, 2.645, 0.809, 4.28, 0.309, 4.28, -0.309, 2.645, -0.809, 0, -1, %
-2.645, -0.809, -4.28, -0.309, -4.28, 0.309, -2.645, 0.809, 0, 1)}

%%%%%

\put(3.4,1.8){$D_k^*$}
\put(0.6,2.2){$C_k^*$}
\put(2,1.5){\circle*{0.06}}
\put(2,1.1){$E_{2k}$}
\put(8,1.5){\circle*{0.06}}
\put(7.5,1.1){$E_{2k+1}$}

\put(4.8,1.6){$-$}
\put(4.8,1.2){$-$}

\put(5.5,2.6){$-$}
\put(5.5,2.2){$+$}

\curvedashes{0.05,0.05}
\put(2,1.5){\curve(0,0,6,0)}
\put(5,1.5){\curve(0, 1, 2.645, 0.809, 4.28, 0.309, 4.28, -0.309, 2.645, -0.809, 0, -1, %
-2.645, -0.809, -4.28, -0.309, -4.28, 0.309, -2.645, 0.809, 0, 1)}
\end{picture}
\caption{The small lens contour around a spectral band.
Views from the top and bottom sheet.} \label{fig3}
\end{figure}
Then, an investigation of the sign of $\re(\phi)$ shows that
\be
\begin{cases}
\re(\phi(p) ) < 0, & p \in D_k, \: \pi(p) > \zeta(n/t)\\
 \re(\phi(p) ) > 0 ,& p \in D_k, \: \pi(p) < \zeta(n/t)
\end{cases}
\ee
and we can deform our contour according to
\be
\hat{m}(p) = \begin{cases}
\ti{m}(p) \ti{b}_+(p)^{-1}, & p \in D_k, \: \pi(p) > \zeta(n/t),\\
\ti{m}(p) \ti{b}_-(p)^{-1}, & p \in D_k^*, \: \pi(p) > \zeta(n/t),\\
\ti{m}(p) \ti{B}_+(p)^{-1}, & p \in D_k, \: \pi(p) < \zeta(n/t),\\
\ti{m}(p) \ti{B}_-(p)^{-1}, & p \in D_k^*, \: \pi(p) < \zeta(n/t),\\
\ti{m}(p), & \text{else},
\end{cases}
\ee
such that the jump on $\Sigma$ disappears and on $\bigcup_{k=0}^g (C_k \cup C_k^*)$ is given by
\be
\hat{v}(p) = \begin{cases}
\ti{b}_+(p), & p \in C_k, \: \pi(p) > \zeta(n/t),\\
\ti{b}_-(p)^{-1}, & p \in C_k^*, \: \pi(p) > \zeta(n/t),\\
\ti{B}_+(p), & p \in C_k, \: \pi(p) < \zeta(n/t),\\
\ti{B}_-(p)^{-1}, & p \in C_k^*, \: \pi(p) < \zeta(n/t),\\
\ti{v}(p), & \text{else}.
\end{cases}
\ee
The jumps on the small circles around the eigenvalues remain unchanged.

Here we have assumed that $R$ has an analytic extension to the corresponding regions. We will show
how this restriction can be overcome below. Moreover, in order to obtain uniform errors we need to require
that the $\nu_j$ retain some minimal distance from the jump contour. But this implies that some regions
of $(n,t)$ values are not covered. However, we can evade this obstacle by slightly deforming our contour such
that it has some positive distance to the first one. This way we also cover the missing regions.

Now we are ready to prove Theorem~\ref{thm:asym} by applying Theorem~\ref{thm:remcontour} as follows:

If $|\zeta(n/t) - \rho_k|>\eps$ for all $k$ we can choose $\gam_0=0$ and $w_0^t$ by removing all jumps
corresponding to poles from $w^t$. In particular, the error between the solutions of $w^t$ and $w_0^t$
is exponentially small. This proves the second part of Theorem~\ref{thm:asym} upon comparing
\be
m(p) = \hat{m}(p) \begin{pmatrix} T(p^*,n,t)^{-1} & 0\\ 0 & T(p,n,t)^{-1} \end{pmatrix}
\ee
with \eqref{m2infp} using \eqref{Tinfp}. 

Otherwise, if $|\zeta(n/t) - \rho_k|<\eps$ for some $k$, we choose $\gam_0^t=\gam_k(n,t)$ and $w_0^t \equiv 0$.
Again we conclude that the error between the solutions of $w^t$ and $w_0^t$ is exponentially small.

If $R(p)$ has no analytic extension, we will approximate $R(p)$ by analytic functions in the spirit of \cite{dz}. In fact, as in
\cite[Sect.~5]{krt2} one sees that it indeed suffices to find an analytic approximation for the left and right reflection
coefficients. Moreover, for each spectral band (viewed as a circle on the Riemann surface)
one can take the imaginary part of the phase as a coordinate transform and then use the
usual Fourier transform with respect to this coordinate (compare \cite[Lem.~5.3]{krt2}). In order to
avoid problems when one of the poles $\nu_j$ hits $\Sigma$, one just has to make the approximation in
such a way that the nonanalytic rest vanishes at the band edges (cf.\ Remark~\ref{rem:sieqpole}).
That is, split $R$ according to
\begin{align} \nn
R(p) =& R(E_{2j}) \frac{z-E_{2j}}{E_{2j+1}-E_{2j}} + R(E_{2j+1}) \frac{z-E_{2j+1}}{E_{2j}-E_{2j+1}}\\
& \pm \sqrt{z-E_{2j}} \sqrt{z-E_{2j+1}} \ti{R}(p), \qquad p=(z,\pm),
\end{align}
and approximate $\ti{R}$.

\begin{remark}
Note that one can even do slightly better by using the weighted measure $-\I\Rg{p}d\pi$ on $\Sigma$,
in which case it suffices if $R$ is just $C^{l+1,\gam}(\Sigma)$ for some $\gam>0$ rather than $C^{l+2}(\Sigma)$.
In fact, one can show that the Cauchy operators are still bounded in this weighted Hilbert space
(cf.\ \cite[Thm.~4.1]{gk}).
\end{remark}

\noindent
{\bf Acknowledgments.}
G.T. gratefully acknowledges the extraordinary hospitality of the Department of Mathematics at
Rice University, where part of this research was done.

\appendix

\section{Singular integral equations}
\label{sec:sieq}

In this section we show how to transform a meromorphic vector Riemann--Hilbert problem
with simple poles at $\rho$, $\rho^{\ast}$,
\begin{align}\nn
& m_+(p) = m_-(p) v(p), \qquad p\in \Sigma,\\ \nn
& (m_1) \ge -\dimuzs - \dirho, \qquad (m_2) \ge -\dimuz - \dirhos,\\ \label{eq:rhp5m}
& \Big( m_1(p) - \frac{R^{1/2}_{2g+2}(\rho)}{\prod_{k=1}^g (\rho-\mu_k)}
\frac{\gam_j}{\pi(p)-\rho} \frac{\psi_q(p)}{\psi_q(p^*)} m_2(p) \Big) \ge - \dimuz^*,
\mbox{ near $\rho$},\\ \nn
& \Big( - \frac{R^{1/2}_{2g+2}(\rho)}{\prod_{k=1}^g (\rho-\mu_k)}
\frac{\gam}{\pi(p)-\rho} \frac{\psi_q(p)}{\psi_q(p^*)}  m_1(p) + m_2(p) \Big) \ge - \dimuz,
\mbox{ near $\rho^*$},\\ \nn
& m(p^*) = m(p) \sigI,\\ \nn
& m(\infty_+) = \begin{pmatrix} 1 & m_2\end{pmatrix},
\end{align}
into a singular integral equation.
Since we require the symmetry condition \eqref{eq:symcond} for our Riemann--Hilbert
problems, we need to adapt the usual Cauchy kernel to preserve this symmetry.
Moreover, we keep the single soliton as an inhomogeneous term which will play
the role of the leading asymptotics in our applications.

\begin{hypothesis}\label{hyp:sym}
Let $\Sigma$ consist of a finite number of smooth oriented finite curves in $\mathbb{M}$
which intersect at most finitely many times with all intersections being transversal.
Assume that the contour $\Sigma$ does not contain $\infty_\pm$ and is invariant under
$p\mapsto p^{\ast}$. It is oriented such that under the mapping $p\mapsto p^{\ast}$
sequences converging from the positive sided to $\Sigma$ are mapped to sequences
converging to the negative side. Moreover, suppose the jump matrix $v$
can be factorized according to $v = b_-^{-1} b_+ = (\id-w_-)^{-1}(\id+w_+)$, where
$w_\pm = \pm(b_\pm-\id)$ are continuous and satisfy
\be\label{eq:wpmsym}
w_\pm(p^\ast) = \sigI w_\mp(p) \sigI,\quad z\in\Sigma.
\ee
\end{hypothesis}

In order to respect the symmetry condition we will restrict our attention to
the set $L^2_{s}(\Sigma)$ of square integrable functions $f:\Sigma\to\C^{2}$ such that
\be\label{eq:sym}
f(p) = f(p^\ast) \sigI.
\ee
Clearly this will only be possible if we require our jump data to be symmetric as well (i.e.,
Hypothesis~\ref{hyp:sym} holds).

We begin by introducing the Cauchy operator following
Section~5 in \cite{kt2}. Given a nonspecial divisor $\dimuz$,
we introduce the Cauchy kernel
\be\label{defOm}
\Omega_p^{\uhmuz,\rho} = \om_{p\, \rho} + \sum_{j=1}^g I_j^{\uhmuz,\rho}(p) \zeta_j,
\ee
where
\be
I_j^{\uhmuz,\rho}(p) = \sum_{\ell=1}^g c_{j\ell}(\uhmuz) \int_\rho^p \om_{\hmu_\ell,0}.
\ee
Here $\om_{q,0}$ is the (normalized) Abelian differential of the second kind with
a second order pole at $q$.
Note that $I_j^{\uhmuz,\rho}(p)$ has first order poles at the points $\uhmuz$.

Introduce
\be
\ul{\Omega}_p^{\uhmuz,\rho} = \begin{pmatrix} \Omega_{p}^{\uhmuz^*,\rho^*} & 0 \\                                       
0 & \Omega_p^{\uhmuz,\rho} \end{pmatrix},
\ee
and define the Cauchy operator by
\be
(C f)(p) = \frac{1}{2\pi\I} \int_\Sigma f(s) \ul{\Omega}_p^{\uhmuz,\rho}
\ee
acting on vector-valued functions $f:\Sigma\to\C^{2}$. We will assume that $\dimuz$ does
not hit $\Sigma$ (see Remark~\ref{rem:sieqpole} below for the case where this assumptions does
not hold).

\begin{lemma}
Assume Hypothesis~\ref{hyp:sym}.
The Cauchy operator $C$ has the properties, that the non-tangential boundary limits
\be
(C_\pm f)(q) = \lim_{p \to q \in \Sigma} \frac{1}{2 \pi \I} \int_{\Sigma} f\, \ul{\Omega}_p^\uhnuz
\ee
from the left and right of $\Sigma$ respectively (with respect to its orientation)
are bounded operators $L^2_s(\Sigma) \to L^2_s(\Sigma)$. The bound
can be chosen independent of the divisor as long as it stays some finite
distance away from $\Sigma$. The operators $C_\pm$ satisfy
\be\label{eq:cpcm}
C_+ - C_- = \id
\ee
and
\be\label{eq:Cnorm}
(Cf)(\rho^\ast) = (0\quad\ast), \qquad (Cf)(\rho) = (\ast\quad 0).
\ee
Furthermore, $C$ restricts to $L^2_{s}(\Sigma)$, that is
\be
(C f) (p^\ast) = (Cf)(p) \sigI,\quad p\in\mathbb{M}\backslash\Sigma
\ee
for $f\in L^2_{s}(\Sigma)$ and if $w_\pm$ satisfy \eqref{eq:wpmsym} we also have
\be \label{eq:symcpm}
C_\pm(f w_\mp)(p^\ast) = C_\mp(f w_\pm)(p) \sigI,\quad p\in\Sigma.
\ee
\end{lemma}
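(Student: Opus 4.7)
The plan is to reduce each assertion to a local statement on $\Sigma$ and appeal to the classical Plemelj--Sokhotski theory, treating the deviations from the plane Cauchy kernel as smooth perturbations that do not affect any qualitative property.

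First I would verify boundedness of $C_\pm$ on $L^2_s(\Sigma)$. The Abelian differential $\om_{p,\rho}$, viewed as a $1$-form in the integration variable $s\in\Sigma$, has a simple pole at $s=p$ of residue $+1$ and differs from the local Cauchy kernel $d\zeta/(\zeta-\zeta_0)$ by a smooth form. The correction $\sum_j I_j^{\uhmuz,\rho}(p)\zeta_j$ has coefficients depending holomorphically on $p\in\M\setminus\{\uhmuz,\rho\}$ (hence uniformly bounded on $\Sigma$ since $\uhmuz$ is assumed to stay away from $\Sigma$) and is a smooth $1$-form in $s$. Thus $C_\pm$ split as the classical Cauchy--Calderón operator on smooth curves plus a bounded integral operator with a continuous kernel, yielding $L^2$-boundedness uniformly in the divisor; see the analogous estimate in \cite[Sec.~5]{kt2}.

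The jump identity $C_+-C_-=\id$ is local: the correction $\sum_j I_j^{\uhmuz,\rho}(p)\zeta_j$, as a function of the parameter $p\in\M\setminus\Sigma$, is holomorphic away from $\uhmuz$ and therefore has no jump across $\Sigma$, so only the leading piece $\om_{p,\rho}$ contributes, and the classical Plemelj formula in a chart gives \eqref{eq:cpcm}. The normalization \eqref{eq:Cnorm} is built into the construction of $\Omega_p^{\uhmuz,\rho}$: the two residues of $\om_{p,\rho}$ cancel as $p\to\rho$, while $I_j^{\uhmuz,\rho}(\rho)=0$ by the lower-limit convention in its defining integral, so $\Omega_\rho^{\uhmuz,\rho}\equiv 0$; the statement at $\rho^\ast$ for the first component follows from the analogous identity $\Omega_{\rho^\ast}^{\uhmuz^\ast,\rho^\ast}\equiv 0$ used in the upper-left entry of $\ul{\Omega}_p^{\uhmuz,\rho}$.

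The symmetry $(Cf)(p^\ast)=(Cf)(p)\sigI$ is the key place where the block structure of $\ul{\Omega}_p^{\uhmuz,\rho}$ enters: the sheet exchange sends the pair $(\uhmuz,\rho)$ to $(\uhmuz^\ast,\rho^\ast)$, so that $\Omega_{p^\ast}^{\uhmuz,\rho}=\Omega_p^{\uhmuz^\ast,\rho^\ast}$, which interchanges the two diagonal entries of $\ul{\Omega}$ and amounts to conjugation by $\sigI$; combined with \eqref{eq:sym} this gives the claimed identity on $\M\setminus\Sigma$. For \eqref{eq:symcpm} one combines this with \eqref{eq:wpmsym} and observes that the orientation convention in Hypothesis~\ref{hyp:sym} guarantees that the map $p\mapsto p^\ast$ interchanges the positive and negative sides of $\Sigma$, which swaps the non-tangential limits $C_+\leftrightarrow C_-$.

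The main obstacle I expect is bookkeeping: one must check that $\Omega_p^{\uhmuz,\rho}$ really has first-order poles exactly at $\uhmuz$ (and nowhere else on $\M\setminus\{p\}$), that it vanishes at $\rho$, and that the sheet exchange acts on the two blocks in precisely the way claimed. All of these are residue computations based on the normalization of $\om_{q,0}$ and $\om_{p,\rho}$ and the single-valuedness achieved via the coefficients $c_{j\ell}(\uhmuz)$; they are straightforward in principle but must be carried out consistently with the orientation convention, since a sign error there would break \eqref{eq:symcpm} rather than any other statement.
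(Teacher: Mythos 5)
Your proposal is correct and follows the standard route: the paper's own proof of this lemma is a one-line citation of Theorem~5.1 in \cite{kt2}, and your decomposition of $\ul{\Omega}_p^{\uhmuz,\rho}$ into the classical Cauchy kernel plus a smooth rank-$g$ correction, together with the Plemelj formula, the vanishing of $\om_{\rho\,\rho}$ and $I_j^{\uhmuz,\rho}(\rho)$, and the sheet-exchange/orientation bookkeeping, is exactly the argument that citation encapsulates. No substantive gap.
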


\begin{proof}
Follows from the properties of Cauchy operators, see Theorem~5.1 in \cite{kt2}.
\end{proof}

We have thus obtained a Cauchy transform with the required properties.
Following Sections~7 and 8 of \cite{bc}, we can solve our Riemann--Hilbert problem using this
Cauchy operator.

Introduce the operator $C_w: L_s^2(\Sigma)\to L_s^2(\Sigma)$ by
\be
C_w f = C_+(f w_{-}) + C_-(f w_{+}),\quad f\in L^2_s(\Sigma).
\ee
Recall from Lemma~\ref{lem:singlesoliton} that the
unique solution corresponding to $v\equiv \id$ is given by
\[
m_0(p)= \begin{pmatrix} f(p^\ast) & f(p) \end{pmatrix}, 
\]
for some given $f(p)$ with $(f) \geq - \di_{\uhmuz} - \di_{\rho^*}$.
Since we assumed $\dimuz$ to be away from $\Sigma$, we clearly have $m_0 \in L^2(\Sigma)$.

\begin{theorem}\label{thm:cauchyop}
Assume Hypothesis~\ref{hyp:sym}.

Suppose $m$ solves the Riemann--Hilbert problem \eqref{eq:rhp5m}. Then
\be\label{eq:mOm}
m(p) = (1-c_0) m_0(p) + \frac{1}{2\pi\I} \int_{\Sigma} \mu(s) (w_{+}(s) + w_{-}(s)) \ul{\Omega}_{p}^{\uhmuz,\rho},
\ee
where
\[
\mu = m_+ b_+^{-1} = m_- b_-^{-1} \quad\mbox{and}\quad
c_0= \left( \frac{1}{2\pi\I} \int_{\Sigma} \mu(s) (w_{+}(s) + w_{-}(s)) \ul{\Omega}_{\infty_+}^{\uhmuz,\rho} \right)_{\!1}.
\]
Here $(m)_j$ denotes the $j$'th component of a vector.
Furthermore, $\mu$ solves
\be\label{eq:sing4muc}
(\id - C_w) \mu = (1-c_0) m_0(p).
\ee

Conversely, suppose $\ti{\mu}$ solves 
\be\label{eq:sing4mu}
(\id - C_w) \ti{\mu} = m_0,
\ee
and
\[
\ti{c}_0= \left( \frac{1}{2\pi\I} \int_{\Sigma} \ti{\mu}(s) (w_{+}(s) + w_{-}(s)) \ul{\Omega}_{\infty_+}^{\uhmuz,\rho} \right)_{\!1} \ne 0,
\]
then $m$ defined via \eqref{eq:mOm}, with $(1-c_0)=(1-\ti{c}_0)^{-1}$ and $\mu=(1-\ti{c}_0)^{-1}\ti{\mu}$,
solves the Riemann--Hilbert problem \eqref{eq:rhp5m} and $\mu= m_\pm b_\pm^{-1}$.
\end{theorem}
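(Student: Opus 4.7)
The plan is to follow the standard Beals--Coifman strategy, modified to account for the inhomogeneous term coming from the single soliton $m_0$ and the poles at $\rho,\rho^{\ast}$, using the symmetric Cauchy kernel $\underline{\Omega}_p^{\uhmuz,\rho}$ that was engineered precisely to absorb these features.

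For the forward direction, suppose $m$ solves \eqref{eq:rhp5m}. Define $\mu = m_+ b_+^{-1} = m_- b_-^{-1}$; the two expressions agree because $b_+ = v b_-$. A direct computation gives
\begin{equation}
\mu w_+ = m_+ - \mu, \qquad \mu w_- = \mu - m_-,
\end{equation}
so $\mu(w_++w_-) = m_+ - m_-$. I would then introduce
\begin{equation}
N(p) = m(p) - (1-c_0) m_0(p) - \frac{1}{2\pi\I}\int_\Sigma \mu(s)(w_+(s)+w_-(s))\,\underline{\Omega}_p^{\uhmuz,\rho},
\end{equation}
and verify that $N$ solves the homogeneous version of \eqref{eq:rhp5m} with normalization $N(\infty_+)=\rN$. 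The jump cancels by $C_+-C_-=\id$ combined with $\mu(w_++w_-)=m_+-m_-$; the divisor bound $(N_j)\ge -\dimuz^{(\ast)}-\di_{\rho^{(\ast)}}$ follows from the pole structure of $\underline{\Omega}_p^{\uhmuz,\rho}$ and of $m_0$; the pole conditions at $\rho,\rho^\ast$ are satisfied because of \eqref{eq:Cnorm} ($(Cf)(\rho^\ast)=(0,\ast)$ and $(Cf)(\rho)=(\ast,0)$) together with the fact that $m_0$ already satisfies these conditions by Lemma~\ref{lem:singlesoliton}; the symmetry follows from \eqref{eq:symcpm} and symmetry of $m_0$; and the first component at $\infty_+$ vanishes by definition of $c_0$. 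By the uniqueness result of Section~4 (or Corollary~\ref{cor:unique} in the presence of solitons), $N\equiv 0$, which establishes \eqref{eq:mOm}. The integral equation \eqref{eq:sing4muc} is then obtained by taking the $+$-limit of \eqref{eq:mOm}, using $m_+ = \mu(\id+w_+)$ to write $\mu = m_+ - \mu w_+$, and applying $C_+ - \id = C_-$ to convert $C_+(\mu w_+) - \mu w_+$ into $C_-(\mu w_+)$, yielding $(\id-C_w)\mu = (1-c_0)m_0$.

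For the converse, let $\tilde{\mu}$ solve $(\id-C_w)\tilde{\mu}=m_0$ with $\tilde{c}_0\ne 0$ (so the rescaling is legitimate). Define $\mu = (1-\tilde c_0)^{-1}\tilde\mu$ and $m$ by formula \eqref{eq:mOm} with the stated value of $1-c_0$. The jump condition $m_+ = m_- v$ is immediate from $C_+ - C_- = \id$: a short calculation gives $m_\pm = \mu b_\pm$, hence $m_+ = \mu b_+ = m_- b_-^{-1}b_+ = m_- v$. The divisor bounds follow from the divisor of $\underline{\Omega}_p^{\uhmuz,\rho}$ in the variable $p$ and that of $m_0$; the pole conditions at $\rho,\rho^\ast$ are again delivered by \eqref{eq:Cnorm} and $m_0$; the symmetry condition follows from Hypothesis~\ref{hyp:sym}, \eqref{eq:wpmsym}, and \eqref{eq:symcpm}; and the normalization $m(\infty_+)=(1,\ast)$ holds because $m_0(\infty_+)_1=1$ and the choice $1-c_0 = (1-\tilde c_0)^{-1}$ is arranged precisely so that the first component at $\infty_+$ equals $(1-c_0)\cdot 1 + c_0 = 1$.

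The main obstacle is bookkeeping, not depth: one must carefully check that the Cauchy representation \eqref{eq:mOm} reproduces exactly the pole/residue structure demanded by \eqref{eq:rhp5m} at $\rho$ and $\rho^\ast$. The crucial input is property \eqref{eq:Cnorm} of the normalized Cauchy kernel together with the fact that $m_0$ itself solves the pole conditions; once these are in hand, $N$ (respectively $m - m_0^{(\text{new soliton})}$) inherits the correct behaviour at $\rho,\rho^\ast$ and Corollary~\ref{cor:unique} closes the argument. A secondary subtlety is ensuring the rescaling $\mu=(1-\tilde c_0)^{-1}\tilde\mu$ is consistent with the definition of $c_0$ as a linear functional of $\mu$; this is a direct algebraic check, using that $c_0$ is linear in $\mu$ and that by construction $\tilde c_0$ corresponds to the same functional applied to $\tilde\mu$.
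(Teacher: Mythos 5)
Your overall strategy is exactly the one the paper has in mind (its proof is just ``Follows as in \cite{krt}''): define $\mu=m_\pm b_\pm^{-1}$, note $\mu(w_++w_-)=m_+-m_-$, subtract the Cauchy integral and the rescaled one--soliton solution, and kill the remainder $N$ by the symmetric vanishing/uniqueness argument; the converse is the reverse computation via $C_\pm = C_\mp \pm \id$. All of that is fine, including the use of \eqref{eq:Cnorm} to recover the pole conditions at $\rho,\rho^*$.

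The one step that fails as written is the normalization check at $\infty_+$ in the converse direction, precisely the ``direct algebraic check'' you defer. Carry it out: with $\mu=\lambda\ti{\mu}$ and the coefficient $\lambda$ in front of $m_0$, the first component of \eqref{eq:mOm} at $\infty_+$ is
\begin{equation*}
m_1(\infty_+)=\lambda\, m_{0,1}(\infty_+)+\Big(\tfrac{1}{2\pi\I}\int_\Sigma \mu(w_++w_-)\,\ul{\Omega}_{\infty_+}^{\uhmuz,\rho}\Big)_1
=\lambda\,(1+\ti{c}_0),
\end{equation*}
since the Cauchy term is linear in $\mu$ and contributes $\lambda\ti{c}_0$, not $+c_0$ independently of $\lambda$. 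So the correct choice is $\lambda=(1+\ti{c}_0)^{-1}$ under the hypothesis $1+\ti{c}_0\neq 0$; equivalently, if $\mu$ solves \eqref{eq:sing4muc} and $\ti{\mu}=(1-c_0)^{-1}\mu$, then $\ti{c}_0=c_0/(1-c_0)$ and hence $1-c_0=(1+\ti{c}_0)^{-1}$, not $(1-\ti{c}_0)^{-1}$. With the constants as stated in the theorem one gets $m_1(\infty_+)=(1+\ti{c}_0)/(1-\ti{c}_0)$, which equals $1$ only when $\ti{c}_0=0$ --- contradicting the stated hypothesis $\ti{c}_0\neq 0$. In other words, the statement contains a sign typo (both in the rescaling and in the non-degeneracy condition, which should read $\ti{c}_0\neq -1$), and your claimed verification ``$(1-c_0)\cdot 1+c_0=1$'' silently conflates $c_0$ (a functional of $\mu$) with $\ti{c}_0$ (the same functional of $\ti{\mu}$). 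A secondary, smaller point: in the forward direction, to conclude $N\equiv 0$ you should not merely cite Corollary~\ref{cor:unique} (which concerns the specific $m$ of \eqref{defm}); what is actually needed is the Riemann--Roch count from the uniqueness proof of Lemma~\ref{lem:singlesoliton}: $N$ has no jump, so $N=(h(p^*)\ \ h(p))$ with $(h)\ge -\dimuz-\di_{\rho^*}$, hence $h=\alpha f+\beta$, the pole condition forces $\beta=0$, and $N_1(\infty_+)=0$ forces $\alpha=0$. That argument does close the gap, but it should be spelled out rather than attributed to the corollary.
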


\begin{proof}
Follows as in \cite{krt}.
\end{proof}

Hence we have a formula for the solution of our Riemann--Hilbert problem $m(z)$ in terms of
$(\id - C_w)^{-1} m_0$ and this clearly raises the question of bounded
invertibility of $\id - C_w$. This follows from Fredholm theory (cf.\ e.g. \cite{zh}):

\begin{lemma}
Assume Hypothesis~\ref{hyp:sym}.

The operator $\id-C_w$ is Fredholm of index zero,
\be
\ind(\id-C_w) =0.
\ee
\end{lemma}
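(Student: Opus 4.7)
My plan is to prove Fredholmness and vanishing of the index by the classical Beals--Coifman/Zhou argument, adapted to the Riemann-surface Cauchy kernel $\ul{\Omega}_p^{\uhmuz,\rho}$ of Section~5 of \cite{kt2}. First, boundedness of $C_w$ on $L^2_s(\Sigma)$ is immediate from the previous lemma (bounds on $C_\pm$) together with continuity of $w_\pm$ on the compact contour $\Sigma$ (so that multiplication by $w_\pm$ is bounded). Hence $\id - C_w$ is a bounded operator to which Fredholm theory applies.

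To get vanishing of the index, I would use homotopy invariance: the family $\tau \mapsto \id - C_{\tau w}$, $\tau\in[0,1]$, is continuous in operator norm and reduces at $\tau=0$ to $\id$, which is Fredholm of index $0$. Hence once Fredholmness is established along the deformation, $\ind(\id - C_w) = \ind(\id) = 0$ follows automatically.

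The remaining step, Fredholmness of $\id - C_w$ itself, I would establish by Zhou's parametrix trick. Associating with the inverse jump $v^{-1} = b_+^{-1} b_- = (\id + w_+)^{-1}(\id - w_-)$ auxiliary weights $w'_\pm$ and the corresponding operator $C_{w'} f = C_+(f w'_-) + C_-(f w'_+)$, and using $C_+ - C_- = \id$, one verifies
\[
(\id - C_w)(\id - C_{w'}) = \id - K_1, \qquad (\id - C_{w'})(\id - C_w) = \id - K_2,
\]
where $K_1,K_2$ are built from commutators $[C_\pm, M_f]$ with continuous $f$ coming from $w_\pm$ and $w'_\pm$. On smooth contours with continuous symbols these commutators are compact by the classical Calder\'on commutator theorem, so $\id - C_w$ has a two-sided inverse modulo compacts and is Fredholm. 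Combined with the homotopy step this gives $\ind(\id - C_w)=0$.

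The main obstacle is checking the commutator compactness for our Riemann-surface Cauchy kernel rather than the flat one. However, $\ul{\Omega}_p^{\uhmuz,\rho}$ has the same principal singularity as the flat Cauchy kernel near any point of $\Sigma$, and the corrections from the holomorphic part of $\omega_{p,\rho}$ and from $\sum_j I_j^{\uhmuz,\rho}(p)\zeta_j$ are smooth along $\Sigma$ (since $\dimuz$ is kept off $\Sigma$), contributing only smoothing, hence compact, operators. The principal-part commutator then reduces chart by chart to the planar case, which is valid thanks to the smoothness and transversality assumptions on $\Sigma$ in Hypothesis~\ref{hyp:sym}.
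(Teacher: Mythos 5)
Your argument is correct and is essentially the proof the paper relies on: the paper's own proof is just ``Follows as in \cite{kt2}, \cite{krt}'', and those references carry out exactly the Beals--Coifman/Zhou scheme you describe --- a parametrix $\id-C_{w'}$ built from the inverse jump, compactness of the commutators of $C_\pm$ with continuous multipliers, and a homotopy $\tau\mapsto\id-C_{\tau w}$ to conclude $\ind=0$ --- with the Riemann-surface Cauchy kernel differing from the flat one by a smooth, hence compact, remainder, as you note. The one step worth making explicit is that Fredholmness along the whole homotopy needs $\id+\tau w_+$ and $\id-\tau w_-$ to be pointwise invertible for every $\tau\in[0,1]$ (so that the parametrix weights $w'_{\pm,\tau}$ exist); this is automatic here because the $w_\pm$ arising from the triangular factorizations $\ti b_\pm$, $\ti B_\pm$ are nilpotent.
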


\begin{proof}
Follows as in \cite{kt2}, \cite{krt}.
\end{proof}

By the Fredholm alternative, it follows that to show the bounded invertibility of $\id-C_w$
we only need to show that $\ker (\id-C_w) =0$. The latter being equivalent to
unique solvability of the corresponding vanishing Riemann--Hilbert problem.

\begin{corollary}
Assume Hypothesis~\ref{hyp:sym}.

A unique solution of the Riemann--Hilbert problem \eqref{eq:rhp5m} exists if and only if the corresponding
vanishing Riemann--Hilbert problem, where the normalization condition is replaced by
$m(0)= \begin{pmatrix} 0 & m_2\end{pmatrix}$, has at most one solution.
\end{corollary}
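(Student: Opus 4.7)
The proof proceeds by applying the Fredholm alternative to the operator $\id - C_w$ on $L^2_s(\Sigma)$, which the preceding lemma has established is Fredholm of index zero. Thus it is invertible precisely when it is injective, and the whole statement will reduce to translating the condition $\ker(\id - C_w) = \{0\}$ into a statement about the vanishing Riemann--Hilbert problem.

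The first step is to set up a bijective correspondence between $\ker(\id-C_w)$ and solutions of the vanishing problem. Given $\mu_v \in \ker(\id-C_w)$, I would feed it into the integral representation \eqref{eq:mOm} with the leading inhomogeneous term dropped (i.e.\ the second half of Theorem~\ref{thm:cauchyop} applied with $m_0 \mapsto 0$). The resulting $m_v$ inherits the correct jump, divisor, symmetry and pole conditions from the construction, while its first component at $\infty_+$ automatically vanishes since $m_0$ has been removed; hence $m_v$ solves the vanishing problem. Conversely, starting from a vanishing solution $m_v$, I would set $\mu_v := m_{v,\pm} b_\pm^{-1}$, verify $\mu_v \in L^2_s(\Sigma)$ using Hypothesis~\ref{hyp:sym}, and check via a direct computation with the Plemelj formulas \eqref{eq:cpcm} that $(\id - C_w)\mu_v = 0$, the absence of the $m_0$ contribution being exactly the $\infty_+$-vanishing of $m_v$.

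Combining this correspondence with the Fredholm alternative yields the equivalence: $\id - C_w$ is bijective if and only if the vanishing problem has only the trivial solution. When this holds, \eqref{eq:sing4mu} admits a unique solution $\ti{\mu}$, and the converse half of Theorem~\ref{thm:cauchyop} produces a bona fide solution of \eqref{eq:rhp5m}. Uniqueness at the level of the Riemann--Hilbert problem is immediate in both directions: the jump, divisor, symmetry and pole conditions are linear, and the normalization $m_1(\infty_+)=1$ is affine, so the difference of two solutions of \eqref{eq:rhp5m} is a vanishing solution; conversely, if \eqref{eq:rhp5m} had a unique solution $m$, adding any nontrivial vanishing solution $m_v$ would produce a second solution $m + m_v$, forcing $m_v = 0$.

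The main obstacle is the technical normalization condition $\ti{c}_0 \neq 1$ (equivalently $1-\ti{c}_0 \neq 0$) required to convert the solution of the singular integral equation into a solution of \eqref{eq:rhp5m} with the correct affine normalization at $\infty_+$. I would handle this by contradiction: if $\ti{c}_0 = 1$, then the function produced by \eqref{eq:mOm} with $\mu = \ti\mu$ and the leading $m_0$ term dropped has vanishing first component at $\infty_+$, so it solves the vanishing problem and must be zero; tracing this back through $\mu = m_\pm b_\pm^{-1}$ yields $\ti\mu = 0$, contradicting $(\id-C_w)\ti\mu = m_0 \neq 0$. This closes the loop and establishes the corollary.
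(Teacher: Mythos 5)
Your overall strategy is exactly the paper's (the paper itself gives no detailed proof: it simply invokes the Fredholm alternative from the preceding lemma and asserts that $\ker(\id-C_w)=0$ is equivalent to unique solvability of the vanishing problem, deferring details to the references). However, both places where you manipulate the inhomogeneous term $m_0$ in \eqref{eq:mOm} contain a concrete error, and one of them leaves a real gap. First, in the correspondence $\ker(\id-C_w)\leftrightarrow\{\text{vanishing solutions}\}$ you claim that for $\mu_v\in\ker(\id-C_w)$ the function $m_v=\frac{1}{2\pi\I}\int_\Sigma \mu_v(w_++w_-)\ul{\Omega}_p^{\uhmuz,\rho}$ has first component ``automatically'' vanishing at $\infty_+$ because $m_0$ has been removed. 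This is false: the Cauchy kernel \eqref{defOm} is normalized at $\rho$, $\rho^*$ (see \eqref{eq:Cnorm}), \emph{not} at $\infty_+$, so $m_{v,1}(\infty_+)$ is a constant $c_v$ that need not vanish. You must split into cases: if $c_v=0$ then $m_v$ is a vanishing solution, hence $m_v=0$ and $\mu_v=m_{v,\pm}b_\pm^{-1}=0$; if $c_v\ne 0$ then $c_v^{-1}m_v$ already solves \eqref{eq:rhp5m} directly, and uniqueness again follows from triviality of the vanishing problem, so the corollary's conclusion holds in that branch without invertibility of $\id-C_w$. As written, your argument silently assumes the first case.

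Second, in the contradiction argument for the obstruction $\ti{c}_0$: the function you should consider is $m_0(p)+\frac{1}{2\pi\I}\int_\Sigma \ti\mu(w_++w_-)\ul{\Omega}_p^{\uhmuz,\rho}$, i.e.\ with the $m_0$ term \emph{kept} (with coefficient one), not dropped. Dropping it destroys the jump relation --- from $(\id-C_w)\ti\mu=m_0$ and \eqref{eq:cpcm} one gets $F_-=\ti\mu b_- - m_0$ for the bare integral $F$, so $F_+\ne F_- v$ --- and moreover $F_1(\infty_+)=\ti c_0\ne 0$ rather than $0$. With $m_0$ retained the sum does satisfy the jump, its first component at $\infty_+$ equals $1$ minus (up to the paper's sign convention) $\ti c_0$, and it is precisely at the forbidden value of $\ti c_0$ that this produces a nontrivial solution of the vanishing problem, yielding the contradiction you want. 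Both fixes are routine, but as stated the two steps do not go through.
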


We are interested in comparing two Riemann--Hilbert problems associated with
respective jumps $w_0$ and $w$ with $\|w-w_0\|_\infty$ small,
where
\be
\|w\|_\infty= \|w_+\|_{L^\infty(\Sigma)} + \|w_-\|_{L^\infty(\Sigma)}.
\ee
For such a situation we have the following result:

\begin{theorem}\label{thm:remcontour}
Assume that for some data $w_0^t$ the operator
\be
\id-C_{w_0^t}: L^2_s(\widehat{\Sigma}) \to L^2_s(\widehat{\Sigma})
\ee
has a bounded inverse, where the bound is independent of $t$, and
let $\zeta=\zeta_0$, $\gam^t=\gam_0^t$.

Furthermore, assume $w^t$ satisfies
\be
\|w^t - w_0^t\|_\infty \leq \alpha(t)
\ee
for some function $\alpha(t) \to 0$ as $t\to\infty$. Then
$(\id-C_{w^t})^{-1}: L^2_s(\Sigma)\to L^2_s(\Sigma)$ also exists
for sufficiently large $t$ and the associated solutions of
the Riemann--Hilbert problems \eqref{eq:rhp5m} only differ by $O(\alpha(t))$.
\end{theorem}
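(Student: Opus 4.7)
The strategy is to prove invertibility of $\id - C_{w^t}$ by a second-resolvent/Neumann-series argument using the assumed uniform bound on $(\id - C_{w_0^t})^{-1}$, and then to compare the two solutions via the integral representation provided by Theorem~\ref{thm:cauchyop}. First I would view $w^t$ and $w_0^t$ as living on the common contour $\Sigma\cup\widehat{\Sigma}$, extended by zero where necessary, so that the operators $C_{w^t}$ and $C_{w_0^t}$ act on the same space $L^2_s$. The $L^2_s$-boundedness of the Cauchy projectors $C_\pm$ (which is uniform by the lemma preceding Theorem~\ref{thm:cauchyop}) gives
\[
\|C_{w^t}-C_{w_0^t}\|_{L^2_s\to L^2_s}\le \|C_+\|\,\|w_-^t-w_{0,-}^t\|_\infty+\|C_-\|\,\|w_+^t-w_{0,+}^t\|_\infty\le K\alpha(t),
\]
where $K$ depends only on $\|C_\pm\|$.

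Setting $M:=\sup_t\|(\id-C_{w_0^t})^{-1}\|<\infty$ and factoring
\[
\id-C_{w^t}=(\id-C_{w_0^t})\bigl(\id-(\id-C_{w_0^t})^{-1}(C_{w^t}-C_{w_0^t})\bigr),
\]
the geometric series converges as soon as $MK\alpha(t)<\tfrac12$, yielding bounded invertibility of $\id-C_{w^t}$ on $L^2_s$ with the uniform estimate
\[
\bigl\|(\id-C_{w^t})^{-1}-(\id-C_{w_0^t})^{-1}\bigr\|=O(\alpha(t)).
\]
Applied to the auxiliary functions from Theorem~\ref{thm:cauchyop}, $\tilde\mu^t=(\id-C_{w^t})^{-1}m_0$ and $\tilde\mu_0^t=(\id-C_{w_0^t})^{-1}m_0$, this immediately gives $\|\tilde\mu^t-\tilde\mu_0^t\|_{L^2_s}=O(\alpha(t))$, since $m_0$ is the same for both problems (same background divisor, same $\zeta=\zeta_0$, $\gam^t=\gam_0^t$).

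To transfer this to pointwise closeness of $m^t$ and $m_0^t$, I would first control the self-referential normalization constant. The scalar $\tilde c_0^t$ is a bounded linear functional of $\tilde\mu^t(w_+^t+w_-^t)$, so Cauchy--Schwarz and the two estimates above give $\tilde c_0^t-\tilde c_0^{0,t}=O(\alpha(t))$. The hypothesis of uniform invertibility for the reference problem guarantees that $1-\tilde c_0^{0,t}$ is bounded away from zero, hence so is $1-\tilde c_0^t$ for large $t$, and the renormalized constants $(1-c_0^t)=(1-\tilde c_0^t)^{-1}$ differ by $O(\alpha(t))$. Substituting into \eqref{eq:mOm} and using the decomposition
\[
\mu^t(w_+^t+w_-^t)-\mu_0^t(w_{0,+}^t+w_{0,-}^t)=(\mu^t-\mu_0^t)(w_+^t+w_-^t)+\mu_0^t\bigl((w_+^t-w_{0,+}^t)+(w_-^t-w_{0,-}^t)\bigr),
\]
each term is $O(\alpha(t))$ in $L^2_s$; since the kernel $\ul{\Omega}_p^{\uhmuz,\rho}$ is uniformly bounded on any compact set at positive distance from $\Sigma$, a final application of Cauchy--Schwarz yields $m^t(p)-m_0^t(p)=O(\alpha(t))$ there, and in particular in the asymptotic expansion at $p=\infty_+$ that encodes the physical data.

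The only genuinely delicate point is the self-referential appearance of $c_0^t$ in the normalization, which couples invertibility of $\id-C_{w^t}$ to the non-vanishing of $1-\tilde c_0^t$. I expect this to be the main technical obstacle, but it is handled by the observation that once the operator-norm bound of the first step is in place, the functional $\tilde\mu\mapsto\tilde c_0(\tilde\mu)$ is continuous and the assumption of uniform solvability of the reference problem prevents degeneracy. All remaining estimates are routine consequences of Cauchy--Schwarz and the uniform boundedness of the Cauchy projectors.
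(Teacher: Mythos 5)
Your proposal is correct and follows the standard second-resolvent/Neumann-series argument that the paper itself invokes (its proof is literally ``Follows as in \cite{krt}''), namely: bound $\|C_{w^t}-C_{w_0^t}\|$ by $O(\alpha(t))$ via the uniform $L^2_s$-boundedness of $C_\pm$, invert $\id-C_{w^t}$ by a geometric series, and propagate the estimate through the representation formula of Theorem~\ref{thm:cauchyop}, including the normalization constant $\tilde c_0$. The one caveat worth recording is that uniform invertibility of $\id-C_{w_0^t}$ alone does not force $1-\tilde c_0^{0,t}$ to stay away from zero; this is harmless here because in the actual application $w_0^t\equiv 0$ (so $\tilde c_0^{0,t}=0$), but your phrasing slightly overstates what the hypothesis delivers.
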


\begin{proof}
Follows as in \cite{krt}.
\end{proof}

\begin{remark}\label{rem:sieqpole}
The case where one (or more) of the poles $\hat\mu_j$ lies on $\Sigma$ can be included if one
assumes that $w_\pm$ has a first order zero at $\hat\mu_j$. In fact, in this case one can replace
$\mu(s)$ by $\ti{\mu}(s)=(\pi(s)-\mu_j)\mu(s)$ and $w_\pm(s)$ by $\ti{w}_\pm(s)=(\pi(s)-\mu_j)^{-1}w_\pm(s)$.
\end{remark}

\noindent
{\bf Acknowledgments.}
We thank A.\ Mikikits-Leitner for pointing out errors in a previous version of this article.
G.T. gratefully acknowledges the extraordinary hospitality of the Department of Mathematics at Rice University,
where part of this research was done.

\end{document}